\newcommand\mycolor[1]{\textcolor{black}{#1}} 
\newcolumntype{C}[1]{>{\centering\let\newline\\\arraybackslash\hspace{0pt}}m{#1}}
\journalname{}
\begin{document}
	\title{Multi-Constraint Shortest Path using Forest Hop Labeling}
	\author{Ziyi Liu    \and
		Lei Li			\and
		Mengxuan Zhang	\and
		Wen Hua			\and
		Xiaofang Zhou
		 %etc.
	}    
    %\authorrunning{Short form of author list} % if too long for running head
    
   \institute{Ziyi Liu$^*$ \email{ziyi.liu@uq.edu.au} 
   			\\Lei Li$^*$(\Letter) \email{l.li3@uq.edu.au} 
   			\\Mengxuan Zhang$^*$(\Letter)  \email{mengxuan.zhang@uq.edu.au} 
			\\Wen Hua$^*$ \email{w.hua@uq.edu.au} 
			\\Xiaofang Zhou$^{\dag}$ \email{zxf@cse.ust.hk}\\
		\at $^*$School of Information Technology and Electrical Engineering, University of Queensland, QLD 4072, Australia 
		\at $^{\dag}$Department of Computer Science and Engineering, The Hong Kong University of Science and Technology, Clear Water Bay, Kowloon, Hong Kong\\
%		 Corresponding Author: Xiaofang Zhou and Lei Li
	}
    \date{Received: date / Accepted: date}
% The correct dates will be entered by the editor
	\maketitle
	
	\begin{abstract}
    The \textit{Multi-Constraint Shortest Path (MCSP)} problem aims to find the shortest path between two nodes in a network subject to a given constraint set. It is typically processed as a \textit{skyline path} problem. However, the number of intermediate skyline paths becomes larger as the network size increases and the constraint number grows, which brings about the dramatical growth of computational cost and further makes the existing index-based methods hardly capable of obtaining the complete exact results.
    In this paper, we propose a novel high-dimensional skyline path concatenation method to avoid the expensive skyline path search,
    which then supports the efficient construction of hop labeling index for \textit{MCSP} queries.
    Specifically, a set of insightful observations and techniques are proposed to improve the efficiency of concatenating two skyline path set, a \textit{n-Cube} technique is designed to prune the concatenation space among multiple  hops, and a \textit{constraint pruning} method is used to avoid the unnecessary computation.
    Furthermore, to scale up to larger networks, we propose a novel \textit{forest hop labeling} which enables the parallel label construction from different network partitions.
    Our approach is the first method that can achieve both accuracy and efficiency for \textit{MCSP} query answering. Extensive experiments on real-life road networks demonstrate the superiority of our method over the state-of-the-art solutions.

	\end{abstract}
    %\numberofauthors{1}
    %\author{
    %\alignauthor
    %Ziyi Liu, Lei Li, Wen Hua, Xiaofang Zhou\\
    %\affaddr{School of Information Technology and Electrical Engineering, The University of Queensland, Australia\\
    %\email{\{ziyi.liu, l.li3, w.hua, uqxzhou\}@uq.edu.au}}
    %}

\section{Introduction}
\label{sec:Introduction}
Route planning is an important application in our daily life, and various path problems have been identified and studied in the past decades.
In the real-life route planning, there are always multiple criteria depending on users' preferences. For example, a user may have a limited budget to pay the toll charge of highways, bridges, tunnels, or congestion. Some mega-cities require drivers to reduce the number of big turns (e.g., left turns in the right driving case\footnote{\mycolor{https://www.foxnews.com/auto/new-york-city-to-google-reduce-the-number-of-left-turns-in-maps-navigation-directions}} 
as they have a higher chance to cause accidents). There could be other side-criteria such as the minimum height of tunnels, the maximum capacity of roads, the maximum gradient of slopes, the total elevation increase, the length of tourist drives, the number of transportation changes, etc.
However, most of the existing algorithms only optimize one objective rather than the other side-criteria, such as minimum distance \cite{dijkstra1959note,li2020fast}, travel time of driving \cite{li2019Time,li2020fastest} or public transportation \cite{wang2015efficient,wu2014path}, fuel consumption \cite{li2017minimal,li2018go}, battery usage \cite{adler2014online}, etc. In fact, it is these criteria and their combination that endow different routes with diversified meaning and satisfy users' flexible needs. In other words, this \textit{multi-criteria} route planning is in more generalized manner, while the traditional \textit{shortest path} is only a special case. Moreover, it is an important research problem in the fields of both transportation \cite{jozefowiez2008multi,shi2017multi} and communication (Quality-of Service constraint routing) \cite{korkmaz2001multi,de1998multiple,tsaggouris2009multiobjective,van2003complexity}.

%H In fact, we normally use a set of \textit{skyline} results \cite{papadias2003optimal} in a multi-objective problem like this. Therefore, one way to implement it is through \textit{skyline path} \cite{kriegel2010route,gong2019skyline,ouyang2018towards}, which extends the skyline concept to path finding.

However, we could hardly find one result that is optimal in every criterion when there are more than one optimization goals. One way to implement the multi-criteria route planning is through \textit{skyline path} \cite{kriegel2010route,gong2019skyline,ouyang2018towards}, which provides a set of paths that cannot dominate each other in all criteria. However, skyline paths computation is very time-consuming with time complexity $O(c^{n-1}_{max}\times|V|\times(|V|\log |V|+|E|))$, where $|V|$ is the vertex number, $|E|$ is the edge number, $c_{max}$ is the largest criteria value, and $n$ is the number of criteria \cite{hansen1980bicriterion} ($c^{n-1}_{max}$ is the worst case skyline number). In addition, it is impractical to provide users with a large set of potential paths to let them choose from. Therefore, the \textit{Multi-Constraint Shortest Path (MCSP)}, as another way to consider the multiple criteria, is widely applied and studied. Specifically, it finds the best path based on one objective while requires other criteria satisfying some predefined constraints. For example, suppose we have three objectives such as distance $d$ and costs $c_1, c_2$, and we set the maximum constraints $C_1$ and $C_2$ on the corresponding costs. Then this \textit{MCSP} problem finds the shortest path $p$ whose cost $c_1(p)$ is no larger than $C_1$ and $c_2(p)$ is no larger than $C_2$. %Although the number of possible criteria is unlimited, we focus on the single-criterion \textit{CSP} problem in this paper for the following reasons: (i) It is the basis of the multi-criteria \textit{CSP}; % and our algorithm can be easily extended to address the multi-constraint version; 

To the best of our knowledge, the existing \textit{MCSP} algorithms are mostly based on linear programming or graph search, and their high complexity prohibits them from scaling to the large networks \cite{jaffe1984algorithms,korkmaz2001multi,shi2017multi}. As a special and simpler case with only one constraint, the \textit{CSP} problem is the basis for solving the \textit{MCSP}, which can be classified into the following categories based on \textit{exact/approximate} and \textit{index-free/index-based} solutions. The \textit{exact CSP} result provides the ground-truth and satisfies the user's actual requirement \cite{hansen1980bicriterion,handler1980dual,LOZANO2013378,SEDENONODA2015602}, but it is extremely slow to compute as a NP-H problem \cite{garey1979computers,hassin1992approximation}. Therefore, the \textit{approximate CSP} algorithms \cite{hansen1980bicriterion,hassin1992approximation,juttner2001lagrange,lorenz2001simple,tsaggouris2009multiobjective} are proposed to reduce the computation time by sacrificing the optimality of the path length. However, their speedup compared with the exact solutions is very limited \cite{kuipers2006comparison,LOZANO2013378}. To further improve the query efficiency, a few \textit{indexes} \cite{storandt2012route,wang2016effective} have been proposed by extending the existing index structures of the shortest path problem. For example, \textit{CSP-CH} \cite{storandt2012route} extends the traditional \textit{CH} index \cite{geisberger2008contraction} to support exact \textit{CSP} query processing. 
Although the index construction time is usually long, which makes them inflexible when the weight or cost changes \cite{zhang2020stream,zhang2021DWPSL,zhang2021DH2H}, the index-based methods enjoy the high efficiency of query answering. Since \textit{2-hop labeling} \cite{cohen2003reachability,akiba2013fast,ouyang2018hierarchy} is the current state-of-the-art shortest path index, in this work, we aim to provide a 2-hop labeling-based index structure for exact \textit{MCSP} queries with both faster index construction and more efficient query processing.

However, it is non-trivial to extend the 2-hop labeling to answer the \textit{MCSP} queries. Since it is already NP-H to find the optimal labels with size of $\varOmega(|V||E|^{1/2})$ for the single-criterion shortest path as proven in \cite{cohen2003reachability}, it would be much harder for the \textit{MCSP} problem due to the following reasons:

\begin{figure}[ht]
    \centering
    \includegraphics[width=3.2in]{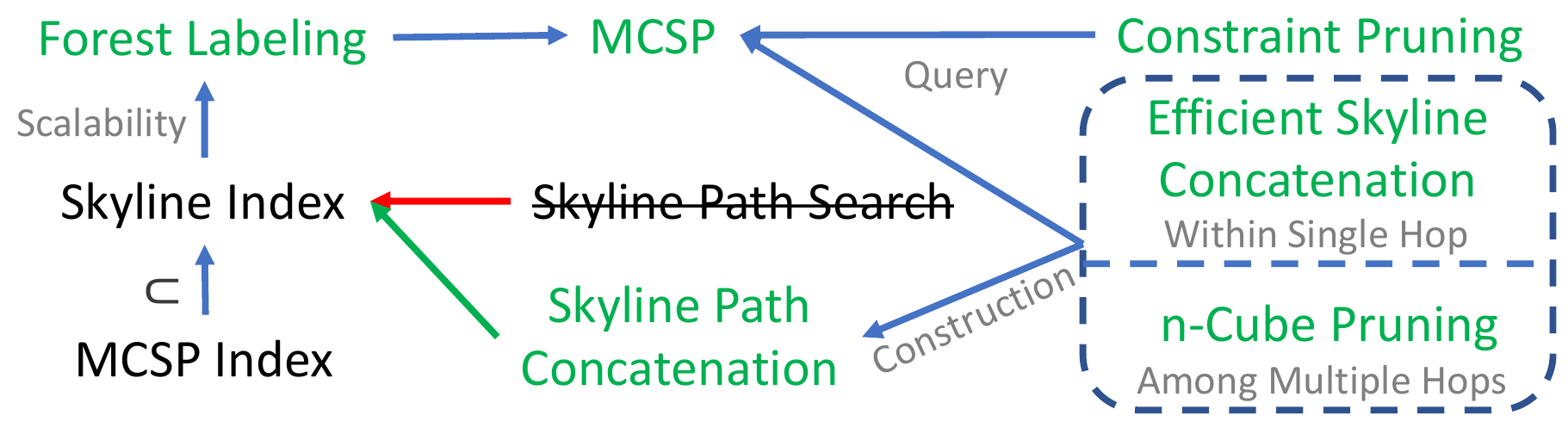}
    \caption{Framework Overview}
    \label{fig:Overview}
\end{figure}

Firstly, since the constraints $C$ can be of any possible combination while are unknown beforehand, the skyline path information is inevitably needed in the index to be able to deal with any scenario. However, skyline path search is very time-consuming as discussed before, and hence the existing \textit{MCSP} solutions all suffer from low efficiency. Consequently, none of the existing indexes are constructed by the exact skyline path but resort to the approximate path search. The \textit{CSP-CH} \cite{storandt2012route} chooses to contract a limited number of vertices and adopts heuristic testing before each skyline path expansion, which leads to a subset of the skyline shortcuts with smaller index size but slower query time. %Furthermore, its index construction time increases dramatically (e.g., hours) compared with the conventional \textit{CH} index (e.g., minutes). 
\textit{COLA} \cite{wang2016effective} resorts to the graph partitioning and approximates skyline path search to reduce intermediate result size, but it sacrifices the query accuracy. %Moreover, its performance is still limited by the skyline path search when the approximate ratio is small and it has to build different indexes for different approximation ratios.
Therefore, skyline path search is the bottleneck of existing methods and hinders the development of a full exact index, without which a fast exact \textit{MCSP} query answering is impossible. In this work, we propose a \textit{skyline path concatenation} algorithm to completely avoid the expensive skyline path search and construct a 2-hop-labeling index that stores the complete skyline paths without compromising query accuracy or efficiency. %It is worth noting that the concatenation technique can also benefit other \textit{CSP} solutions, as discussed in Section \ref{sec:Extensions}.

Secondly, direct concatenation of two skyline paths does not necessarily generate a skyline path, and another $O(mn\log mn)$ time is needed to validate the skyline results, where $m$ and $n$ are the size of the two skyline path sets to be concatenated. To speed up this process, we provide several insights of the multi-dimensional skyline path concatenation, and propose an efficient concatenation method to incorporate the validation into the concatenation and meanwhile prune the concatenation space. Although the worst case complexity remains the same when all the concatenation results are all skylines, our algorithm is much more efficient in practice.

Thirdly, the skyline path computation can hardly scale to larger networks. As the path length increases in large networks, the number of skyline paths becomes larger. Consequently, both the index size and index construction time raise dramatically. To address the scalability issue, we propose the \textit{forest hop labeling} framework which partitions the networks into multiple regions and constructs the index parallelly both within and across the regions.  

%Secondly, the number of skyline paths becomes larger as the path grows, so it can hardly scale to larger networks. Therefore, we propose the \textit{forest hop labeling} framework that first partitions the graph into regions and then builds indexes within and across the regions in parallel. 

Last but not least, the final index is essentially for skyline path query since it contains all the skyline path information. Hence, it would take approximately $m\times n\times h$ times of skyline concatenations before answering the query, where $h$ is the number of hops. This is much larger than the traditional 2-hop-labeling index which only needs $h$ times of calculation. In other words, the query performance of the index deteriorates dramatically and loses the promise of its high efficiency. Therefore, we propose a $n$-\textit{Cube pruning} technique to prune the useless concatenation as early as possible. Furthermore, we propose a \textit{constraint-based pruning} technique for faster \textit{MCSP} query answering, and also provide pruning technique for the forest label construction. As shown in the experiments, our approach can answer the \textit{MCSP} query within 1 ms and the proposed pruning techniques can further reduce it by two orders of magnitude, while the existing methods are thousands of times slower. Our contributions are illustrated in Figure \ref{fig:Overview} and summarized below:

%efficient query answering method that takes the query constraint $C$ into consideration to prune label concatenation space and brings the query answering time back to $O(h)$. 
\begin{itemize}
    \item We first propose a skyline path concatenation-based \textit{MCSP-2Hop} method that avoids the expensive skyline path search to achieve efficient \textit{MCSP} query answering and index construction. Several concatenation insights are further presented to reduce the concatenation space and incorporate multi-dimensional skyline validation into concatenation.
    %The proposed concatenation technique can also be easily extended to benefit other existing \textit{CSP} solutions.
    \item We further propose a \textit{forest hop labeling} for scalability. We also design a $n$-Cube pruning technique to further speed up the multi-hop concatenation during index construction and query answering, and a constraint pruning method to speed up \textit{MCSP} specifically.
    %\item We propose a tree-decomposition based index construction method that uses our concatenation methods to build a \textit{CSP-2Hop} index efficiently.
    \item We thoroughly evaluate our method with extensive experiments on real-life road networks and the results show that it outperforms the state-of-the-art methods by several orders of magnitude. 
\end{itemize}

This paper extends the work \cite{Liu2021Forest} where we introduced the \textit{Forest Hop Labeling} for \textit{CSP} with only one constraint. Because path concatenation algorithm and pruning technique were only designed for one constraint, its multi-constraint version in \cite{Liu2021Forest} was only a straightforward and slow solution. Therefore, in this paper, we generalize the path concatenation methods into multi-cost scenario and extend the pruning techniques to multi-constraint scenario to achieve higher performance. Although they are the generalizations of the \textit{CSP}, the problem in the multi-dimensional space is much harder than the 2-dimensional one such that \textit{CSP} is only a very special case of \textit{MCSP} with unique properties. Therefore, we provide new insightful properties of the high dimensional skyline path concatenation and pruning to facilitate the more efficient index construction and query answering in \textit{MCSP}. 

In the reminder of this paper, we first define our problem formally and introduce the essential concepts in Section \ref{sec:Preliminary}. Section \ref{sec:MCSP2Hop} and \ref{sec:Forest} present our \textit{MCSP-2Hop} and \textit{forest hop labeling} with their index structures and query answering methods, and Section \ref{sec:Skyline} presents the speedup techniques. We report our experimental results in Section \ref{sec:Experiment}. Finally, Section \ref{sec:RelatedWork} discusses related works and Section \ref{sec:Conclusion} concludes the paper. %Section \ref{sec:Extensions} further discusses the extendability of our approaches to other \textit{CSP} scenarios and solutions. Finally, Section \ref{sec:Conclusion} concludes the paper.

\section{Preliminary}
\label{sec:Preliminary}

\subsection{Problem Definition}
\label{subsec:Preliminary_PD}
%As discussed above, we focus our discussions on directed weighted large-scale road network. 
A road network is a $n$-dimensional graph $G(V, E)$, where $V$ is a set of vertices and $E\subseteq V\times V$ is a set of edges. Each edge $e \in E$ has $n$ criteria falling into two categories: weight $w(e)$ and a set of costs $\{c_i(e)\}$, $i\in[1,n-1]$. A path $p$ from the source $s \in V$ to the target $t \in V$ is a sequence of consecutive vertices $p=\left<s=v_0,v_1,\dots,v_k=t\right>$ = $\mycolor{\left<e_0,e_1,\dots,e_{k-1}\right>}$, where $e_i$= $(v_i,v_{i+1})\in E, \forall i\in[0,k-1]$. Each path $p$ has a weight $w(p)$=$\sum_{e \in p}w(e)$ and a set of costs $\{c_i(p)$= $\sum_{e \in p}c_i(e)\}$. We assume the graph is undirected while it is easy to extend our techniques to the directed one. Then we define the \textit{MCSP} query below:

\begin{definition}[MCSP Query]
\label{def:MCSP}
%Given a $n$-dimensional graph $\overline{G}(V,\overline{E})$, where $n>2$. Each edge $e \in \overline{E}$ contains a weight $w$ and a series of cost $c_i$, where $i \in [1,n-1]$.
Given a $n$-dimensional graph $G(V, E)$, a $MCSP$ query $q(s,t,\overline{C})$ returns a path with the minimum $w(p)$ while each $c_i(p) \leq C_i\in \overline{C}$. 
\end{definition}

As analyzed in Section \ref{sec:Introduction}, it is quite slow to answer a \textit{MCSP} query by online graph search. Therefore, in this paper, we resort to index-based method and study the following problem: 

\begin{definition}[Index-based MCSP Query Processing]
\label{def:PD}
%Given a static undirected weighted graph $G(V, E)$, construct a hop-based index $L$ that given any \textit{CSP} query $q(s,t,c)$, it could answer $q$ efficiently only with $L$.
Given a graph $G$, we aim to construct a hop-based index $L$ that can efficiently answer any \textit{MCSP} query $q(s,t,\overline{C})$ only with $L$, \mycolor{which is stored in a lookup table and can avoid the online searching in road networks.}
\end{definition} 

Unlike the index for the shortest path queries which only needs to store the shortest distance from one vertex to another, the \textit{MCSP} index needs to cover all the possible constraints of each criteria $C_i$. In addition, we observe that the \textit{MCSP} paths are essentially a subset of the \textit{multi-dimensional skyline} paths, which is introduced as follows.

%\begin{definition}[CSP Query]
%\label{def:CSP}
%Given a graph $G$, a source $s$, a destination $t$, and a maximum cost bound $C\in \mathbb{R}^+$, a \textit{CSP} query $q(s,t,C)$ returns a path with the minimum $w(p)$ while guaranteeing that $c(p) \leq C$. 
%\end{definition}

% Therefore, we first introduce the skyline path as follows.

%There are three essential concepts in solving CSP, which are dominance relationship, skyline, and 2 hop labeling. 

\subsection{Multi-Dimensional Skyline Path}
\label{subsec:Preliminary_Skyline}
Like the \textit{dominance} relation of the skyline query in the 2-dimensional space \cite{papadias2003optimal}, we first define the \textit{dominance} relation between any two paths with the same source and destination:

\begin{definition}[Multi-Dimensional Path Dominan -ce]
	\label{def:Multi-Dominance}
	Given two paths $p_1$ and $p_2$ with the same $s$ and $t$, $p_1$ dominates $p_2$ iff $w(p_1) \leq w(p_2)$ and $c_i(p_1) \leq c_i(p_2),\forall i\in[1,n-1]$, and at least one of them is the strictly smaller relation.
\end{definition}

%\begin{definition}[Path Dominance]
%\label{def:dominance}
%Given two paths $p_1$ and $p_2$ with the same $s$ and $t$, $p_1$ dominates $p_2$ iff $w(p_1) < w(p_2)$ and $c(p_1) \leq c(p_2)$, or $w(p_1) \leq w(p_2)$ and $c(p_1) < c(p_2)$. 
%\end{definition}
\begin{figure}[ht]
    \centering
    \includegraphics[scale=0.3]{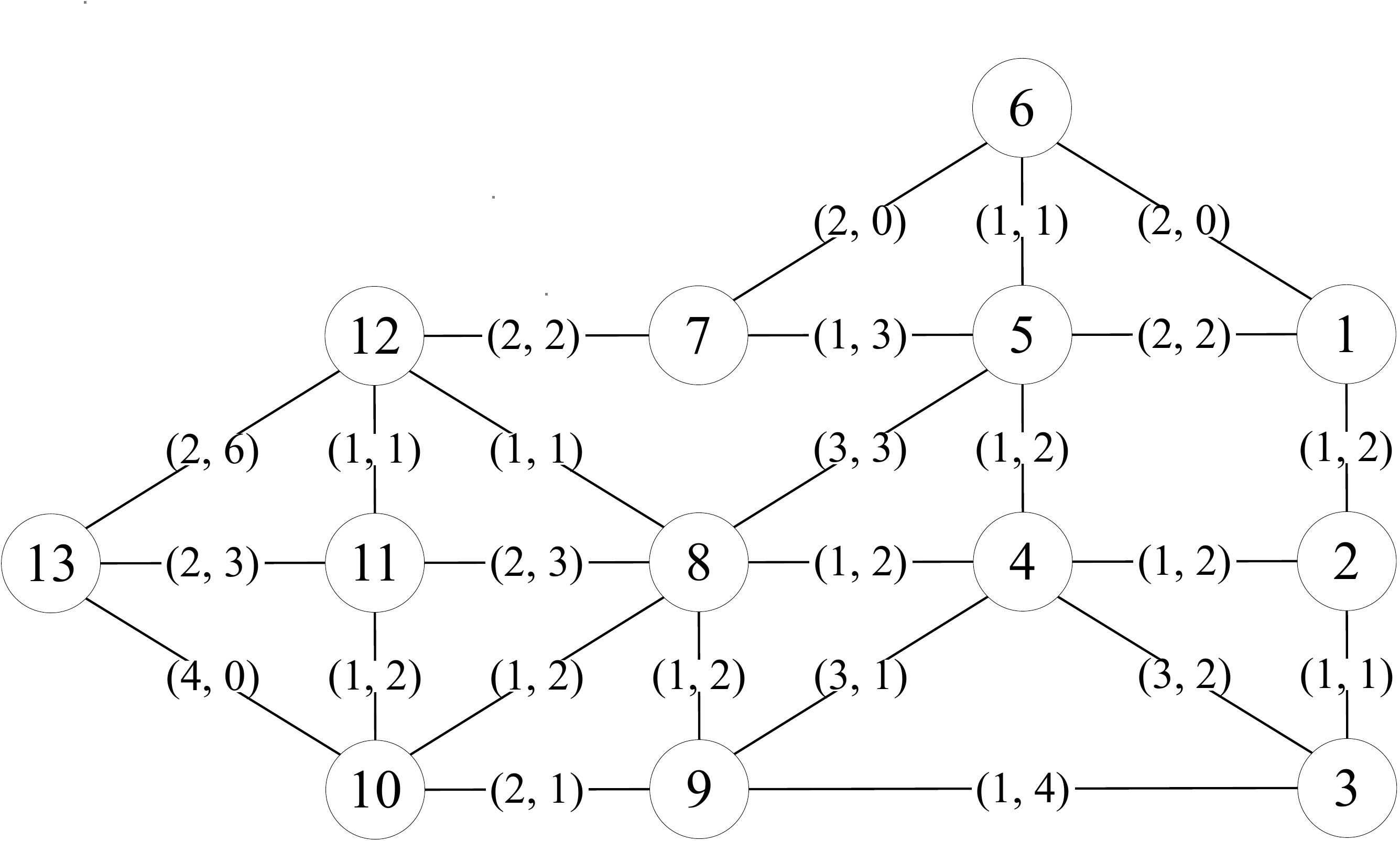}
    \caption{An Example of Skyline Path in 2D Graph. Edge Labels are $(w(e), c_1(e))$.}
    \label{fig:ExampleGraph}
\end{figure}

%For example in Figure \ref{fig:ExampleGraph}, $p_1$=$\left<v_5, v_4, v_9\right>$ has $w(p_1)=4$ and $c(p_1)=3$, while $p_2$=$\left<v_5, v_8, v_9\right>$ has $w(p_2)=4$ and $c(p_2)=5$. Since $w(p_1)=w(p_2)$ while $c(p_1)<c(p_2)$, $p_1$ dominates $p_2$.

\begin{definition}[Multi-Dimensional Skyline Path]
	\label{def:Multi-SkylinePath}
	Given two paths $p_1$ and $p_2$ with the same $s$ and $t$, they are skyline paths if they cannot dominate each other. 
\end{definition}

%\begin{definition}[Skyline Path]
%Given any two paths $p_1$ and $p_2$ with the same $s$ and $t$, $p_1$ and $p_2$ are skyline paths if they cannot dominate each other. %namely they satisfy either of the following two conditions: $(i)$ $w(p_1) \leq w(p_2)$ and $c(p_1) \geq c(p_2)$; or $(ii)$ $w(p_1) \geq w(p_2)$ and $c(p_1) \leq c(p_2)$. 
%\end{definition}

In this paper, \mycolor{we use $P(s,t)$ to denote the set of skyline paths from $s$ to $t$. We denote a path $p$ as a skyline path $\overline{p}$ if it is in a skyline path set}. Consider the 2-dimensional example in Figure \ref{fig:ExampleGraph}, we enumerate four paths from $v_5$ to $v_9$:
\begin{enumerate}
    \item $p_1 = \left<v_5, v_4, v_9\right>$, $w(p_1)=4$, $c_1(p_1)=3$
    \item $p_2 = \left<v_5, v_8, v_9\right>$, $w(p_2)=4$, $c_1(p_2)=5$
    \item $p_3 = \left<v_5, v_4, v_8, v_9\right>$, $w(p_3)=3$, $c_1(p_3)=6$
    \item $p_4 = \left<v_5, v_8, v_4, v_9\right>$, $w(p_3)=7$, $c_1(p_3)=6$
\end{enumerate}
According to the definition, we can identify that $p_2$ is dominated by $p_1$, and $p_4$ is dominated by the other three paths. Hence, $p_2$ and $p_4$ are not skyline paths. Meanwhile, as $p_1$ and $p_3$ cannot dominate each other, we obtain two skyline paths from $v_5$ to $v_9$: $P(v_5,v_9)=\mycolor{\{\overline{p}_1,\overline{p}_3\}}$. Next we prove that skyline paths are essential for building any \textit{MCSP} index:

\begin{theorem}
\label{the:Skyline}
The skyline paths between any two vertices are complete and minimal for all the possible \textit{MCSP} queries.
\end{theorem}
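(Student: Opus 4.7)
The plan is to split the proof into the two natural claims hiding in the statement: \emph{completeness} (every feasible \textit{MCSP} query with endpoints $s,t$ admits an answer inside $P(s,t)$) and \emph{minimality} (no element of $P(s,t)$ can be discarded without hurting some query). Both claims fall out of Definitions~\ref{def:Multi-Dominance} and~\ref{def:Multi-SkylinePath} via short contradiction arguments, so the work is conceptual rather than computational.

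For completeness, I would start from an arbitrary optimal answer $p^{\ast}$ to a query $q(s,t,\overline{C})$. If $p^{\ast}\in P(s,t)$ we are done; otherwise by Definition~\ref{def:Multi-SkylinePath} some $s$-$t$ path $p'$ dominates $p^{\ast}$, giving $w(p')\le w(p^{\ast})$ and $c_i(p')\le c_i(p^{\ast})\le C_i$ for every $i$ with at least one strict inequality. Optimality of $p^{\ast}$ forces $w(p')=w(p^{\ast})$, and the cost bounds ensure $p'$ is still feasible, so $p'$ is itself optimal. The vector $(c_1(p'),\ldots,c_{n-1}(p'))$ is componentwise no larger than $(c_1(p^{\ast}),\ldots,c_{n-1}(p^{\ast}))$ and strictly smaller in at least one coordinate, so iterating the replacement walks down a strictly decreasing Pareto chain inside the finite set of $s$-$t$ paths of weight $w(p^{\ast})$ and must terminate at a member of $P(s,t)$.

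For minimality, I would associate every $\overline{p}\in P(s,t)$ with the tailored query $q(s,t,\overline{C}_{\overline{p}})$ defined by $C_i:=c_i(\overline{p})$ for each $i\in[1,n-1]$. The path $\overline{p}$ is trivially feasible. Any competing feasible $p'$ with $w(p')\le w(\overline{p})$ must also satisfy $c_i(p')\le c_i(\overline{p})$ for all $i$; a strict inequality anywhere in the combined vector $(w,c_1,\dots,c_{n-1})$ would make $p'$ dominate $\overline{p}$ and contradict $\overline{p}\in P(s,t)$. Hence $\overline{p}$ is optimal for this tailored query and every other optimum shares its exact cost vector, so removing $\overline{p}$ (together with its cost-vector ties) leaves the query without any representative in the stored set.

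The obstacle I expect is pinning down what ``minimal'' should mean when several skyline paths share identical $(w,c_1,\dots,c_{n-1})$ coordinates: per Definition~\ref{def:Multi-SkylinePath} they are all skyline, yet they are mutually indistinguishable as \textit{MCSP} answers. I would therefore state at the top of the proof that minimality is taken with respect to distinct Pareto cost vectors, so that the second half reads cleanly as ``exactly one representative per non-dominated cost vector is necessary, and no such vector is removable.'' With that convention the argument above gives completeness and minimality simultaneously.
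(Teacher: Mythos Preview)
Your proof is correct and takes a different route from the paper's. The paper establishes completeness by a grid partition of the constraint space: it sorts the $k$ skyline paths' cost values independently along each dimension into $k+1$ intervals, forming $(k+1)^{n-1}$ cells, then argues that every cell not touching a lowest interval is served by some skyline path. For minimality it claims that removing $\overline{p}_j$ leaves the cell $\prod_i[c_i^{j},c_i^{j+1})$ without a valid answer. You instead work directly from the dominance definitions: a descending Pareto chain among equal-weight optima for completeness, and the tailored witness query $C_i=c_i(\overline{p})$ for minimality. Your argument is more elementary and in fact more careful---the paper's minimality step tacitly identifies the index $j$ of the removed path with the $j$-th sorted rank in \emph{every} cost dimension at once, which need not hold, whereas your witness query sidesteps this and also handles cost-vector ties explicitly. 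The paper's cell picture, on the other hand, has the virtue of foreshadowing how the label index will cover the whole constraint space, which offers useful structural intuition for the sections that follow.
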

\begin{proof}
We first prove the completeness. Suppose the skyline paths from $s$ to $t$ are $\{\overline{p}_1$, $\dots$, $\overline{p}_k\}$, then each cost dimension  $c_i$ can be divided into $k+1$ intervals: $[0,c_i^1),[c_i^1,$ $c_i^2),\dots, [c_i^k,\infty)$, where each $c_i^j$ is the $j^{th}$ cost value of criterion $c_i$ sorted increasingly, and the entire space can be decomposed into $(k+1)^{(n-1)}$ subspaces. Then for the subspaces with any interval falling into $[0,c_i^1)$, there is no valid path satisfying all the constraints at the same time. For the subspaces with no interval falling into $[0,c_i^1)$, there is always at least one path dominating all the constraints in it, and the one with the smallest distance is the result.

Next we prove this skyline path set is minimal. Suppose we remove any skyline path $\overline{p}_j=\{c_i^j\}$ from the skyline result set. Then the \textit{MCSP} query whose constraint is within the subspace of $[c_1^{j}, c_1^{j+1})\times [c_2^{j}, c_2^{j+1})\times \dots \times [c_k^{j}, c_k^{j+1})$ will have no valid path. Hence, $\overline{p}_j$ cannot be removed and the skyline path set is minimal for all the \textit{MCSP} queries between $s$ and $t$.
\end{proof}

In summary, building an index for \textit{MCSP} query is equivalent to building the index for multi-dimensional skyline path query.
%Hence, building an index for skyline path query is equivalent to the \textit{CSP} query.

\subsection{2-Hop Labeling}
\label{subsec:Preliminary_2Hop}
\textit{2-hop labeling} is an index method that does not involve any graph search during query answering. It answers the shortest distance query by table lookup and summation. Specifically, $\forall u \in V$, we assign a label set $L(u)={\{v,w(v,u)\}}$ to store the minimum weights between a set of \textit{hop vertices} $v$ to $u$. To answer a query $q(s,t)$, we first determine the intermediate hop set $H=L(s) \cap L(t)$. Then we can find the minimum weight via the hop set: $min_{h \in H}{\{w(s,h)+w(h,t)\}}$. %In this work, we extend the 2-hop labeling to support \textit{CSP} path because it can avoid any search during query processing.
In this work, we select \textit{H2H} \cite{ouyang2018hierarchy}, the state-of-the-art 2-hop labeling, to support \textit{MCSP} query since it can avoid graph traversal in both index construction and query processing. Specifically, \textit{Tree Decomposition} \cite{robertson1986graph} is the cornerstone of \textit{H2H}, which maps a graph into a tree structure. We introduce it briefly as follows.

\begin{definition}[Tree Decomposition]
	\label{def:TreeDecomposition}
	Given a graph $G$, its tree decomposition $T$ is a rooted tree in which each node $X\in V$ is a subset of $V$. $T$ has the following tree properties:
	\begin{enumerate}
		\item $\bigcup_{X_i\in V}X_i=V$;
		\item $\forall (u, v)\in E, \exists X_i$ such that $\{u, v\}\subseteq X_i$;
		\item $\forall v\in V$, the set $\{X|v\in X\}$ forms a subtree of $T$.
	\end{enumerate}
\end{definition}

\begin{figure}[htbp]
	\centering
	\includegraphics[width=2.6in]{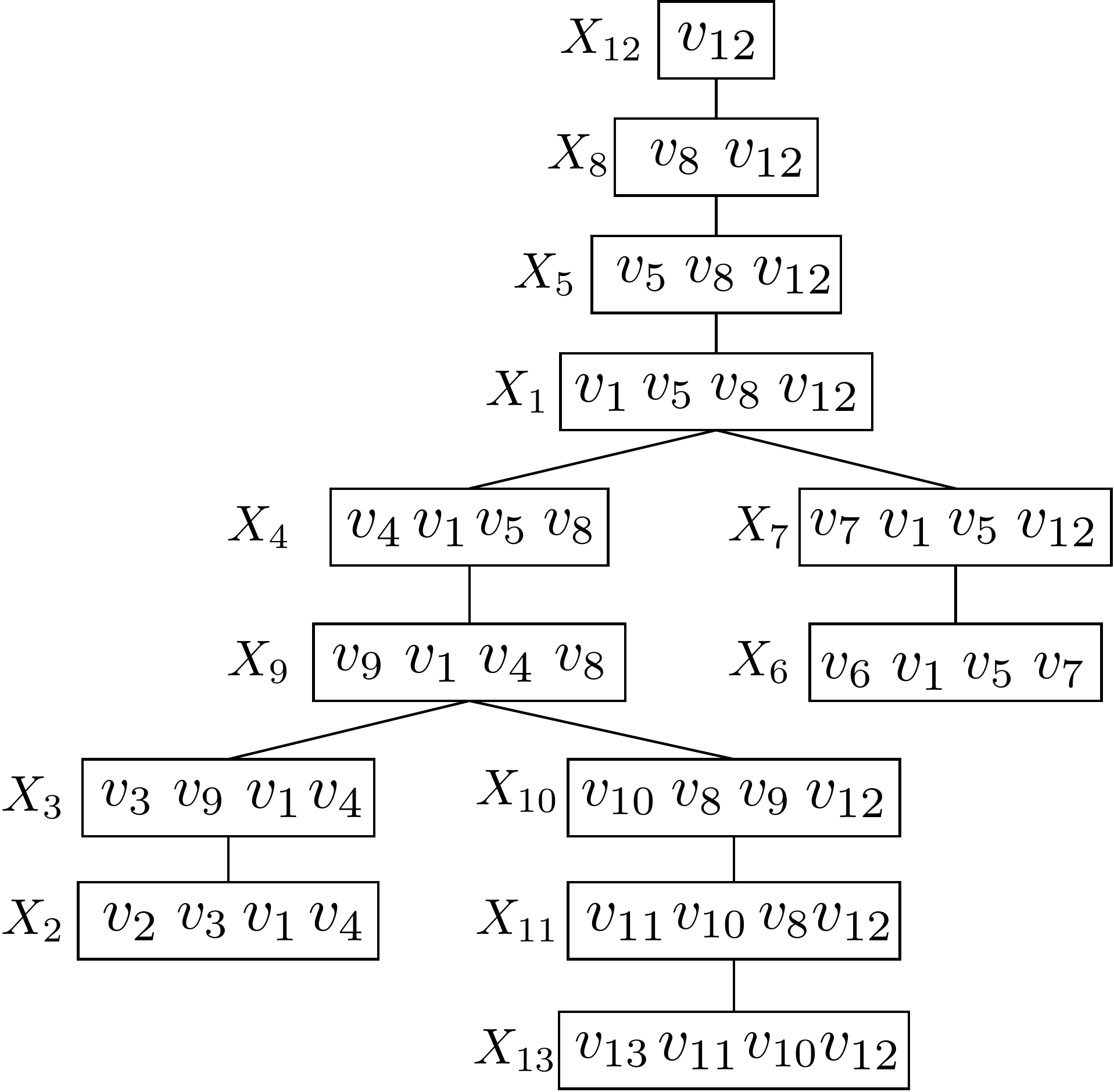}
	\caption{Tree Decomposition Example}
	\label{fig:Tree Decomposition}
\end{figure}

For instance, Figure \ref{fig:Tree Decomposition} shows the tree decomposition of the example graph in Figure \ref{fig:ExampleGraph}. The first vertex in each tree node is its representative vertex. %Depending on if the tree nodes have an ancestor/descendent relation not, 
This structure has a \textit{Cut Property} that can be used to assign labels: $\forall s,t\in V$ and $X_s,X_t$ are their corresponding tree nodes without an ancestor/descendent relation, their cuts are in their \textit{lowest common ancestor (LCA)} node \cite{chang2012exact}, \mycolor{which is the lowest tree node that has both $X_s$ and $X_t$}. For example, the tree nodes of $v_2$ and $v_7$ are $X_2$ and $X_7$, and their \textit{LCA} is $X_{1}=\{v_{1},v_5,v_8,v_{12}\}$. Then these four vertices form a cut between $v_2$ and $v_7$. If $s$ is $t$'s descendant, then $t$ can be viewed as a cut between itself and $s$. Therefore, the \textit{LCA} of a vertex pair contains all the cuts between them. 
%2) \textit{Ancestor Property}: $\forall X_v\in T_G, X_u(u\in X_v\{v\})$ is an ancestor of $X_v$ in $T_G$. This can also be easily checked in Figure \ref{fig:Tree Decomposition}. The ancestor property 
Accordingly, we assign labels to a vertex with all its ancestors in $T_G$. It should be noted that the vertices in each tree node are also in its label because of the \textit{Ancestor Property}: $\forall X_v\in T_G, X_u(u\in X_v\backslash\{v\})$ is an ancestor of $X_v$ in $T_G$. For example, $v_6$ has labels of $\{v_7,v_1,v_5,v_8,v_{12}\}\supseteq \{v_1,v_5,v_7\}$.

\section{MCSP-2Hop Labeling}
\label{sec:MCSP2Hop}

In this section, we first present the \textit{MCSP-2Hop} index structure and query processing, followed by the index construction, correctness proof and path retrieval.

\begin{table*}
\centering
\caption{MCSP-2Hop Labels of 2D Graph}
\label{table:MCSP-2Hop}
\scriptsize
\begin{tabular}{|c|c|c|c|c|c|c|c|c|c|c|}
\hline
\textbf{} & \textbf{$v_3$} & \textbf{$v_7$} & \textbf{$v_{11}$} & \textbf{$v_{10}$} & \textbf{$v_9$}                                                 & \textbf{$v_4$}                                                        & \textbf{$v_1$}                                                                    & \textbf{$v_5$}                                                                   & \textbf{$v_8$}                                                        & \textbf{$v_{12}$}                                              \\ \hline
\textbf{$v_2$}     & (1, 1)         &                &                   &                   & (2, 5),(4, 3)                                                  & (1, 2)                                                                & (1, 2)                                                                            & (2, 4),(4, 3)                                                                    & (2, 4)                                                                & (3, 5),(7, 4)                                                  \\ \hline
\textbf{$v_3$}     &                &                &                   &                   & (1, 4),(6, 3)                                                  & (2, 3),(3, 2)                                                         & (2, 3)                                                                            & (3, 5),(4, 4)                                                                    & \begin{tabular}[c]{@{}l@{}}(2, 6),(3, 5)\\ (4, 4)\end{tabular}        & \begin{tabular}[c]{@{}l@{}}(3, 7),(4, 6)\\ (5, 5)\end{tabular} \\ \hline
\textbf{$v_6$}     &                & (2, 0)         &                   &                   &                                                                &                                                                       & (2, 0)                                                                            & (1, 1)                                                                           & \begin{tabular}[c]{@{}l@{}}(3, 5),(4, 4)\\ (5, 3)\end{tabular}        & (4, 2)                                                         \\ \hline
\textbf{$v_7$}     &                &                &                   &                   &                                                                &                                                                       & (3, 5),(4, 0)                                                                     & (1, 3),(3, 1)                                                                           & (3, 3)                                                                & (2, 2)                                                         \\ \hline
\textbf{$v_{13}$}  &                &                & \begin{tabular}[c]{@{}c@{}}(2, 3)\\(5, 2)\end{tabular}     & \begin{tabular}[c]{@{}c@{}}(3, 5)\\(4, 0)\end{tabular}     & \begin{tabular}[c]{@{}c@{}}(4, 9),(5, 6)\\ (6, 1)\end{tabular} & \begin{tabular}[c]{@{}c@{}}(4, 9),(5, 7)\\ (6, 4),(9, 2)\end{tabular} & \begin{tabular}[c]{@{}l@{}}(6, 13),(7, 11)\\ (8, 8),(9, 6)\\ (12, 5)\end{tabular} & \begin{tabular}[c]{@{}l@{}}(5, 11),(6, 9)\\ (7, 6),(8, 5)\\ (10, 4)\end{tabular} & \begin{tabular}[c]{@{}l@{}}(3, 7),(4, 5)\\ (5, 2)\end{tabular}        & \begin{tabular}[c]{@{}l@{}}(2, 6),(3, 4)\\ (6, 3)\end{tabular} \\ \hline
\textbf{$v_{11}$}  &                &                &                   & (1, 2)            & (3, 3)                                                         & (3, 4)                                                                & \begin{tabular}[c]{@{}l@{}}(5, 8)\\ (7, 3)\end{tabular}                           & \begin{tabular}[c]{@{}l@{}}(4, 6),(5, 5)\\ (6, 4)\end{tabular}                   & (2, 2)                                                                & (1, 1)                                                         \\ \hline
\textbf{$v_{10}$}  &                &                &                   &                   & (2, 1)                                                         & (2, 4),(5, 2)                                                         & \begin{tabular}[c]{@{}l@{}}(4, 8),(6, 7)\\ (7, 6),(8, 5)\end{tabular}             & \begin{tabular}[c]{@{}l@{}}(3, 6),(4, 5)\\ (6, 4)\end{tabular}                   & (1, 2)                                                                & (2, 3)                                                         \\ \hline
\textbf{$v_9$}     &                &                &                   &                   &                                                                & (2, 4),(3, 1)                                                         & (3, 7),(5, 5)                                                                     & (3, 6),(4, 5)                                                                    & (1, 2)                                                                & (2, 3)                                                         \\ \hline
\textbf{$v_4$}     &                &                &                   &                   &                                                                &                                                                       & (2, 4),(4, 3)                                                                     & (1, 2)                                                                           & (1, 2)                                                                & (2, 3)                                                         \\ \hline
\textbf{$v_1$}     &                &                &                   &                   &                                                                &                                                                       &                                                                                   & (2, 2),(3, 1)                                                                    & \begin{tabular}[c]{@{}l@{}}(3, 6),(5, 5)\\ (6, 4),(7, 3)\end{tabular} & (4, 7),(6, 2)                                                  \\ \hline
\textbf{$v_5$}     &                &                &                   &                   &                                                                &                                                                       &                                                                                   &                                                                                  & (2, 4),(3, 3)                                                         & \begin{tabular}[c]{@{}l@{}}(3, 5),(4, 4)\\ (5, 3)\end{tabular} \\ \hline
\textbf{$v_8$}     &                &                &                   &                   &                                                                &                                                                       &                                                                                   &                                                                                  &                                                                       & (1, 1)                                                         \\ \hline
\textbf{$v_{12}$}  &                &                &                   &                   &                                                                &                                                                       &                                                                                   &                                                                                  & (1, 1)                                                                &                                                                \\ \hline
\end{tabular}
\end{table*}

\subsection{Index Structure}
\label{subsec:MCSP2Hop_Index}
Unlike the traditional 2-hop labeling whose label is only a vertex and its corresponding distance value, \textit{MCSP-2Hop}'s label is a set of skyline paths. Specifically, for each vertex $u \in V$, it has a label set $L(u)=\{(v,$ $P(u,v))\}$, where $v$ is the hop vertex ($\{v\}$ is $u$'s hop set) and $P(u,v)$ is the skyline path set. We use $L=\{(L(u)|\forall u \in V\}$ to denote the set of all the labels. If $L$ can answer all the \textit{MCSP} queries in $G$, then we say $L$ is a \textit{MCSP-2Hop} cover. The detailed labels of the example graph in Figure \ref{fig:ExampleGraph} are shown in Table \ref{table:MCSP-2Hop}, with the rows being the vertices and columns being the hops. With the help of the tree decomposition in Figure \ref{fig:Tree Decomposition}, we can see each vertex stores all its ancestors on tree as labels. For example, $X_{12}, X_8$ and $X_5$ are the ancestors of $X_1$, then they appear in $v_1$'s labels. Because $X_2, X_{13}$, and $X_6$ are the leaves, they will not appear in any vertex's labels. Some cells have several values because they are skyline paths. The details of the index construction will be presented in Section \ref{subsec:MCSP2Hop_Construction}.

\subsection{Query Processing}
\label{subsec:MCSP2Hop_Query}
Given a \textit{MCSP} query $q(s,t,\overline{C})$, we can answer it with the \textit{MCSP-2Hop} in the same way like the single-criteria scenario:
we first determine the common intermediate hop set $H=L(s) \cap L(t)=\{h|h\in X(LCA(s,t)\}$ by retrieving the vertices in the \textit{LCA} of $s$ and $t$, where \textit{LCA}$(s,t)$ returns the \textit{LCA} of $X_s$ and $X_t$. Then for each $h_i\in H$, we compute its candidate set $P_c(h_i)$ by concatenating the paths $P(s,h_i)$ and $P(h_i,t)$. By taking the union of $P_c(h_i)$ from all hops in $H$, we obtain the candidate set $P_c(H)$:
\begin{equation*}
\begin{aligned}
P_c(H) &= \{P_c(h_i)|
\{P(s,h_i)\oplus P(h_i,t)\}\}, \forall h_i\in H\\
&=\bigcup\{ p_j\oplus p_k\},\forall p_j\in P(s,h_i) \wedge p_k\in P(h_i,t) 
\end{aligned}
\end{equation*}
where $\oplus$ is \textit{path concatenation} operator that takes the sum of paths' weights and the corresponding costs. Each $P_c(h_i)$ contains all the pairwise skyline concatenation results, and $P_c(H)$ is their union.

During the concatenation, we maintain the current optimal result (i.e., the shortest one satisfying the constraint $\overline{C}$) and keep updating it. Finally, after all the candidates are generated and validated, we return the optimal one as the \textit{MCSP} result. For example, given a query $q(v_9,v_7,\{6\})$, the \textit{LCA} of $v_9$ and $v_7$ in the tree is $X_1$, and it contains vertices $H=\{v_1,v_5,v_8,v_{12}\}$. For hop $v_1$, $L(v_9)[v_1]=\{(3,7),(5,5)\}$ and $L(v_7)[v_1]=\{(3,5),(4,0)\}$. Therefore, we can get four candidates $P_c(v_1)=\{(6,12),(7,7),(8,10),(9,5)\}$. Only $(9,5)$ has a cost smaller than $6$, so $(9,5)$ is the current best result. For hop $v_5$, we can get candidates $P_c(v_5)=\{(4,9),(5,6),$ $(6,7),(7,4)\}$. Then $(5,6)$ replaces $(9,5)$ as it has smaller weight and meet the constraint condition as well. For hop $v_8$, we can get candidates $P_c(v_8)=\{(4,5)\}$, and the current best result is updated to $(4,5)$. Finally, for hop $v_{12}$, we obtain the candidate $P_c(v_{12})=\{(4,5)\}$, which is equal to the current best result. The final result is $(4,5)$. 

As shown in the above example, although only 4 hops are needed, we actually run 10 path concatenation operations. Specifically, unlike the 2-hop labeling of the single-criterion case where only $|H|$ times of adding is needed, the concatenation of the skyline paths makes hop computation much more time-consuming. Suppose the label of $s$ to $h$ has $m$ skyline paths and the label of $h$ to $t$ has $n$ skyline paths. Then for each intermediate hop, we have to compute $mn$ times, and the total time complexity grows to $O(mn|H|)$ instead of $O(|H|)$. Therefore, pruning over this large amount of computation is crucial to the query performance and we will elaborate it in Section \ref{sec:Skyline}.

\subsection{Index Construction}
\label{subsec:MCSP2Hop_Construction}
The index construction is made up of two phases: a bottom-up \textit{Skyline Tree Decomposition} that gathers the skyline shortcuts, creates the tree nodes and forms the tree, and a top-down \textit{Skyline Label Assignment} that populates the labels with query answering, \mycolor{as shown in Algorithm \ref{Algorithm:Construction}}. \mycolor{It should be noted that both of these two phases do not involve the expensive skyline path search and only use the skyline path concatenation.}
\begin{algorithm}[ht]
	\caption{MCSP-2Hop Construction}
	\label{Algorithm:Construction}
%	\scriptsize
	\LinesNumbered
	\KwIn{Graph $G(V,E)$}
	\KwOut{MCSP-2Hop Labeling $L$}
	//	Tree Node Contraction\\
	\While{$v\in V$ has the minimum degree and $V\neq \phi$}
	{
		$X_v=N(v)$, $r(v)\leftarrow$ Iteration Number\\%\Comment{Tree Node Contraction}\\
		\For{$(u,w)\leftarrow N(v)\times N(v)$}
		{
			$P(u,w)\leftarrow Skyline(p(u,v)\oplus p(v,u), (u,w))$\;
			$E\leftarrow E\cup (P(u,w))$, $V\leftarrow V-v$, 	$E\leftarrow E-(u,v)-(v,w)$\;
		}
	}
	// Tree Formation\\
	\For{$v$ in $r(v)$ increasing order}
	{
		$u\leftarrow min\{r(u)|u\in X(v)\}, X_v.Parent\leftarrow X_u$%\Comment{Tree Formation}
	}
	// Label Assignment\\
	$X_v\leftarrow$ Tree Root, $Q.insert(X_v)$\\%\Comment{Label Assignment}\\
	\While{$X_v\leftarrow Q.pop()$}
	{
		\For{$u\in \{u|X_u$ is ancestor of $X_v\}$}
		{
			$P(v,u))\leftarrow Skyline(P(v,w)\oplus P(w,u)), \forall w\in X_v$\;
			$L(v)\leftarrow{(u,P(v,u))}$\;
		}
		$Q.insert(X_u.children)$, $L\leftarrow L\cup L(v)$\;
	}
\end{algorithm}
%\subsubsection{Skyline Tree Decomposition}
%\label{subsec::Construction_TD}
%The skyline tree decomposition is made up of two steps: \textit{node contraction} that forms the tree nodes and \textit{tree construction} that organizes these nodes into a tree.
\subsubsection{\mycolor{Skyline Tree Decomposition}}
\label{subsubsec:CSP2Hop_TD_Contraction}
We contract the vertices in the degree-increasing order, which is created by \textit{Minimum Degree Elimination}\cite{gong2019skyline,li2019scaling}. For each $v\in V$, we add a skyline shortcut $p'(u,w)$ to all of its neighbor pairs $(u,w)$ by $p(u,v)\oplus p(v,w)$. If $(u,w)$ does not exist, then we use $p'_s(u,w)$ as its skyline path $p(u,w)$ directly. Otherwise, we compute the skyline of $p(u,w)$ and $p'(u,w)$ in a linear time sort-merge way, \mycolor{which sorts the two path sets based on $w$ first and obtain the new skyline paths by comparing them in the distance-increasing order}. After that, a tree node $X(v)$ is formed with $v$ and its neighbors $N(v)$, and all their corresponding skyline paths from $v$ to $N(v)$. The tree node $X(v)$ is assigned an order $r(v)$ according to the order when $v$ is contracted. After contracting $v$, $v$ and its edges are removed from $G$ and the remaining vertices' degree are updated. Then we contract the remaining vertices in this way until none remains. \mycolor{The time complexity of the contraction phase is $O(|V|(max(|X_i|)^{2}\cdot n\cdot  c^{2(n-1)}_{max}\log c_{max}+\log |V|))$, where $max(|X_i|)$ is the tree width, $c_{max}$ is the longest edge cost (including weights) and $c_{max}^{n-1}$ determines the worst case of skyline path number, and $c^{2(n-1)}_{max}\log c^{2(n-1)}_{max}$ is concatenation complexity.}

We use the graph in Figure \ref{fig:ExampleGraph} as an example. Firstly, we contract $v_2$ with $N(v_2)=\{v_1,v_3,v_4\}$. Because there is no edge between $v_1$ and $v_3$, we add a shortcut $(v_1,v_3)$ with  by $w=w(v_1,v_2)+w(v_2,v_3)=2$ and $c=c(v_1,v_2)+c(v_2,v_3)=3$. Similarly, there is no edge between $v_1$ and $v_4$, so we add a new edge $(v_1,v_4)$ with $w=w(v_1,v_2)+w(v_2,v_4)=2$ and $c=c(v_1,v_2)+c(v_2,v_4)=4$. As for $v_4$ and $v_3$, we first generate a shortcut by $p(v_4,v_2)\oplus p(v_2,v_3)=(2,3)$. Then because there already exists an edge $(3,2)$ from $v_4$ to $v_3$, we compute the new skyline paths between the shortcut and the edge, and get the new skyline paths $p(v_4,v_3)=\{(3,2),(2,3)\}$. After that, we remove $v_2$ from the graph, leaving $v_1$'s degree changed to 4, $v_4$'s degree remaining 5, and $v_3$'s degree remaining 3. We also obtained a tree node $X(v_2,v_3,v_1,v_4)$ with the skyline paths from $v_2$ to $N(v_2)$ and order $r(v_2)=1$. This procedure runs on with the next vertex with the smallest degree. When all the vertices finished contraction, we have an order of $:v_2,v_3,v_6,v_7,v_{13},v_{11},$\\$v_{10},v_9,v_4,v_1,v_5,v_8,v_{12}$.

After that, we connect all the nodes to form a tree by connecting each tree node to its smallest order neighbor. For example, $X_{10}$ has a neighbor set of $\{v_9,v_9,v_{12}\}$ and $X_9$ has the smallest order of them, so $X_{10}$ is connected to $X_9$ as its child node.

\subsubsection{Skyline Label Assignment}
\label{subsubsec::CSP2Hop_TD_Label}
It should be noted the obtained tree preserves all the skyline path information of the original graph because every edge's information are preserved in the shortcuts. In fact, we can view it as a \textit{superset} of the \textit{MCSP-CH} result. Therefore, we can assign the labels with skyline search on this tree directly. However, it would be very time-consuming as the \textit{Sky-Dijk} is much slower than \textit{Dijkstra}'s. To this end, we use the top-down label cascading assignment method that takes advantages the existing labels with query answering to assign labels to the new nodes.

Specifically, a label $L(u)$ stores all the skyline paths from $u$ to its ancestors. Therefore, we start from the root of the tree to fill all the labels. For root itself, it has no ancestor so we just skip it. Next for its child $u$, because the root is the only vertex in $X(u)$ other than $u$, we add the root to $u$'s label. Then for any tree node $X(v)$, it is guaranteed that all its ancestors have obtained their labels. Then for any of its ancestor $a_i$, we compute its skyline paths by taking the skyline paths from  $\{p(v,u)\oplus p(u,a_i)| \forall u\in X(v)$\}. The time complexity of the label assignment phase is $O(|V|max(|X_i|) \cdot n\cdot c^{2n-2}_{max}\log c_{max})$, where $h$ is the tree height. The space complexity of the labels is $O(h\cdot |V|\cdot c^{n-1}_{max})$.

% with the \textit{multiple skyline path concatenation} described in Section \ref{subsec:Skyline_Multiple}. 

For example in Figure \ref{fig:Tree Decomposition}, suppose we have already obtained the labels of $X_1, X_5, X_8$, and $X_{12}$, and we are computing the labels of $X(v_4)$, which has an ancestor set of $\{v_{12}, v_8, v_5, v_1\}$. We take the computation of $L(v_4)[v_{12}]$ as example. $X_4$ has three vertices $\{v_1,v_5,v_8\}$ apart from $v_4$, so we compute the skyline paths via these tree hops. To show the process clearly, we do not apply the pruning first.
\begin{equation*}
%	\scriptsize
    \begin{aligned}
        P_c(h_1)&=p(v_4,v_1)\oplus p(v_1,v_{12})\\
        &=\{(2,4),(4,3)\}\oplus\{(4,7),(6,2)\}\\
        &=\{(6,11),(8,6),(8,10),(10,5)\}\\  
        P_s(h_1)&=\{(6,11),(8,6),(10,5)\}\\
    \end{aligned}
\end{equation*}
\begin{equation*}
    \begin{aligned}
        P_c(h_5)&=p(v_4,v_5)\oplus p(v_5,v_{12})\\
        &=\{(1,2)\}\oplus \{(3,5),(4,4),(5,3)\}\\
        &=\{(4,7),(5,6),(6,5)\}\\
        P_s(h_5)&=\{(4,7),(5,6),(6,5)\}\\
    \end{aligned}
\end{equation*}
\begin{equation*}
    \begin{aligned}
        P_c(h_8)&=p(v_4,v_8)\oplus p(v_8,v_{12})\\
        &=\{(1,2)\}\oplus\{(1,1)\}=\{(2,3)\}\\
        P_s(h_8)&=\{(2,3)\}
    \end{aligned}
\end{equation*}

Because $(2,3)$ dominates all the others, we put $(2,3)$ into the label of $L(v_4)[v_{12}]$. %If we use the \textit{query rectangle} pruning, we first compute three corresponding rectangles: $h_1:(6,11)\times (10,5)$; $h_5:(4,7)\times (6,5)$; $h_8:(2,3)$. Because $(2,3)$ dominates the other two rectangles, we prune $h_1$ and $h_5$ and return $(2,3)$ directly as the skyline path.
%The correctness of the label assignment procedure can be easily proved by induction. As the label values are essentially the query results using the existing labels of the ancestors, the results are correct as long as the ancestors' labels are correct. The labels are populated in the cascading way, and thus the correctness is naturally guaranteed. Therefore, the resulting label set is a \textit{CSP-2Hop}.

\subsection{Correctness}
\label{subsec:MCSP2Hop_Correctness}
In this section, we show that the \textit{MCSP-2Hop} constructed in Algorithm \ref{Algorithm:Construction} is correct, i,e., the \textit{MCSP} query between any source vertex $s$ and target vertex $t$ can be correctly answered.

As proved in Theorem \ref{the:Skyline}, the skyline paths between $s$ and $t$ are enough to answer any \textit{MCSP} query, so we only need to prove the candidate set $P_c(H)$ covers all the skyline paths between $s$ and $t$. Firstly, we introduce the following lemma:  
\begin{lemma}
	\label{lemma:SkylineSubpath}
	If a path $p$ is a skyline path, the sub-paths on $p$ are all skyline paths.
\end{lemma}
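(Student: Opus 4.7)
The plan is to prove Lemma \ref{lemma:SkylineSubpath} by contradiction, exploiting the additivity of the weight and cost functions along a path. Suppose $p = \langle s = v_0, v_1, \dots, v_k = t\rangle$ is a skyline path in $P(s,t)$, and suppose for the sake of contradiction that some sub-path $p' = \langle v_i, v_{i+1}, \dots, v_j\rangle$ of $p$ (with $0 \le i < j \le k$) is not a skyline path in $P(v_i, v_j)$. Then by Definition \ref{def:Multi-SkylinePath}, there exists a path $p''$ from $v_i$ to $v_j$ that dominates $p'$ in the sense of Definition \ref{def:Multi-Dominance}: $w(p'') \le w(p')$, $c_\ell(p'') \le c_\ell(p')$ for all $\ell \in [1, n-1]$, and at least one of these inequalities is strict.

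Next I would construct a new path $\hat{p}$ from $s$ to $t$ by splicing $p''$ into $p$ in place of $p'$, i.e., $\hat{p} = \langle v_0, \dots, v_i\rangle \circ p'' \circ \langle v_j, \dots, v_k\rangle$. Since the weight and each cost of a path are defined as the sum of the corresponding edge values, they decompose additively across the three segments of both $p$ and $\hat{p}$. The prefix and suffix segments contribute identically to $p$ and $\hat{p}$, so
\begin{equation*}
w(\hat{p}) = w(p) - w(p') + w(p''), \qquad c_\ell(\hat{p}) = c_\ell(p) - c_\ell(p') + c_\ell(p'')
\end{equation*}
for every $\ell$. The domination of $p'$ by $p''$ then transfers directly to the domination of $p$ by $\hat{p}$: every component of $\hat{p}$ is no larger than the corresponding component of $p$, with at least one strict inequality inherited from the $p''$-vs-$p'$ comparison. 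This contradicts the assumption that $p \in P(s,t)$ is a skyline path, completing the argument.

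The only subtle point I foresee — and where I would spend a sentence of justification — is that $\hat{p}$ need not be a simple path: splicing $p''$ back into $p$ could create repeated vertices if $p''$ re-uses vertices from the prefix or suffix of $p$. However, the paper's definition of a path is a sequence of consecutive vertices with no simplicity requirement, and the dominance relation in Definition \ref{def:Multi-Dominance} is defined purely in terms of additive weights and costs, so $\hat{p}$ is still a valid path from $s$ to $t$ and the cost arithmetic above goes through unchanged. Hence no extra case analysis is needed, and the contradiction is clean.
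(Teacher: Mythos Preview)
Your argument is correct and is exactly the standard sub-path optimality argument one expects here: assume a dominated sub-path, splice in the dominator, and obtain a path dominating $p$. Note, however, that the paper does not actually prove this lemma itself; its proof consists of a single sentence referring the reader to \cite{storandt2012route}. So there is no in-paper argument to compare against --- your proof simply fills in what the paper outsources, and the splicing-by-contradiction approach you use is the canonical one (and is what the cited reference does as well). Your remark about simplicity of $\hat{p}$ is a nice touch and is indeed a non-issue under the paper's path definition.
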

\begin{proof}
	The proof of it can be found in \cite{storandt2012route}.
\end{proof}

\begin{theorem}
	\label{the:HopSkyline}
	$P_c(h)$ covers all the skyline paths from $s$ to $t$ that travel via hop $h$.
\end{theorem}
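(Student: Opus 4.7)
The plan is a direct decomposition argument that leverages Lemma \ref{lemma:SkylineSubpath}. I would take an arbitrary skyline path $\overline{p}$ from $s$ to $t$ that passes through the hop $h$, and split it at $h$ into $\overline{p} = p_1 \oplus p_2$, where $p_1$ is the prefix from $s$ to $h$ and $p_2$ is the suffix from $h$ to $t$.

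Next, applying Lemma \ref{lemma:SkylineSubpath}, both $p_1$ and $p_2$ must themselves be skyline paths between their respective endpoints in $G$; otherwise some dominating subpath could be swapped in to produce a path dominating $\overline{p}$, contradicting that $\overline{p}$ is a skyline path from $s$ to $t$. Consequently $p_1 \in P(s,h)$ and $p_2 \in P(h,t)$. Since $P_c(h)$ is defined as the pairwise concatenation $\{p_j \oplus p_k \mid p_j \in P(s,h), p_k \in P(h,t)\}$, we immediately obtain $\overline{p} = p_1 \oplus p_2 \in P_c(h)$, which is exactly the coverage claim.

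The step I expect to be the main obstacle is the implicit assumption that the label sets $L(s)[h]$ and $L(t)[h]$ actually materialize the full skyline path sets $P(s,h)$ and $P(h,t)$ of the underlying graph $G$, rather than a proper subset produced by the tree-based construction in Algorithm \ref{Algorithm:Construction}. To close this gap I would give a short supporting argument in two parts: (i) the contraction phase preserves all skyline path information between any pair of surviving vertices, because for every contracted vertex $v$ we insert the full skyline shortcut set on $N(v) \times N(v)$ via the sort-merge combination of $p(u,v) \oplus p(v,w)$ with any pre-existing edge, so no skyline path is lost; and (ii) by induction from the root downward on the tree, the label assignment $P(v,u) \leftarrow \bigcup_{w \in X_v} P(v,w) \oplus P(w,u)$ again reproduces the complete skyline set, since by the \textit{Cut Property} every path from $v$ to an ancestor $u$ must pass through some $w \in X_v$, and the inductive hypothesis guarantees completeness of $P(w,u)$ at the higher level. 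With this invariant in hand, the decomposition argument above goes through and $P_c(h)$ contains every skyline path through $h$.
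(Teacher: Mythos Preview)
Your core argument is essentially identical to the paper's: split the skyline path at $h$, invoke Lemma~\ref{lemma:SkylineSubpath} to conclude both halves are skyline paths and hence lie in $P(s,h)$ and $P(h,t)$, so the concatenation lies in $P_c(h)$ (the paper phrases this by contradiction, but the logic is the same). Your additional label-correctness discussion is unnecessary for this theorem as stated, since here $P(s,h)$ and $P(h,t)$ are \emph{by definition} the full skyline path sets (see Section~\ref{subsec:MCSP2Hop_Query}); the paper handles the question of whether Algorithm~\ref{Algorithm:Construction} actually produces these sets in a separate theorem after Theorem~\ref{the:CSP-2Hop}.
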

\begin{proof}
	Suppose $p^*$ is a skyline path from $s$ to $t$ via $h$ and $p^*\notin P_c(h)$. According to Lemma \ref{lemma:SkylineSubpath}, $p^*$'s sub-paths $p^*(s,h)$ and $p^*(h,t)$ are also skyline paths from $s$ to $h$ and from $h$ to $t$. If $p^*(s,h)\notin P(s,h)$ or $p^*(h,t)\notin P(h,t)$, it contradicts the definition that $P(s,h)$ contains all the skyline paths from $s$ to $h$, and similar for $P(h,t)$. Therefore, $P_c(h)$ covers all the skyline paths from $s$ to $t$ via $h$.
\end{proof}

Next we need to prove the union of all the hops $h\in H$ covers all the skyline paths between $s$ and $t$. Because our 2-hop labeling is based on the graph cut, we only need to prove the following theorem:
\begin{theorem}
	\label{the:CSP-2Hop}
	If the hop set $H$ is the set of cuts between $s$ and $t$, then $P_c(H)=\bigcup_{h_i\in H} P_c(h_i)$ covers all the skyline paths from $s$ to $t$.
\end{theorem}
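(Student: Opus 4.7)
The plan is to reduce this theorem directly to Theorem \ref{the:HopSkyline} by invoking the cut property of $H$. The only new ingredient needed is the topological fact that, since $H$ is a cut between $s$ and $t$, every path from $s$ to $t$ in $G$ must pass through at least one vertex of $H$. This is exactly the Cut Property of the tree decomposition recalled in Section \ref{subsec:Preliminary_2Hop}: the vertices contained in the LCA node $X(\mathrm{LCA}(s,t))$ separate $X_s$ from $X_t$ in $G$, and by our construction these are precisely the hop vertices in $H=L(s)\cap L(t)$.

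With this observation in hand, I would take an arbitrary skyline path $p^{*}\in P(s,t)$ and argue as follows. Since $H$ is a cut, $p^{*}$ must contain some vertex $h_i\in H$; fix such an $h_i$. Then $p^{*}$ can be decomposed as $p^{*}=p^{*}(s,h_i)\oplus p^{*}(h_i,t)$, i.e., a path from $s$ to $t$ traveling via the hop $h_i$. By Theorem \ref{the:HopSkyline}, $P_c(h_i)$ already covers every skyline path from $s$ to $t$ that passes through $h_i$, so $p^{*}\in P_c(h_i)$. Consequently,
\begin{equation*}
p^{*}\in P_c(h_i) \subseteq \bigcup_{h\in H} P_c(h) = P_c(H),
\end{equation*}
and since $p^{*}$ was arbitrary, $P_c(H)$ covers all skyline paths from $s$ to $t$.

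The main obstacle, such as it is, lies in justifying the cut step cleanly: one must confirm that the hop set $H$ produced by the 2-hop labeling really does coincide with a separating set in $G$ (not merely in the contracted tree), and that the decomposition of $p^{*}$ at its first occurrence of an $H$-vertex indeed yields two sub-paths whose concatenation equals $p^{*}$ with correctly summed weights and costs. Both points follow from the Cut Property and from the definition of $\oplus$, but they should be stated explicitly so that the appeal to Theorem \ref{the:HopSkyline} is airtight. Beyond this, the argument is essentially a one-line corollary of the per-hop completeness already established.
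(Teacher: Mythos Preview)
Your proposal is correct and follows essentially the same approach as the paper: invoke the cut property so that any skyline path $p^*$ must pass through some $h_i\in H$, then apply Theorem~\ref{the:HopSkyline} to conclude $p^*\in P_c(h_i)\subseteq P_c(H)$. The paper phrases this as a contradiction argument rather than a direct one, but the logical content is identical.
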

\begin{proof}
	Similar to Theorem \ref{the:HopSkyline}, we also prove it with contradiction. Suppose $p^*$ is a skyline path and $p^*\notin P_c(H)$. Because $H$ is the graph cut set, $p^*$ has to travel through one of the cut vertex $h_i\in H$, which has a candidate set $P_c(h_i)$. However, $p^*\notin P_c(h_i)$ contradicts with Theorem \ref{the:HopSkyline}. So every skyline path from $s$ to $t$ exists in $P_c(H)$.
\end{proof}

\mycolor{Finally, we prove the labels generated by Algorithm \ref{Algorithm:Construction} is a \textit{MCSP-2Hop}.}
\begin{theorem}
	\mycolor{$L$ constructed by Algorithm \ref{Algorithm:Construction} is \textit{MCSP-2Hop}, that is $\forall (v,P(u,v))\in L(u), (u, v\in V)$, the path set $P(u,v)$ covers skyline paths between $u,v$.}
\end{theorem}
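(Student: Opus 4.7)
The plan is induction on the BFS order used by the top-down label assignment phase of Algorithm \ref{Algorithm:Construction}, equivalently induction on the depth of $X_v$ in the decomposition tree. In the base case, the tree root has no ancestors, so $L$ for the root is empty and the claim holds vacuously. For the inductive step, I assume every strict ancestor $X_w$ of $X_v$ already stores, for every ancestor $a$ of $X_w$, the complete skyline path set $P(w,a)$, and show that the same property is produced for $L(v)$ when $X_v$ is processed.

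Fix any ancestor $u$ of $X_v$ and set $H := X_v \setminus \{v\}$. Two auxiliary facts drive the argument. First, $H$ is a cut between $v$ and $u$ in $G$: when $v$ is eliminated during contraction, its neighbors in the reduced graph are exactly $H$, and because every eliminated neighbor pair is reconnected by a skyline-preserving shortcut, every $v$-to-$u$ path in $G$ decomposes at some $w \in H$ into a $v$-to-$w$ prefix and a $w$-to-$u$ suffix, both of whose skyline sets are preserved in the reduced graph (the degenerate case $u \in H$ just makes $u$ itself the hop). Second, the sets $P(v,w)$ stored at $X_v$ during contraction are already the complete skyline sets between $v$ and $w$ in $G$; this follows from Lemma \ref{lemma:SkylineSubpath} together with the correctness of the sort-merge skyline combination performed at every contraction step.

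Given these two facts, for every $w \in H$ Theorem \ref{the:HopSkyline} yields that $P(v,w) \oplus P(w,u)$ covers every skyline $v$-to-$u$ path that passes through $w$, where $P(w,u)$ is supplied by the inductive hypothesis: since $X_u$ and $X_w$ are both ancestors of $X_v$, they lie on the unique root-to-$X_v$ path in the tree, so one must be an ancestor of the other and the required label is already stored either at $L(w)[u]$ or, by undirectedness, at $L(u)[w]$. Theorem \ref{the:CSP-2Hop} then combines the per-hop coverage into full coverage of the skyline set between $v$ and $u$, and the explicit \textit{Skyline} filter in Algorithm \ref{Algorithm:Construction} removes the dominated candidates so that exactly the skyline set remains in $L(v)[u]$. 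The main obstacle I anticipate is the bookkeeping of ancestor/descendant relations among $v$, the hop $w \in H$, and the target $u$; verifying that the inductive hypothesis always supplies a usable label for the $w$-to-$u$ segment in the correct orientation is where the chain structure of ancestors along a single root-to-$X_v$ path, combined with undirectedness, carries the proof.
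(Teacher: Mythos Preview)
Your inductive skeleton matches the paper's, but your ``second auxiliary fact'' is false, and the gap is not cosmetic. The shortcuts $P(v,w)$ stored at $X_v$ during contraction are \emph{not} the complete skyline sets between $v$ and $w$ in $G$: at the moment $v$ is eliminated, the edge $(v,w)$ only aggregates skyline paths whose intermediate vertices have already been contracted, i.e.\ vertices of rank below $r(v)$. Any $v$--$w$ skyline path that detours through a vertex of rank larger than $r(v)$ is simply absent from the stored $P(v,w)$. (Take a $4$-cycle $v_1v_2v_3v_4$ contracted in that order: the shortcut $P(v_1,v_2)$ at $X_{v_1}$ is just the original edge, yet the path $v_1$--$v_4$--$v_3$--$v_2$ may well be a skyline path in $G$.) Since Theorem~\ref{the:HopSkyline} requires $P(s,h)$ and $P(h,t)$ to be the full skyline sets, your per-hop coverage claim does not follow.

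The paper avoids this by introducing the graph $G(v)$, the assembly of tree nodes from $X_v$ to the root, and invoking its \emph{skyline-preserved} property (the analogue of distance preservation in H2H). In $G(v)$ the vertex $v$ has exactly $H=X_v\setminus\{v\}$ as its neighbor set, and the incident edges \emph{are} the contraction shortcuts; because $G(v)$ is skyline preserved, every skyline $v$--$u$ path in $G$ corresponds to one in $G(v)$ whose first step is one of those shortcut edges, and the inductive hypothesis handles the remainder. You can repair your argument either by adopting this $G(v)$ viewpoint, or by sharpening your cut argument: take $w$ to be the \emph{first} vertex of $H$ along the skyline path, observe that the $v$--$w$ prefix then uses only vertices of rank below $r(v)$ (since $H$ separates $v$ from every higher-rank vertex), and note that this is precisely the class of paths the stored shortcut $P(v,w)$ summarizes.
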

\begin{proof}
	\mycolor{Firstly, we define graph $G(v)$ as the assembly of tree nodes from $X_v$ to the root of \textit{skyline tree decomposition} $T_G$. It can be easily obtained that $G(v)$ is \textit{skyline preserved} by referring to the \textit{distance preserved property} of tree decomposition in \cite{ouyang2018hierarchy}. Next, we prove that $P(u,v)$ covers skyline paths between $u,v$ through induction. We suppose that height $h$ of tree root is zero and the height of one tree node is its parent's height plus one. For the root with $h=0$, its label are \textit{MCSP-2Hop}. For $X_v$ with height $h=1$, the labels in $L(v)$ is also \textit{MCSP-2Hop} because of the skyline preserved property. Then suppose that labels in $L(v)$ with height $h=i (i>1)$ is \textit{MCSP-2Hop}, we attempt to prove that labels in $L(u)$ with $h=i+1$ are also \textit{MCSP-2Hop}. Since $G(u)$ is skyline preserved with $X_u\setminus \{u\}$ being $u's$ neighbor set in $G(u)$, then all the skylines paths must pass through vertices in this set. According to the top-down label assignment, the labels is obtained by concatenating the two paths with $X_u\setminus \{u\}$ being the hops, so the computed labels in $L(u)$ are also \textit{MCSP-2Hop}. Therefore, $L$ constructed by Algorithm \ref{Algorithm:Construction} is \textit{MCSP-2Hop}.}
\end{proof}

%In summary, our \textit{CSP-2Hop} is able to answer any \textit{CSP} query correctly with label concatenation. 
 
\subsection{Path Retrieval}
\label{subsec:MCSP2Hop_Path}
To retrieve the actual result path, we need to preserve the intermedia vertex in the labels. Specifically, during the \textit{Tree Node Contraction} phase, we need to store the contracted vertex of each skyline path in each shortcut. During the path retrieval, suppose one of the skyline results is the concatenation of $s\rightarrow h$ and $h\rightarrow t$, then they are actually two shortcuts so we only need to insert the intermediate vertices recursively until all no shortcut remains to recover the original path.

\section{Forest Hop Labeling}
\label{sec:Forest}
Due to the nature of skyline, the \textit{MCSP-2Hop} size could be orders of magnitude larger than the distance labels, especially when the paths are long. Therefore, we present our \textit{Forest Hop Labeling} to reduce the index size by partitioning the graph into regions and building a small tree for each region, and a boundary tree between regions.

\subsection{Graph Partition}
\label{subsec:Forest_Parition}
We partition $G$ into a set $C=\{G_1,G_2,\dots,G_{|C|}\}$ of vertex-disjoint subgraphs such that $\bigcup_{i\in[1,|C|]}G_i=G$. If an edge appears in two different subgraphs, it is a \textit{Boundary Edge}, and its end vertices are called \textit{Boundary Vertex}. Graph partitioning is a well-studied problem and we could use any existing solutions. In our implementation, we use the state-of-art \textit{Natural Cut} method \cite{delling2011graph}, because such an approach can create only a small number of boundary vertices.

\subsection{Inner Partition Tree}
\label{subsec:Forest_Inner}
Conceptually, we build a small tree (\textit{MCSP-2Hop}) for each partition and these procedures can run in parallel. However, these trees cannot be constructed directly with only the subgraph information because the actual paths may take a detour out of the current partition and come back later. Therefore, we need to pre-compute the all-pair boundary results for each partition because these boundaries form the cut of a partition naturally and their all-pair results can cover all the detours. In other word, a complete graph of the boundaries is overlapped to the partition with each edge is a skyline path set. This procedure is also fast because we only need to run several search space reduced \textit{Sky-Dijk}s in parallel. When forming the tree nodes, we also follow the degree elimination order. However, we postpone the boundary contraction to the last phase because they are also part of the boundary tree and we can utilize them for faster query answering later on. \mycolor{The inner tree's time complexity is $O(|C||V_i| (max(|X_j|)\cdot n\cdot c^{2n-2}_{max}\log c_{max} (max(|X_j|)+ h_i)+\log |V_i|)$, where $|V_i|$ is the vertex set of subgraph $G_i$, $h_i$ is the inner tree height, and $X_j$ is the tree node belongs to $G_i$. As for the node number, tree width, tree height, and $c_{max}$ for most of the sub-graphs are much smaller, it is much faster than the original big tree. As the constructions are dependent, they can run in parallel to achieve higher performance. The total inner tree label size is $O(|C|\times |V_i|\cdot |h_i|\cdot c^{n-1}_{max})$.}  %Because of the boundary all-pair result, each partition tree can answer the inner partition query directly.

\subsection{Boundary Tree}
\label{subsec:Forest_Boundary}
When we have a query from different partitions, we need inter-partition information to help answer it. One way is precomputing the all-pair boundary results. However, unlike the previous one, this all-pair result is much larger and harder to compute because it contains boundaries of all the partition and they are far away from each other. To avoid the expensive long-range \textit{Sky-Dijk}, we also resort to the concatenation-based \textit{MCSP-2Hop} to build a boundary tree.

\begin{figure}[htbp]
    \centering
    \includegraphics[width=2.9in]{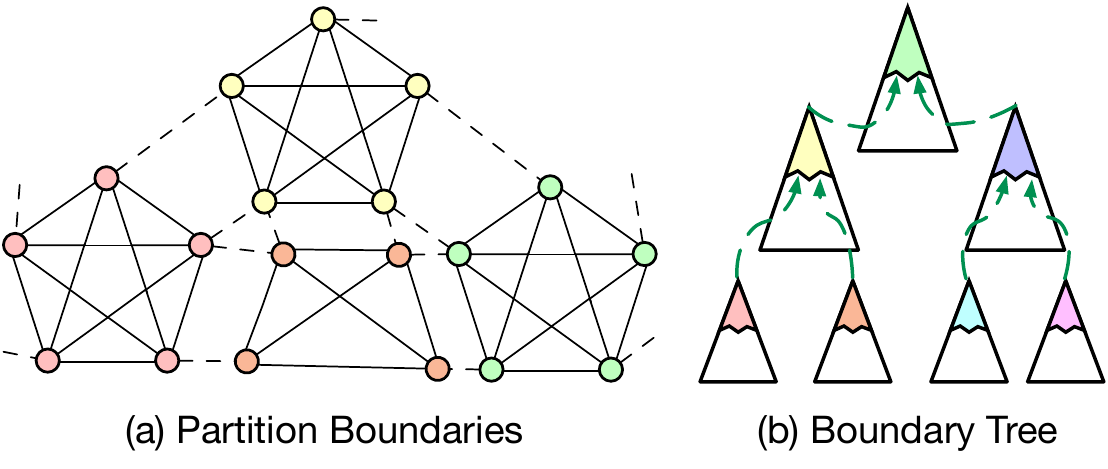}
    \caption{Forest Hop Labeling}
    \label{fig:Forest}
\end{figure}

The boundary tree is built on a graph that contains all the boundary edges and all the partition-boundary all-pair edges, as shown in Figure \ref{fig:Forest}-(a). Although we can still use the same degree elimination order to contract the vertices, its randomness ignores the property that these boundaries are the cuts of the partitions. As a result, the inner partition trees and the boundary tree has no relation and they can hardly interchange information. Therefore, we contract the boundaries in the unit of partition instead of degree: the boundaries from the same partition are contracted together in the same order of their inner tree. In this way, we can attach the inner partition trees to the boundary tree as shown in Figure \ref{fig:Forest}-(b). Furthermore, only the new shortcuts need skyline concatenation while the existing ones can be inherited directly.

In terms of label assignment, we can populate the labels only to the boundaries and we call it \textit{boundary label}. It has smaller index size but requires 6-hop in query answering. The boundary tree construction takes $O|B| (max(|X_j|)\cdot n\cdot c^{2n-2}_{max}\log c_{max} (max(|X_i|)+ h_B)+\log |B|)$ and its label size is $O(h_B\cdot |B|\cdot c^{n-1}_{max})$, where $B$ is the boundary set and $h_B$ is the boundary tree height. We can further push down the boundary labels to the inner labels and we call them \textit{extended labels}. Specifically, for each inner partition tree, its ancestors in the boundary tree are also added to its labels. For example in Figure \ref{fig:Forest}-(b), the green region have labels in every small tree, while the yellow region only appear in the yellow, red, and orange tree. The extended label assignment takes an extra $O(|C||V_i| (max(|X_j|\cdot c^{2n-2}_{max}\log c_{max} ( max(|X_j|)+h_i+h_B))+\log |V_i|))$ time and $O(h_B\cdot |V-B|\cdot c^{n-1}_{max})$ space.

\subsection{Query Processing}
\label{subsec:Forest_Query}
Depending on the locations of the two query points $s$ and $t$, we discuss the following situations:
\begin{enumerate}
	\item \textbf{$s$ and $t$ are in the same partition}: Because the all-pair boundary information is considered during construction, we can use the local partition tree directly to answer it.
	\item \textbf{$s$ and $t$ are all boundaries}: It can be answered with the boundary label directly.
	\item \textbf{$s$ is in one partition and $t$ is another's boundary}: If we only have the boundary label, then we need to first find the results from $s$ to its boundaries within the local label, then concatenate the results from these boundaries to $t$ using the boundary labels, and it takes 4 hops. If we have the extended labels, we can use it as a 2-hop label directly. 
	\item \textbf{$s$ and $t$ are in different partitions}: If we only have the boundary label, then we need to first find the results from $s$ to its boundaries within the local label, then concatenate the results from these boundaries to $t$'s boundaries using the boundary labels, and finally concatenate these results to $t$ with $t$'s local labels. Obviously, this is a 6-hop procedure. Similar to the previous case, if we have the extended labels, we can use it as a 2-hop label directly. More advanced query processing techniques will be presented in the next section.
\end{enumerate}

\section{Skyline Path Concatenation}
\label{sec:Skyline}
In this section, we present insights of the skyline path concatenation that speedup the index construction and query processing. Specifically, we first discuss how to concatenate two skyline path sets via single hop in Section \ref{subsec:Skyline_Full}, which is used in the \textit{tree node contraction} phase of index construction. Then we extend to multiple hops in \ref{subsec:Skyline_Multiple}, which is used in the \textit{label assignment} phase of index construction and query processing. Finally, we explain how to efficiently answer \textit{MCSP} in Section \ref{subsec:Skyline_Constraint} and how to apply these techniques to \textit{forest labels} in Section \ref{subsec:Skyline_Forest}.
\begin{figure}[ht]
    \centering
    \includegraphics[width=3.4in]{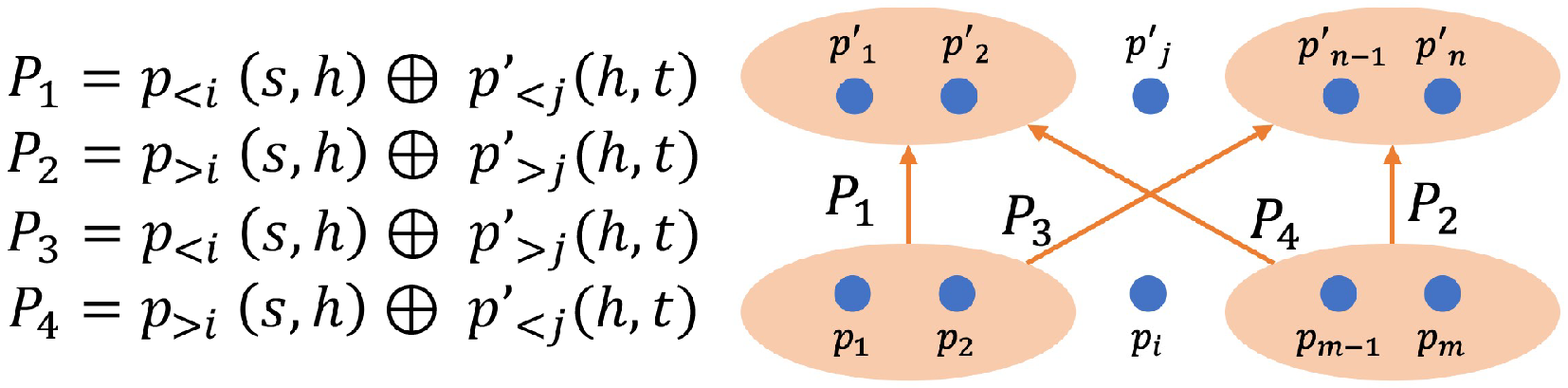}
    \caption{Concatenated Path Category}
    \label{fig:ConcatenationCategory}
\end{figure}

\subsection{Single Hop Skyline Path Concatenation}
\label{subsec:Skyline_Full}
Given two sets of skyline paths $P(s,h)$ and $P(h,t)$ with sizes of $m$ and $n$, it takes two operations ($O(mn)$ time \textit{path concatenation} and $O(mn\log mn)$ time \textit{skyline path validation}) to obtain the skyline result. Because the $mn$ \textit{concatenation} is inevitable, we aim to speedup the \textit{validation} by exploring the concatenation properties first. In the following, we first present the observations and algorithms for the 2D skyline path concatenation as it is simpler and has some unique characteristics. Then we generalize to the multi-dimensional case.

\subsubsection{2D Skyline Path Concatenation Observations}
The 2D skyline dataset has a unique characteristic compared with the multi-dimensional skyline dataset: when we sort all the data in the weight-increasing order, their costs are naturally sorted in the decreasing order. Therefore, we have the following four observations to help improve the concatenation efficiency. We first dig into the simplest \textit{1-n / m-1} skyline concatenation. The following property proves that we do not need validation for this simple case:
%different types of concatenation to identify when the validation is necessary and which path should be compared. The following lemma proves the skyline relation is preserved with 1-$N$ and $N$-1 concatenation. 
\begin{property}
\label{lemma:SingleConcatenation} \textit{\textbf{(1-n / m-1 Skyline)}}
$\forall p\in P(s,h)$, $p\oplus P(h,t)$ can generate $n=|P(h,t)|$ paths that cannot dominate each other. Similarly, $\forall p'\in P(h,t)$, $P(s,h)\oplus p'$ can generate $m=|P(s,h)|$ paths that cannot dominate each other.
\end{property}
\begin{proof}
We only need to prove the first half. Suppose we choose $p(w,c)$ from $P(s,h)$. Because $P(h,t)$ is the skyline path set, we select any two paths from it: $p_1(w_1,c_1)$ and $p_2(w_2,c_2)$. Suppose $w_1\leq w_2$ and $c_1 \geq c_2$, then this relation still holds because $w_1+w\leq w_2+w$ and $c_1+c \geq c_2+c$. 
\end{proof}

%Therefore, the paths in $p\oplus P(h,t)$ do not have to compare with each other as the results are always skyline. So do the paths in $P(s,h)\oplus p'$. 

Next we explore the \textit{m-n} concatenation case. Suppose all the skyline paths are sorted in the weight-increasing order. We define $p_i(s,h)=(w_i,c_i)$ ($i\in[1,m]$) as the $i^{th}$ skyline path in $P(s,h)$, $p'_j(h,t)=(w'_j,c'_j)$ ($j\in[1,n]$) as the $j^{th}$ skyline path in $P(h,t)$, and $p_i^j$ as the concatenation path of $p_i(s,h)$ and $p'_j(h,t)$. Therefore, $p_1^1$ is the concatenation of the first skyline path in $P(s,h)$ and the first path in $P(h,t)$, and $P_m^n$ is the  concatenation of the last skyline path in $P(s,h)$ and the last one in $P(h,t)$. Then they have the following property:
\begin{property}
\label{lemma:FirstandLast} \textit{\textbf{(Endpoint Skyline)}}
$p_1^1$ and $p_m^n$ are two endpoints of the concatenation results and they are skyline paths.
\end{property}
\begin{proof}
$p_1^1=p_1(s,h)\oplus p'_1(h,t)=p(w_1+w'_1, c_1+c'_1)$. Because $w_1$ is the smallest in $p_1(s,h)$ and $w'_1$ is the smallest in $p'_1(h,t)$, $w_1+w'_1$ is still the smallest weight in $P_c$. Similarly, $p_m^n=p_m(s,h)\oplus p'_n(h,t)=p(w_m+w'_n, c_m+c'_n)$ has the smallest cost in $P_c$. Therefore, no other path in $P_c$ can dominate $p_1^1$ and $p_m^n$.
\end{proof}

Finally, we explore the domination relations among the paths between the endpoints. Given any concatenated path $p_i^j$, we can classify the other paths (except the ones generated by $p_i$ and $p'_j$) into four categorizes as shown in Figure \ref{fig:ConcatenationCategory}. %\ref{fig:ConcatenationDominance}:
%\begin{itemize}
%    \item $P_1=p_{<i}(s,h)\oplus p'_{<j}(h,t)$ %that is concatenated by all the paths with smaller weights than $p_i(s,h)$ and all the paths with smaller weights than $p'_j(h,t)$
%    \item $P_2=p_{>i}(s,h)\oplus p'_{>j}(h,t)$ %that is concatenated by all the paths with larger weights than $p_i(s,h)$ and all the paths with larger weights than $p'_j(h,t)$
%    \item $P_3=p_{<i}(s,h)\oplus p'_{>j}(h,t)$ %that is concatenated by all the paths with smaller weights than $p_i(s,h)$ and all the paths with larger weights than $p'_j(h,t)$
%    \item $P_4=p_{>i}(s,h)\oplus p'_{<j}(h,t)$ %that is concatenated by all the paths with larger weights than $p_i(s,h)$ and all the paths with smaller weights than $p'_j(h,t)$
%\end{itemize}
The following two properties provide the dominance relations between $p_i^j$ and these four path sets.

%\begin{figure}[ht]
%    \centering
%    \includegraphics[width=3.2in]{ConcatenationDominance2.eps}
%    \caption{Concatenated Path Category}
%    \label{fig:ConcatenationDominance}
%\end{figure}

\begin{property}
\label{theorem:ConcatenationUnDominance} \textit{\textbf{(Side-Path Irrelevant)}} The concatenated path $p_i^j$ cannot be dominated by the paths in $P_1$ and $P_2$.
\end{property}
\begin{proof}
$\forall p_k\in p_{<i}$ and $p'_k\in p'_{<j}$, $c_k > c_i$ and $c'_k>c_j$. Therefore, $c_k+c'_k > c_i+c'_j$, so $P_1$ cannot dominate $p_i^j$. Similarly, $\forall p_k\in p_{>i}$ and $p'_k\in p'_{>j}$, $w_k > w_i$ and $w'_k>w_j$. Therefore, $w_k+w'_k > w_i+w'_j$, so $P_2$ cannot dominate $p_i^j$.
\end{proof}

%Based on the corollaries, we can identify the dominance during the connecting processing. The theorem are shown as follows.
\begin{property}
\label{theorem:ConcatenationDominance} \textit{\textbf{(Cross-Path Relevant)}}
$p_i^j$ is a skyline path iff it is not dominated by any path in $P_3$ and $P_4$.
\end{property}
\begin{proof}
$\forall p_k\in p_{<i}$ and $p'_k\in p'_{>j}$, we have $c_k > c_i$ and $c'_k<c_j$, $w_k > w_i$ and $w'_k<w_j$, so $c_k+c'_k$ has no relation with $c_i+c'_j$, and $w_k+w'_k$ has no relation with $w_i+w'_j$. Therefore, paths in $P_3$ has the possibility to dominate $p_i^j$, and same for $P_4$. In other words, it is the paths in $P_3$ and $P_4$ that determines if $p_i^j$ could be a skyline path or not.
\end{proof}

\subsubsection{Efficient 2D Skyline Path Concatenation}
The above properties enable us to reduce the validation operations and speed up the 2D concatenation. The following theorem further reduces the comparison number in the validation operation:
%Unlike the classic skyline problem whose input is a set of points, the skyline path concatenation problem takes two sets of skyline paths as input and generates results on the fly. Therefore, we can take advantage of this difference and test a small number of paths to validate the skyline efficiently.

\begin{theorem}
\label{theorem:GreedySkyline}
Suppose the concatenated paths are coming in the weight-increasing order. We have $k$ skyline paths and a new path $p_{k+1}$ comes with weight larger than $w_k$. If $p_k$ cannot dominate $p_{k+1}$, then $p_{k+1}$ must be a skyline path. 
\end{theorem}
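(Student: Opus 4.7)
The plan is to exploit the unique 2D property mentioned just before the theorem: among the current $k$ skyline paths sorted by weight-increasing order, the costs are forced to be strictly decreasing, i.e.\ $c_1 > c_2 > \cdots > c_k$. This is because any two skyline paths must be pairwise non-dominating, so a smaller weight necessarily forces a larger cost. I would open the proof by stating this ordering explicitly, since it is the only structural fact needed.

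Next I would translate the hypothesis ``$p_k$ cannot dominate $p_{k+1}$'' using Definition \ref{def:Multi-Dominance}. Since we already know $w_k < w_{k+1}$, the only way $p_k$ fails to dominate $p_{k+1}$ is if the cost inequality is violated, i.e.\ $c_k > c_{k+1}$. This single consequence is the linchpin of the argument.

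The final step is to rule out domination by any earlier skyline path $p_i$ with $i < k$. For such $p_i$, we have $w_i < w_k < w_{k+1}$, so the weight inequality for dominance holds. But the cost chain $c_i > c_k > c_{k+1}$ — combining the decreasing-cost property of the stored skyline with the inequality derived in the previous step — shows $c_i > c_{k+1}$, so the cost inequality for dominance fails. Therefore no existing skyline path dominates $p_{k+1}$, and $p_{k+1}$ is a skyline path.

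There is no real obstacle here; the argument is essentially a one-line transitivity once the monotonicity of costs in a weight-sorted 2D skyline is invoked. The only care required is with the ``at least one strict'' clause in Definition \ref{def:Multi-Dominance}: since $w_k < w_{k+1}$ is strict, the failure of $p_k$ to dominate $p_{k+1}$ must come from $c_k > c_{k+1}$ (not merely $c_k = c_{k+1}$ with a non-strict $w$), which is exactly what the weight-increasing arrival order guarantees. This makes the theorem a clean corollary of the ordering structure, and it directly justifies the greedy validation strategy used by the efficient 2D concatenation algorithm.
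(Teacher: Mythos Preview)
Your proof is correct and follows essentially the same line as the paper's: both exploit that $p_k$ has the minimum cost among the current weight-sorted skyline, deduce $c_k > c_{k+1}$ from the non-domination hypothesis, and conclude by transitivity that no earlier $p_i$ can dominate $p_{k+1}$. The paper adds one sentence you omit, namely that later paths cannot dominate $p_{k+1}$ either (since they all have weight $\geq w_{k+1}$); this is trivial from the weight-increasing arrival order, but you may want to state it so that ``skyline path'' is justified with respect to the entire concatenated set rather than just the first $k$ paths.
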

\begin{proof}
%Since paths are coming in the weight-increasing order, 
$p_{k+1}$ cannot be dominated by latter paths because $w_{k+1}\leq w_j, \forall j>k+1$. $p_k$ has the smallest cost of the current skyline paths. If $c_k\geq c_{k+1}$, then the other paths' cost are also larger than $c_{k+1}$. Therefore, if $p_k$ cannot dominate $p_{k+1}$, then no  future path can dominate it.
\end{proof}

%To solve the problem above, we introduce a method to identify all the skyline paths in $P_{candi}$ during the connecting processing with the idea of divide and conquer. This method can ensure that we can only check the dominance relations of each connecting path one time. 

%\begin{figure}[ht]
%    \centering
%    \includegraphics[width=1.7in]{}
%    \caption{Next Skyline Path Validation}
%    \label{fig:Increasing}
%\end{figure}

\begin{figure}[ht]
    \centering
    \includegraphics[width=2.3in]{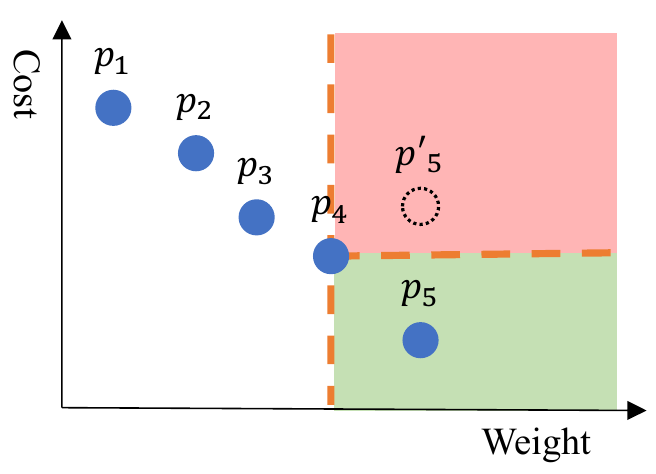}
    \caption{Next Skyline Path Validation}
    \label{fig:ConcatenationExample}
\end{figure}
For example in Figure \ref{fig:ConcatenationExample}, we have a set of skyline paths $\{p_1,p_2,p_3,p_4\}$ sorted in the weight-increasing order and $p_5$ is the next concatenated path with the next smallest weight. Therefore, $p_5$ can only locate in the right regions of $p_4$. If $p_5$ is in the red region, it is dominated by $p_4$ and discarded; If $p_5$ is in the green region, it will become the next skyline path. 

%Based on Theorem \ref{theorem: dominating a bicriterion path}, in the connecting processing, if we can obtain connecting paths from the one with the smallest weight to the one with the largest weight, for each connecting, we can identify the dominance relations of the current connecting path by the last conformed skyline path. Thus we introduce a greedy method of connecting in the following section.

\begin{algorithm}[ht]
	\caption{Efficient 2D Skyline Path Concatenation}
	\label{Algorithm:GreedyConcatenation}
%	\scriptsize
	\LinesNumbered
	\KwIn{Sorted Skyline Path Set $P(s,h)$ and $P'(h,t)$}
	\KwOut{Skyline Concatenated Path $P_s\subseteq P(s,h)\oplus P'(h,t)$}
	\SetKwProg{Fn}{Procedure}{}{}
	$p_1^1\leftarrow p_1(s,h)\oplus p'_1(h,t)$; $P_s.insert(p_1^1)$\;
	$p_1^2\leftarrow p_1(s,h)\oplus p'_2(h,t)$; $Q.insert(p_1^2)$\;
	$p_2^1\leftarrow p_2(s,h)\oplus p'_1(h,t)$; $Q.insert(p_2^1)$\;
	%;//$Q$ is a priority queue sorted on weight \; 
	\While{$!Q.empty()$}
	{
	    $p_i^j\leftarrow Q.pop()$; $p_{max}\leftarrow P_s[|P_s|-1]$\;
	    \If{$c_i^j\leq c_{max}$}
	    {
	        $P_s.insert(p_i^j)$\;
	    }
	    $LazyInsert(i,j)$\;
	}
	\Return $P_s$\;
%	\begin{multicols}{2}
%	\end{multicols}
	%\Fn{$LazyInsert(i,j)$}{}
%	\textbf{Procedure} $LazyInsert(i,j)$\\
	\Fn{$LazyInsert(i,j)$}
		%\begin{multicols}{2}
	{
	    \If{$i+1<m$}
	    {
	        \If{$p_{i+1}^j$ is visited}
	        {
	            $LazyInsert(i+1,j)$\;
	            break\;
	        }
	    
	            $p_{i+1}^j\leftarrow p_{i+1}(s,h)\oplus p'_j(h,t)$\;
	            \If{$c_{i+1}^j>c_{max}$}
	            {
	                $LazyInsert(i+1,j)$\;
	            }
	            \Else{$Q.insert(p_{i+1}^j)$}
	        
	    }
	    \If{$j+1<n$}
	    { 
	        \If{$p_i^{j+1}$ is visited}
	        {
	            $LazyInsert(i,j+1)$\;
	            break\;
	        }
	            $p_i^{j+1}\leftarrow p_{i}(s,h)\oplus p'_{j+1}(h,t)$\;
	            \If{$c_{i}^{j+1}>c_{max}$}
	            {
	                $LazyInsert(i,j+1)$\;
	            }
	            \Else{$Q.insert(p_{i}^{j+1})$}
	        
	    }
	   }
	  %  \end{multicols}
\end{algorithm}

Based on Theorem \ref{theorem:GreedySkyline}, we propose an efficient skyline path concatenation as shown in Algorithm \ref{Algorithm:GreedyConcatenation}. As proved in Property \ref{lemma:FirstandLast}, $p_1^1=p_1(s,h)\oplus p'_1(h,t)$ is definitely a skyline result and we put it into the result set $P_s$. Because the next path with the smallest weight is either $p_1^2$ or $p_2^1$, we put them in a priority queue $Q$ as candidates. Then we iterate the candidate queue until it is empty. Each time we retrieve the top path $p_i^j$ in $Q$ and compare it with the last path in $P_s$. If $p_i^j$ is not dominated, we put $p_i^j$ into $P_s$. Then we generate the next two candidates $p_{i+1}^j$ and $p_i^{j+1}$ if they exist. However, we do not insert them into $Q$ directly as they could be dominated by the last path in $P_s$ already, or they could have already been generated by other paths. Therefore, we use the \textit{LazyInsert} to identify the next set of candidates that are not dominated (Line 16 and 25) and not generated (Line 12 and 21) recursively. In this way, the pruning power of the current skyline path is maximized, and we could have a $Q$ of smaller size and eliminate the redundancy.

\textit{Correctness:} Firstly, because we generate and test both $p_{i+1}^j$ and $p_i^{j+1}$ of all $p_i^j$ even if some of them are not inserted into $Q$, we have produced all $mn$ candidates. Secondly, because $Q$ is sorted by weight, and both $p_{i+1}^j$ and $p_i^{j+1}$ have larger weight than $p_i^j$, we visit the $p_i^j$ from $Q$ in the weight-increasing order. Therefore, according to Theorem \ref{theorem:GreedySkyline}, Algorithm \ref{Algorithm:GreedyConcatenation} returns a set of concatenated skyline paths.

\textit{Complexity:} The maximum depth of $Q$ is $log(mn)$, so it takes $O(mn\log(mn))$ to pop out all the elements. Since we just check the dominant relation once for each connecting path, the skyline validation time for each path is constant. Although the worst case time complexity of concatenating all the skyline paths via $h$ is still $O(mn\log(mn))$, the \textit{LazyInsert} pruning can keep the size of $Q$ small in practice.

\subsubsection{Multi-Dimensional Skyline Path Concatenation}
Although the previous observations, theorems, and algorithms efficiently reduce the complexity of skyline path validation in 2D road networks, they take the advantages of the 2D skyline path cost being natural ranking when sorted in weight-increasing order, which does not hold in the higher dimensional space. For instance, in algorithm \ref{Algorithm:GreedyConcatenation}, when a new path $p_i^j$ is concatenated, its weight must be no less than all the skyline paths in $P_s$. Thus, we only need to compare the cost of $p_i^j$ with the cost of the ranked last path in $P_s$, which has the minimum cost. However, this skyline validation method will be not feasible when each path has multiple costs, as we can not guarantee the ordering of costs for a weight-sorting multi-skyline path set. For example in Table \ref{table:EG_skylines}, it has five 4D skyline paths. Even though we sort them by the weigh-increasing order, the remaining three costs have no order at all. To reduce the comparison number of skyline validation in paths with multiple costs, we first investigate into the order properties of the multi-dimensional skyline paths:

\begin{definition}[Cost Rank]
	\label{def:Cost_Rank}
	Given a set of skyline paths in a $n$-dimensional graph, we sort them in each of the cost $c_i$ in the increasing order. Then $r_i[p_k]$ denotes the rank of path $p_k$ in cost $c_i$. 
	\end{definition}

For example, in Table \ref{table:EG_skylines}, we have a set of skyline paths $\{p_1, p_2, p_3, p_4, p_5\}$. On each dimension, we rank the paths with increasing order, which is shown in Figure \ref{fig:MultiSkylineValidation} (ignore $p_6$ at this stage). Then $r_1[p_4]=1$ because $p_4$ has the smallest value in cost $c_1$. Similarly, $r_2[p_5]=2$ and $r_3[p_4]=3$. For each path $p$, it has a rank in each dimension, and the following \textit{distinct criterion} is the one we are most interested in:

\begin{definition}[Distinct Criterion]
	Among the $n$ rankings of a path $p$ in $G$, its \textit{Distinct Criterion} is the one with the highest rank denoted as $c_{dc}[p]$.
\end{definition}

\begin{table}[]
\centering
\caption{An Example of 4D Skyline Paths}
\label{table:EG_skylines}
%\scriptsize
%\scalebox{1.1}{
\begin{tabular}{|l|l|l|l|l|}
\hline
$P_c$ & $w$ & $c_1$ & $c_2$ & $c_3$ \\ \hline
$p_1$ & 2   & 4     & 10    & 2     \\ \hline
$p_2$ & 2   & 4     & 9     & 3     \\ \hline
$p_3$ & 3   & 7     & 7     & 10    \\ \hline
$p_4$ & 5   & 3     & 9     & 3     \\ \hline
$p_5$ & 6   & 6     & 8     & 4     \\ \hline
\end{tabular}
%}
\end{table}

For example, $p_3$'s distance criterion is $c_2$ as it is the smallest compared its ranking in other criteria: $r_w[p_3]=3$, $r_1[p_3]=5$, and $r_3[p_3]=5$. Similarly, the distinct criterion of $p_1$ is $w$, $c_{dc}[p_2]$ is $c_3$, $c_{dc}[p_4]$ is $c_1$, and $c_{dc}[p_5]$ is $c_2$. Then we utilize the distinct criterion to further investigate the dominant relation in the high-dimensional space:

\begin{lemma}
\label{lemma:Multi-Non-Dominate}
%In $n$-dimensional graph, suppose the concatenated paths are coming in the weight-increasing order. We have $k$ skyline paths and a new path $p_{k+1}$ comes with weight larger than $w_k$. $\forall c_i \in \overline{C}$, $i\leq n$, we rank $k+1$ paths in cost-increasing order and obtain the rank of $p_{k+1}$, which is denoted as $r_i(k+1)$. $\exists o \in [1,n)$ such that $r_o(k+1)$ is minimum compared with arbitrary $r_i(k+1)$. The corresponding criterion $c_o$ is named as distinct criterion. Suppose $m$ concatenated skyline paths, $m \leq k$, have higher or equal ranking than $r_o(k+1)$ on $c_o$. If all the $m$ paths cannot dominate $p_{k+1}$, then $p_{k+1}$ must be a skyline path. 
	Suppose we already have $k$ skyline paths, and we aim to test a new path $p_{k+1}$ with weight no-smaller than the existing ones. After inserting into the existing criteria sorting lists, we can obtain the ranks $m_i$ of $p_{k+1}$ in each criteria $c_i$. Suppose $p_{k+1}$ is ranked last among all the paths with the same value. Then all the paths $P_i^{>m}$ that have ranking lower than $m_i$ have larger value than $p_{k+1}$ in criteria $c_i$, and they cannot dominate $p_{k+1}$.
\end{lemma}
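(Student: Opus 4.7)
The plan is to prove this by a straightforward contradiction/definition argument, since the statement is essentially an unrolling of what the sorted rank position means when ties are broken by placing the new element last.

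First, I would unpack the setup. The existing $k$ skyline paths are already maintained in $n$ sorted lists, one per criterion, in increasing order of $c_i$ value. Inserting $p_{k+1}$ into each list at rank $m_i$ (with the convention that ties are resolved by putting $p_{k+1}$ at the larger index) means that any path sitting strictly beyond position $m_i$ must have a criterion value that is strictly greater than $c_i(p_{k+1})$ — not merely $\geq$, because equal-valued paths were absorbed into positions $\leq m_i$ by the tie-breaking rule. This observation is the entire content of the lemma's claim about the values in $P_i^{>m}$; I would state it as a one-line consequence of how a stable last-insertion rank is defined.

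Next, I would translate this value comparison into a non-dominance conclusion using Definition \ref{def:Multi-Dominance}. For any $q \in P_i^{>m}$ to dominate $p_{k+1}$, the multi-dimensional dominance definition requires $c_i(q) \leq c_i(p_{k+1})$ in every criterion including the $i$-th. But we have just shown $c_i(q) > c_i(p_{k+1})$, contradicting that requirement. Hence no element of $P_i^{>m}$ can dominate $p_{k+1}$, which is exactly the claim.

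There is essentially no hard step here; the only subtlety worth flagging in the write-up is the tie-breaking convention. If $p_{k+1}$ were placed at the \emph{first} tie-position rather than the last, the conclusion would weaken to $c_i(q) \geq c_i(p_{k+1})$ for $q \in P_i^{>m}$, which is not strong enough to rule out domination. So I would make this convention explicit at the outset and then the proof reduces to two short sentences: one defining the strict inequality given by the convention, and one invoking Definition \ref{def:Multi-Dominance} to conclude non-dominance. The lemma will then serve as the pruning backbone used in the subsequent multi-dimensional concatenation algorithm, exactly paralleling how Theorem \ref{theorem:GreedySkyline} does so in the 2D case.
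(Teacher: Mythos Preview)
Your proposal is correct and follows essentially the same approach as the paper: the paper's own proof is a single sentence (``Because the dominance relation requires smaller value, they cannot dominate $p_{k+1}$''), and your argument is simply a more careful unpacking of that sentence, with the added (and useful) explicit treatment of the tie-breaking convention that the paper's statement mentions but its proof glosses over.
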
 
\begin{proof}
	Because the dominance relation requires smaller value, they cannot dominate $p_{k+1}$.
\end{proof}

For example in Figure \ref{fig:MultiSkylineValidation}, we insert a new path $p_6=(8,5,8,9)$. As $p_6$ has greater weight than the five existing paths, it must be ranked last on dimension $w$. Then, we find the ranks of $p_6$ on other dimensions, which are highlighted in red. $p_6$ is ranked 4, 3 and 5 in $c_1$, $c_2$ and $c_3$. Then along criteria $c_2$, paths $P_c^{>3}=\{p_2,p_4,p_1\}$ have lower ranking and larger values than $p_4$, so they cannot dominate $p_6$. We can have the following lemma by taking complementary set of Lemma \ref{lemma:Multi-Non-Dominate}.

\begin{lemma}
	\label{lemma:Multi-Dominate}
	Following Lemma \ref{lemma:Multi-Non-Dominate}, all the paths $P_i^{\leq m}$ that have ranking higher in criteria $c_1$ than $p_{k+1}$ are possible to dominate $p_{k+1}$.
\end{lemma}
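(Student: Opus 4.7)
The plan is to derive Lemma \ref{lemma:Multi-Dominate} as the direct complementary statement of Lemma \ref{lemma:Multi-Non-Dominate}. First I would observe that, for any fixed criterion $c_i$, the existing skyline paths split into exactly two disjoint subsets relative to the new path $p_{k+1}$: the set $P_i^{>m}$ of paths with rank strictly lower than $m_i$ in $c_i$, and the complementary set $P_i^{\leq m}$ of paths with rank at or above $m_i$. Because ranking on $c_i$ is a total order of the values, these two subsets together exhaust all candidates for dominators of $p_{k+1}$.

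Next I would invoke Lemma \ref{lemma:Multi-Non-Dominate} to discard every path in $P_i^{>m}$: each such path has strictly larger value than $p_{k+1}$ in criterion $c_i$ and hence already violates the first condition of Definition \ref{def:Multi-Dominance}. Consequently, any path that still has a chance of dominating $p_{k+1}$ must live in $P_i^{\leq m}$, which gives the conclusion of the lemma in its ``pruning'' form.

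The third step, which I expect to be the main obstacle, is to make the wording ``possible to dominate'' precise, since the statement is loose. The lemma is \emph{not} claiming that every path in $P_i^{\leq m}$ actually dominates $p_{k+1}$; it only claims that dominance cannot be excluded from the ranking along $c_i$ alone. Concretely, for $p \in P_i^{\leq m}$ we only know $c_i(p) \le c_i(p_{k+1})$, and the remaining $n-1$ inequalities on the other criteria still have to be verified. I would therefore add a short remark, with a pointer to the running example in Figure \ref{fig:MultiSkylineValidation}, showing that both outcomes---actual dominance and non-dominance---are realisable within $P_i^{\leq m}$. After this clarification the proof collapses to the one-line complement argument: $P_i^{\leq m}$ is exactly the complement (within the skyline set) of the paths ruled out by Lemma \ref{lemma:Multi-Non-Dominate}, so it is precisely the set of remaining candidate dominators of $p_{k+1}$.
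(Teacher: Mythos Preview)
Your proposal is correct and follows exactly the paper's approach: the paper justifies this lemma in a single sentence, stating it is obtained ``by taking complementary set of Lemma \ref{lemma:Multi-Non-Dominate},'' followed by the example from Figure \ref{fig:MultiSkylineValidation}. Your write-up is a more careful and explicit version of that same complement argument, including the clarification of ``possible to dominate,'' but there is no substantive difference in strategy.
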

2
For example in Figure \ref{fig:MultiSkylineValidation}, $P_1^{\leq 4}=\{p_4,p_1,p_2\}$ can possibly dominate because they rank higher than $p_6$ along $c_1$.  Now we have the sufficient and necessary requirement for $p_{k+1}$ being a new skyline path:
\begin{theorem}
	\label{theorem:MultiGreedySkyline}
	Following Lemma \ref{lemma:Multi-Dominate}, $p_{k+1}$ is a skyline path if and only if none of the $m-1$ higher ranking paths $P_i^{\leq m}$ in any criteria $c_i$ can dominate $p_{k+1}$.
\end{theorem}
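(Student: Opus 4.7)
The plan is to prove the biconditional in two directions, and the work reduces almost entirely to bookkeeping on top of Lemma \ref{lemma:Multi-Non-Dominate} and Lemma \ref{lemma:Multi-Dominate}. For the necessity direction, I would appeal directly to the definition of a skyline path: if $p_{k+1}$ is a skyline path with respect to $\{p_1,\dots,p_k\}$, then no $p_j$ with $j\le k$ dominates it, and since every $P_i^{\le m}$ is a subset of $\{p_1,\dots,p_k\}$, no element of $P_i^{\le m}$ dominates $p_{k+1}$ either. This direction is essentially a one-liner.

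For sufficiency, I would argue by contrapositive. Suppose some existing skyline path $p\in\{p_1,\dots,p_k\}$ dominates $p_{k+1}$. By the definition of dominance (Definition \ref{def:Multi-Dominance}), we have $w(p)\le w(p_{k+1})$ and $c_i(p)\le c_i(p_{k+1})$ for every criterion $c_i$, with at least one strict inequality. Focus on any single criterion $c_i$: because $c_i(p)\le c_i(p_{k+1})$, $p$ cannot sit strictly below $p_{k+1}$ in the $c_i$-increasing ordering, so $p$ has rank at most $m_i$ in that list, i.e., $p\in P_i^{\le m}$. Hence the existence of a dominator forces a witness inside $P_i^{\le m}$ for every $i$, contradicting the assumption. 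This shows the condition of the theorem (checked even along a single criterion of our choosing) is sufficient to conclude $p_{k+1}$ is a skyline path; picking the criterion in which $p_{k+1}$ has the smallest rank $m_i$ — its distinct criterion $c_{dc}[p_{k+1}]$ — minimizes the number of candidates that have to be tested and yields the algorithmic speedup.

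The one delicate point, and where I would spend most of the writing care, is the tie-breaking rule. Dominance allows equality $c_i(p)=c_i(p_{k+1})$ in some dimensions, so if ranks were assigned arbitrarily among equal-value paths, a dominator $p$ could end up in $P_i^{>m}$ and be missed. The tie-breaking convention introduced in Lemma \ref{lemma:Multi-Non-Dominate} — placing $p_{k+1}$ last among paths with the same $c_i$-value — is exactly what guarantees that any tied dominator $p$ stays at rank $\le m_i$ and therefore lies in $P_i^{\le m}$. I would state this explicitly in the proof to make the inclusion $p\in P_i^{\le m}$ watertight. With that subtlety handled, the rest of the argument is a direct corollary of Lemma \ref{lemma:Multi-Non-Dominate} (which excludes $P_i^{>m}$ from ever dominating) and Lemma \ref{lemma:Multi-Dominate} (which identifies $P_i^{\le m}$ as the only candidates), and no further machinery is required.
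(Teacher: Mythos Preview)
Your proposal is correct and follows essentially the same approach as the paper: the paper's proof partitions $\{p_1,\dots,p_k\}$ into $P_i^{\le m}$ and $P_i^{>m}$, observes (via Lemma~\ref{lemma:Multi-Non-Dominate}) that $P_i^{>m}$ cannot dominate, and concludes that checking $P_i^{\le m}$ suffices --- which is logically equivalent to your contrapositive argument. Your explicit treatment of the tie-breaking convention is more careful than the paper's own proof, which relies on that convention implicitly from Lemma~\ref{lemma:Multi-Non-Dominate} without restating it.
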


\begin{proof}
	Because the paths $P_i^{\leq m}$ have either equal or smaller values than $p_{k+1}$ along criteria $c_i$, then the other paths $P_i^{>m}$ have larger values and cannot dominate $p_{k+1}$. Therefore, if none of the paths in $P_i^{\leq m}$ can dominate $p_{k+1}$, then no existing skyline path can dominate $p_{k+1}$, so $p_{k+1}$ is a skyline path.
%We divide the $k$ skyline paths into two set $S_1$ and $S_2$. $S_1$ contains the $m$ skyline paths that have higher or equal ranking than $p_{k+1}$ on dimension $c_{dc}[p_{k+1}]$. $S_2$ contains the remaining skyline paths.  Compared with all the paths in $S_2$, $p_{k+1}$ has largest value on dimension $c_o$ and highest value on dimension $w$. Thereby, $p_{k+1}$ is a skyline path in set $S_2$. As $p_{k+1}$ has larger weight than the $k$ skyline paths, $p_{k+1}$ can not dominated each path in $S_1$. According to definition \ref{def:Multi-SkylinePath}, if no path in $S_1$ can dominate $p_{k+1}$, it must be a skyline paths in $S_1$. Due to $p_{k+1}$ is a skyline path in both $S_1$ and $S_2$, it is a skyline path in the current concatenated path set. Also, we have proved that $p_{k+1}$ cannot be dominated by latter coming paths in Theorem \ref{theorem:GreedySkyline}. $p_{k+1}$ thus must be a final skyline path.
\end{proof}

\begin{figure}[ht]
    \centering
    \includegraphics[scale=0.45]{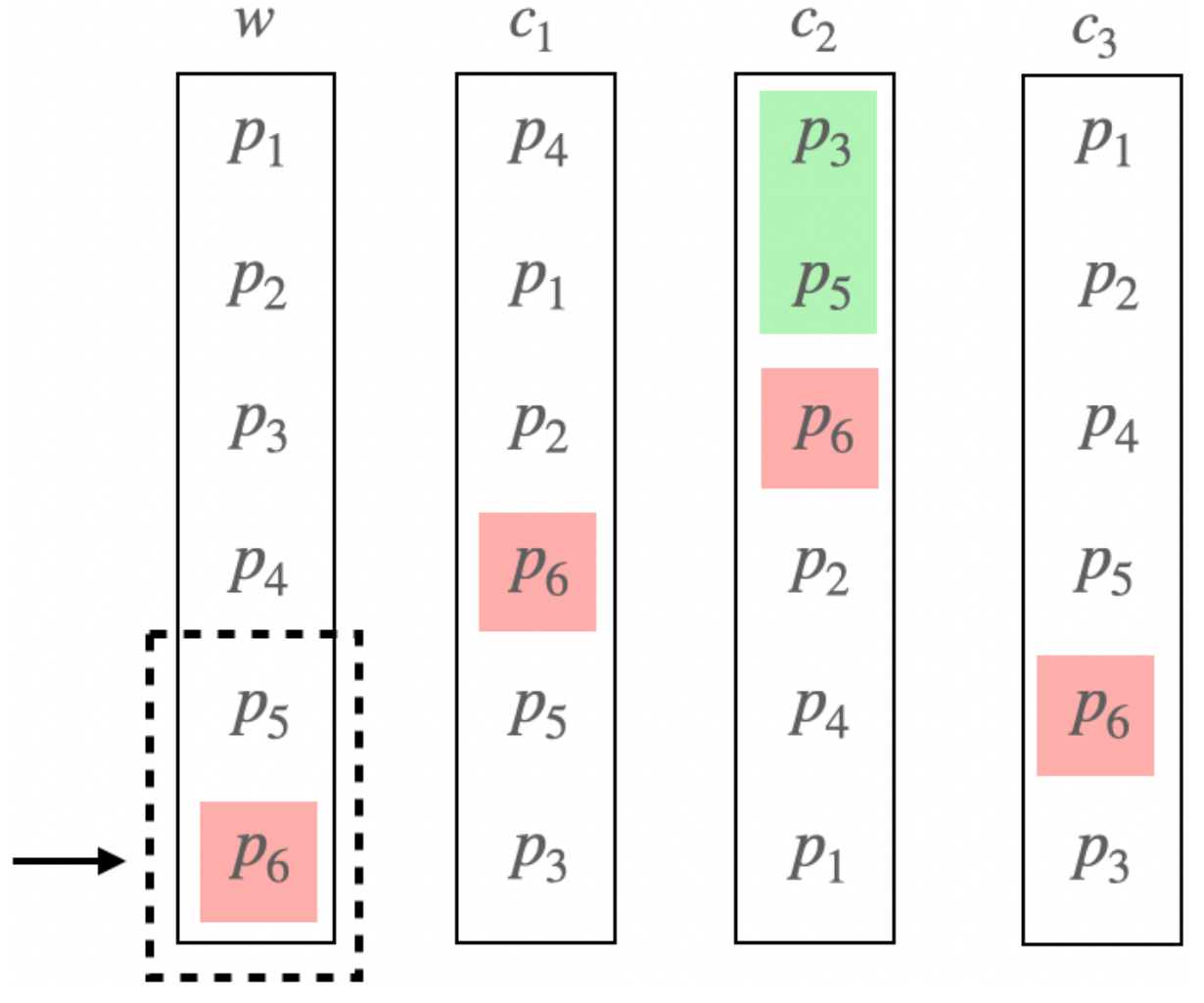}
    \caption{Next Skyline Path Validation in Multi-dimensions. New $p_6=(8,5,8,9)$.}
    \label{fig:MultiSkylineValidation}
\end{figure}

%Suppose we have the next concatenated path $p_6={8, 5, 8, 9}$. The distinct criterion is $c_2$. We need to identify whether $p_6$ can be dominated by $p_3$ or $p_5$. As $p_6$ cannot be dominated, $p_6$ is a skyline path.

Although any criterion can work for Theorem \ref{theorem:MultiGreedySkyline}, not all of them have the same quality. For instance, along the dimension $w$, $p_{k+1}$ is ranked last and has $k$ possible dominating paths, so we have to compare $p_{k+1}$ with all $k$ previous paths. Therefore, using the criterion with the least number of possible dominating path would be beneficial, and the distinct criterion $c_{dc}[p]$ is guaranteed to have the smallest $P_i^{\leq m}$ as $p_{k+1}$ ranks highest along it. Then we have the following tighter condition:

\begin{theorem}
	\label{theorem:MultiGreedySkylineDC}
	$p_{k+1}$ is a skyline path if and only if none of the $m-1$ higher ranking paths $P_i^{\leq m}$ in the distinct criteria $c_{dc}[p]$ can dominate $p_{k+1}$.
\end{theorem}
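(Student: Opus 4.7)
The plan is to derive Theorem~\ref{theorem:MultiGreedySkylineDC} as a direct specialization of Theorem~\ref{theorem:MultiGreedySkyline} applied with the specific choice $i = dc[p_{k+1}]$. The correctness of restricting the dominance check to a single criterion is already carried by Theorem~\ref{theorem:MultiGreedySkyline}; the role of the distinct criterion is simply that it yields the tightest candidate set $P_{dc}^{\leq m}$, which by the definition of $c_{dc}[p_{k+1}]$ has the smallest size among all the $P_i^{\leq m}$. So the content of the new theorem is not a new correctness result but the observation that the sharpest instance of the previous criterion still suffices.

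First I would dispatch the ``only if'' direction trivially. If $p_{k+1}$ is a skyline path, then by Definition~\ref{def:Multi-SkylinePath} no existing skyline path dominates it, and in particular none of the $m-1$ higher-ranking paths along $c_{dc}[p_{k+1}]$ does.

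For the ``if'' direction I would argue by contrapositive. Suppose some existing skyline path $p$ dominates $p_{k+1}$. By Definition~\ref{def:Multi-Dominance}, $p$'s value is no larger than $p_{k+1}$'s value along every criterion, including $c_{dc}[p_{k+1}]$. Combined with the tie-breaking convention inherited from Lemma~\ref{lemma:Multi-Non-Dominate} (in which $p_{k+1}$ is placed last among paths with the same value), this forces $p$'s rank along $c_{dc}[p_{k+1}]$ to be strictly smaller than $m_{dc}$, so $p \in P_{dc}^{\leq m}$. Hence if no path in $P_{dc}^{\leq m}$ dominates $p_{k+1}$, then no existing skyline path dominates $p_{k+1}$, i.e., $p_{k+1}$ is itself a skyline path.

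There is no hard step in this argument; the statement is essentially a corollary of Theorem~\ref{theorem:MultiGreedySkyline} once one observes that every potential dominator of $p_{k+1}$ must lie in $\bigcap_{i} P_i^{\leq m}$, and \emph{a fortiori} in $P_{dc}^{\leq m}$. The only point that demands some care — really bookkeeping rather than a conceptual obstacle — is keeping the tie-breaking convention consistent across all criteria, so that $P_{dc}^{\leq m}$ correctly excludes $p_{k+1}$ itself while still capturing every candidate dominator that might have equal coordinates with $p_{k+1}$ in the distinct criterion.
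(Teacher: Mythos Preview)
Your proposal is correct and matches the paper's approach: the paper does not give a separate formal proof for Theorem~\ref{theorem:MultiGreedySkylineDC} but instead derives it in the preceding paragraph as the instance of Theorem~\ref{theorem:MultiGreedySkyline} obtained by choosing $i = c_{dc}[p_{k+1}]$, noting that this choice minimizes $|P_i^{\leq m}|$. Your explicit contrapositive argument and attention to the tie-breaking convention fill in exactly the details the paper leaves implicit.
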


%However, in Theorem \ref{theorem:MultiGreedySkylineDC}, if the next concatenated path $p_{k+1}$ has the same weight with $p_k$, we need to do skyline validation between $p_{k+1}$ and each path that has the same weight with it. If $p_{k+1}$ is dominated, we concatenate next path. If $p_{k+1}$ can dominate any paths having the same weight with it, we remove the dominated paths from the concatenated set. Afterwards, we use Theorem \ref{theorem:MultiGreedySkyline} to further do the skyline validation for $p_{k+1}$. For example, suppose we change $p_6$ to ${6,5,8,9}$. It has the same weight with $p_5$. We compare $p_5$ and $p_6$, as they can not dominate each other, we further check the highest rank of $p_6$ as the same as the previous discussion. 

The complexity of using Theorem \ref{theorem:MultiGreedySkylineDC} is $O((|P_{r_{dc}[p_{k+1}]}^{\leq m}|+log k)\times n)$. It requires obtaining all the ranks first and comparing with all the criteria. One way to improve the efficiency is converting the value comparison to the dominating path set intersection:

\begin{theorem}
	\label{theorem:MultiGreedySkylineIntersection}
	Suppose we have obtained all the $P_i^{\leq m_i}$, then	$p_{k+1}$ is a skyline path $\Leftrightarrow\bigcap_{i\in[1,n]} P_i^{\leq m_i} = \phi$
	% none of the $m-1$ higher ranking paths $P_i^{\leq m}$ in the distinct criteria $c_{dc}[p]$ can dominate $p_{k+1}$.
\end{theorem}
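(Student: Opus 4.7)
The plan is to prove both directions of the equivalence by unpacking what membership in each $P_i^{\leq m_i}$ means coordinate-wise, and then aligning it with the dominance definition. Specifically, because $p_{k+1}$ is ranked last among paths tied in value along criterion $c_i$, an existing path $p^*$ satisfies $p^* \in P_i^{\leq m_i}$ if and only if $c_i(p^*) \leq c_i(p_{k+1})$. So intersecting across all $n$ criteria singles out exactly the existing paths that are componentwise no larger than $p_{k+1}$, which by Definition \ref{def:Multi-Dominance} is the complete set of potential dominators of $p_{k+1}$.

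For the direction $\bigcap_{i} P_i^{\leq m_i} = \emptyset \Rightarrow p_{k+1}$ is a skyline path, I would argue by contrapositive: if some existing skyline path $p^*$ dominates $p_{k+1}$, then $c_i(p^*) \leq c_i(p_{k+1})$ for every $i$, so $p^* \in P_i^{\leq m_i}$ for every $i$, placing $p^*$ inside the intersection. Combined with Theorem \ref{theorem:MultiGreedySkyline} (which reduces non-domination checking to the higher-ranking paths in a chosen criterion), this gives the implication directly.

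For the converse, I would proceed analogously by contrapositive: any $p^* \in \bigcap_i P_i^{\leq m_i}$ satisfies $c_i(p^*) \leq c_i(p_{k+1})$ for all $i$, so $p^*$ weakly dominates $p_{k+1}$. The subtle step is to upgrade weak to strict domination: I would invoke the fact that the skyline set built so far contains no redundant duplicates (two distinct skyline paths never share identical cost vectors, otherwise one is discarded during concatenation) and that $p_{k+1}$ is a fresh candidate not already in the set, so $p^* \neq p_{k+1}$ implies strict inequality in at least one coordinate, hence true dominance.

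The main obstacle I expect is precisely this tie-breaking bookkeeping at the boundary $m_i$: the rank-ordering convention ``$p_{k+1}$ ranked last among ties'' is what makes the biconditional clean, because it guarantees that any existing path tying $p_{k+1}$ in some criterion still falls into $P_i^{\leq m_i}$ rather than being excluded. Without this convention, a path that agrees with $p_{k+1}$ in one coordinate but strictly dominates elsewhere could slip out of the intersection, breaking the ``only if'' direction. Once this is pinned down, the rest is a direct reformulation of Theorem \ref{theorem:MultiGreedySkyline} as a set-intersection test.
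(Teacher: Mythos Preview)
Your proposal is correct and follows the same core idea as the paper: a path lies in $\bigcap_i P_i^{\leq m_i}$ exactly when it is componentwise no larger than $p_{k+1}$, hence a dominator. The paper's own proof is a single sentence handling only the direction ``nonempty intersection $\Rightarrow$ not skyline''; you cover both directions and additionally make explicit the tie-breaking convention and the strict-versus-weak dominance subtlety that the paper leaves implicit.
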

\begin{proof}
	Suppose $\exists p_i \in \bigcap_{i\in[1,n]} P_i^{\leq m_i}$, then $p_i$ has higher rank than $p_{k+1}$ in all criteria, so $p_i$ dominates $p_{k+1}$ and $p_{k+1}$ is not a skyline path.
\end{proof}

As the intersection result cannot become larger, we can compute the result in the $|P_i^{\leq m_i}|$ increasing order to keep reducing the intermediate result size. This order can also be used in the value testing method. To further reduce the sorting complexity, we apply the \textit{lazy comparison} to only continue the sorting on the criteria when needed. We first introduce the hierarchical approximate rank as follows:

\begin{definition}[Hierarchical Approximate Rank]
	\label{def:HA-Rank}
	Given a set of $m$ sorted numbers $S$ and a query value $q$, $q$'s rank binary search at level $h_0$ is $HA_0(q)=m/4$ if $q<S[m/2]$ and $HA_0(q)=m/2$ if $q\geq S[m/2]$. Suppose we are at level $h_i$ and the current \textit{HA-Rank} is $HA_i(q)$, then for the next level $h_{i+1}$, we have the following three scenarios: 1) $HA_{i+1}(q)=(2HA_i(q)-1)/2$ if $q<S[HA_i(q)]$; 2) $HA_{i+1}(q)=HA_i(q)$ if $q\geq S[HA_i(q)]$ and $q< S[HA_{i-1}(q)]$; and 3) $HA_{i+1}(q)=(2HA_i(q)+1)/2$ if $q\geq S[HA_i(q)]$ and $q\geq S[HA_{i-1}(q)]$. Regardless of the level number, the \textit{HA-Rank} are compared with their values.
\end{definition}

The intuition behind it is that we set the \textit{HA-Rank} optimistically. When $q$ is smaller than the value at the current  \textit{HA-Rank}, we set it to a smaller \textit{HA-Rank}; when $q$ is not smaller than the value at the current \textit{HA-Rank} but smaller than the previous one, we keep the current \textit{HA-Rank}; when $q$ is not smaller than the values at the current \textit{HA-Rank} and the previous one, we set it to a larger rank.

Now in our case of $k$ $n$-dimensional skyline paths, we sort them using the \textit{HA-Rank} with query value $p_{k+1}$. Specifically, each time we update the \textit{HA-Rank} of the criterion that has the highest \textit{HA-Rank} value until one criterion has reached the actual data. Then this criterion is $p_{k+1}$'s distinct criterion, and we have also got $p_{k+1}$'s rank on it. We can keep sorting to get the $2^{nd}$, $3^{rd}$ till the last distinct criterion if keeps searching on. Algorithm \ref{Algorithm:Multi-Validation} shows the details of the validation using the lazy comparison with \textit{HA-Rank}.

\begin{algorithm}[ht]
	\caption{Multi-Dimension Skyline Path Intersection Validation}
	\label{Algorithm:Multi-Validation}
%	\scriptsize
	\LinesNumbered
	\KwIn{$n$-Dimensional Skyline Paths $\{p_1,\dots,p_k\}$, new path $p_{k+1}$}
	\KwOut{If $p_{k+1}$ is a skyline path}
	$(c, m)\leftarrow HA\_Sort(p_{k+1})$; \Comment{$c$ is the distinct criteria and $m$ is its corresponding rank}\;
	$P'\leftarrow P^{\leq m}_c$\;
	$counter\leftarrow 1$\;
	\While{$P'\neq \phi$ and $counter< n$}
	{
		$(c', m')\leftarrow HA\_Sort(p_{k+1})$\;
		$P' \leftarrow P' \bigcap P^{\leq m'}_{c'}$\;
		$counter\leftarrow counter+1$\;
	}
	
	\If{$P'\neq \phi$}
	{
		\Return{$p_{k+1}$ is a skyline path}
	}
	\Else
	{
		\Return{$p_{k+1}$ is not a skyline path}
	}

\end{algorithm}

However, the efficiency of Algorithm \ref{Algorithm:Multi-Validation} depends on the number of the paths having higher or equal ranking than the new coming path $p_{k+1}$. If $p_{k+1}$ ranks very low on distinct criterion, then we may compare a large amount of skyline paths, which is still time consuming. Thus, we provide a pruning technique to identify the non-skyline earlier with the help of the following theorem.

\begin{theorem}
	\label{theorem: MultiGreedyPruning}
	Suppose we have obtained all the $P_i^{> m_i}$, then $p_{k+1}$ is not a skyline path $\Leftrightarrow |\bigcup_{i\in[1,n]} P_i^{\leq m_i}| < k$.
	%In $n$-dimensional graph, suppose the concatenated paths are coming in the weight-increasing order. We have $k$ skyline paths and a new path $p_{k+1}$ comes with weight larger than $w_k$. We rank $k+1$ paths as the same as Theorem \ref{theorem:MultiGreedySkyline}. Compute the number of paths that has lower ranking than $p_{k+1}$ on at least one criterion. If the number less than $k$, $p_{k+1}$ is not a skyline path.
\end{theorem}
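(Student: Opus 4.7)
The plan is to derive the biconditional from Theorem~6 by a set-theoretic manipulation over the universe $U = \{p_1,\ldots,p_k\}$ of the $k$ existing skyline paths. Since Theorem~6 already states that $p_{k+1}$ is not a skyline iff $\bigcap_i P_i^{\leq m_i} \neq \emptyset$, it suffices to establish the purely combinatorial equivalence
\[
\bigcap_{i\in[1,n]} P_i^{\leq m_i} \neq \emptyset \iff \Big|\bigcup_{i\in[1,n]} P_i^{\leq m_i}\Big| < k.
\]

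First I would apply De~Morgan inside $U$, which yields $U \setminus \bigcup_i P_i^{\leq m_i} = \bigcap_i P_i^{>m_i}$, so the right-hand condition is equivalent to the existence of some $q \in U$ satisfying $c_i(q) > c_i(p_{k+1})$ strictly in every criterion. The non-trivial step is then to build a correspondence between a dominator witness $p^* \in \bigcap_i P_i^{\leq m_i}$ and a strictly-worse witness $q \in \bigcap_i P_i^{>m_i}$. For the forward direction, starting from a dominator $p^*$ of $p_{k+1}$ supplied by Theorem~6, I would invoke the antichain structure of $U$ (any two existing paths are pairwise skyline-incomparable) together with the weight invariant $w(p_{k+1}) \geq w(r)$ for all $r \in U$: every $r \neq p^*$ cannot be dominated by $p^*$, so $r$ strictly exceeds $p^*$ in at least one criterion, and because $p^*$ is pinned below $p_{k+1}$ in every criterion, a pigeonhole-style aggregation across the $n$ criteria should yield some $r$ that exceeds $p_{k+1}$ in all of them simultaneously, giving the desired $q$. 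The reverse direction runs symmetrically, starting from the strictly-worse witness and extracting a dominator.

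The main obstacle is precisely this bridging step: the De~Morgan move swaps $\leq$ with $>$, so the equivalence cannot be established by pure set theory alone, and I expect the proof to lean hard on the skyline antichain property of $U$ combined with the weight hypothesis to turn local incomparabilities into a global witness. I anticipate additional care around tied values, since the convention that $p_{k+1}$ is ``ranked last among equal values'' collapses strict and weak inequalities in ways that must be tracked consistently on both sides of the biconditional; if the aggregation fails on degenerate configurations, I would make the tie-breaking assumption explicit. The cleanest case split is likely to be on the distinct criterion of each candidate witness, since that is exactly the dimension the earlier machinery (Definition of distinct criterion, Theorem~\ref{theorem:MultiGreedySkylineDC}) is designed to exploit.
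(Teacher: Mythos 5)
You have put your finger on the right sticking point, but the bridge you plan to build does not exist: with the superscripts as printed, the equivalence $\bigcap_i P_i^{\leq m_i}\neq\phi \iff |\bigcup_i P_i^{\leq m_i}|<k$ is simply false, so no antichain or pigeonhole argument can close the gap. After your De Morgan step, the right-hand side asserts the existence of a path $q$ that is \emph{strictly worse} than $p_{k+1}$ in every cost, and that is genuinely independent of the existence of a dominator, even under the weight hypothesis and the pairwise-incomparability of the $k$ existing paths. For your forward direction, take $k=1$, $p_1=(w,c_1,c_2)=(1,1,1)$ and $p_{k+1}=(2,2,2)$: $p_1$ dominates $p_{k+1}$, yet $\bigcup_i P_i^{\leq m_i}=\{p_1\}$ has size $k$, so the stated condition fails. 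For the reverse, take $p_1=(1,5,5)$, $p_2=(2,1,10)$ (mutually non-dominated) and $p_{k+1}=(3,3,3)$: over the cost criteria $P_1^{\leq m_1}=\{p_2\}$ and $P_2^{\leq m_2}=\phi$, so $|\bigcup_i P_i^{\leq m_i}|=1<k$, yet neither $p_1$ nor $p_2$ dominates $p_{k+1}$, which is therefore a skyline. (If instead the union is read as including the weight criterion, $P_w^{\leq m_w}$ is always all $k$ paths and the condition is never satisfiable, which is equally untenable.) The specific step you rely on also breaks down locally: knowing that every $r\neq p^*$ beats the dominator $p^*$ in some criterion says nothing about how $r$ compares with $p_{k+1}$ in that criterion, let alone in all criteria simultaneously.

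The resolution is that the printed statement contains a sign error rather than a deep combinatorial fact: the union should be of the sets $P_i^{>m_i}$ (taken over the cost criteria; the weight criterion contributes nothing since $P_w^{>m_w}=\phi$). This is what the paper's own proof argues --- it assumes $p_{k+1}$ is a skyline while only $k-1$ paths rank strictly below it in at least one cost criterion, concludes some path ranks at or above it in every cost and hence dominates it --- and it is what the proof of the subsequent theorem (Theorem \ref{theorem: MultiGreedyPruning2}) uses when it writes $\sum_i |P_i^{>m_i}|<k$. With that correction your first instinct is exactly right and needs no reinforcement: by De Morgan, $|\bigcup_i P_i^{>m_i}|<k$ holds iff some existing path lies in $\bigcap_i P_i^{\leq m_i}$, i.e.\ is ranked at or above $p_{k+1}$ in every cost; since every existing path also has weight at most $w(p_{k+1})$, such a path dominates $p_{k+1}$, and Theorem \ref{theorem:MultiGreedySkylineIntersection} yields the biconditional. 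So the fix is to repair the superscript in the statement, not to strengthen the set-theoretic step with the antichain property.
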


\begin{proof}
	 We proof it by contraction. Suppose $p_{k+1}$ is a skyline path and there are $k-1$ skyline paths having lower ranking than $p_{k+1}$ on at least one criterion. As $p_{k+1}$ has the larger weight than these $k$ paths. We only need to consider the rankings on the criteria of costs. Except $p_{k+1}$, $k$ paths are considered in the rankings. Thereby, it must has at least one skyline path $p_i$, $\forall i \leq k$, having higher or equal ranking than $p_{k+1}$ on all the criteria of costs. According to definition \ref{def:Multi-Dominance}, $p_i$ can dominate $p_{k+1}$. Since the assumption is false, Theorem \ref{theorem: MultiGreedyPruning} is true.
\end{proof}

Therefore, when $c_{dc}[p_{k+1}]$ is very low like falling into the last quarter, we can use Theorem \ref{theorem: MultiGreedyPruning} to do the validation instead of using the intersection. We can further terminate earlier when the distinct criteria is smaller than a threshold according to the following theorem:

\begin{theorem}
	\label{theorem: MultiGreedyPruning2}
	$p_{k+1}$ is not a skyline path $\Leftrightarrow r_{dc}[p_{k+1}] > k + 1 - \dfrac{k}{n-1}=k(1-\dfrac{1}{n-1})$.
\end{theorem}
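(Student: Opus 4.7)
The plan is to reduce the claim to a simple union bound via Theorem \ref{theorem: MultiGreedyPruning}, using the distinct criterion as the global minimizer of the ranks $m_i$. First, I would invoke Theorem \ref{theorem: MultiGreedyPruning}, which says that $p_{k+1}$ is not a skyline path exactly when the union $\bigcup_{i} P_i^{>m_i}$ has size strictly less than $k$. Because $p_{k+1}$ has the largest weight among the $k+1$ paths, it sits at the last rank on the weight criterion, so $P_w^{>m_w}=\emptyset$, and the union effectively ranges only over the $n-1$ cost criteria.

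Next, I would bound each $|P_{c_i}^{>m_i}|$ in terms of $r_{dc}[p_{k+1}]$. By the definition of the distinct criterion, $r_{dc}[p_{k+1}] = \min_i m_i$, hence $m_i \geq r_{dc}[p_{k+1}]$ for every cost criterion $c_i$. Since exactly $k+1-m_i$ existing paths are ranked strictly below $p_{k+1}$ on criterion $c_i$, we obtain the pointwise estimate
\begin{equation*}
|P_{c_i}^{>m_i}| \;\leq\; (k+1) - r_{dc}[p_{k+1}].
\end{equation*}
Applying the union bound across the $n-1$ cost criteria then yields
\begin{equation*}
\Bigl|\bigcup_{i=1}^{n-1} P_{c_i}^{>m_i}\Bigr| \;\leq\; (n-1)\bigl((k+1)-r_{dc}[p_{k+1}]\bigr).
\end{equation*}

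To complete the argument for the ``not a skyline'' direction, I would solve for when this upper bound falls strictly below $k$: rearranging $(n-1)((k+1)-r_{dc})<k$ gives exactly the threshold of the theorem statement, so whenever $r_{dc}[p_{k+1}]$ exceeds that quantity we can declare $p_{k+1}$ non-skyline without further comparisons, as desired.

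The main obstacle will be the ``$\Leftarrow$'' direction of the claimed equivalence. The derivation above is purely a union-bound sufficient condition; it becomes tight only when the sets $P_{c_i}^{>m_i}$ are pairwise disjoint, which is not guaranteed in general. I would therefore scrutinize whether the converse truly holds: it may be that the theorem is best read as a one-directional pruning rule, or that a tighter combinatorial accounting (e.g.\ relating overlaps of the $P_{c_i}^{>m_i}$ to the skyline structure of the existing $k$ paths) is required to recover the equivalence. If the equivalence cannot be salvaged, I would weaken the statement to an implication, which is exactly what is needed for the early-termination pruning discussed in the surrounding text.
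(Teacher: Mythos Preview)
Your approach matches the paper's exactly: bound each $|P_i^{>m_i}|$ by $k+1-r_{dc}[p_{k+1}]$ using the minimality of the distinct-criterion rank, take the union over the $n-1$ cost criteria, and invoke Theorem~\ref{theorem: MultiGreedyPruning}. Your concern about the converse is well placed: the paper's own proof, like yours, establishes only that the rank condition \emph{implies} $p_{k+1}$ is non-skyline, so the stated ``$\Leftrightarrow$'' is really a one-way pruning rule (incidentally, the direction you flagged as unresolved is the ``$\Rightarrow$'' of the biconditional as written, not the ``$\Leftarrow$'').
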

\begin{proof}
	Because $r_{dc}[p_{k+1}]$ has the highest rank, then $p_{k+1}$'s ranks on other criteria cannot be higher. In other words, there are at most $k+1-r_{dc}[p_{k+1}]$ paths in each $P_i^{> m_i}$. Therefore, if $(k+1-r_{dc}[p_{k+1}])\times (n-1)< k$, then $\sum |P_i^{> m_i}|<k\Rightarrow |\bigcup_{i\in[1,n]} P_i^{\leq m_i}| < k$. 
\end{proof}

Although Theorem \ref{theorem: MultiGreedyPruning2} seems loose at the first glance, it becomes power when $k$ is big and $n$ is small. For instance, when $n$ is 4 and $k$ is 10,000, then when $r_{dc}[p_{k+1}]$ is larger than $10000\times(1-\dfrac{1}{4-1}=6667)$, we can prune $p_{k+1}$ safely.

\subsection{Multiple Hop Skyline Path Concatenation}
\label{subsec:Skyline_Multiple}
%In the previous section, we describe how to concatenate the skyline paths via one single hop. However, d
%During the construction and query answering, we have to determine the results from several hops at the same time. Obviously, the skyline paths from one hop could be dominated by results from others. Besides, it would be inefficient to generate all their skylines first and then reduce from them. Therefore, we provide a rectangle-based pruning method to further reduce the concatenation operations among the hops.

In the previous subsections, we have illustrated how to identify all the skyline paths from $s$ to $t$ when $h_i$ is the unique intermediate hop. However, the intersection of $s$' out-label and $t$'s in-label may contain multiple hops. In this case, the identified skyline paths via $h_i$ could be dominated by the skyline paths via other intermediate hops. Therefore, after we produce all the skyline paths via each intermediate hop, we need further validation on these path sets. Obviously, it is very time-consuming due to the large concatenation number and time complexity. Therefore, in this section, we introduce some pruning methods to screen the skyline paths via different hops before the actual concatenation to reduce total concatenation number. We first introduce the \textit{Rectangle Pruning} for \textit{2}-dimensional graph in Section \ref{subsubsec:Skyline_Multiple_Rectangle}, and then generalize it to $n$-Cube pruning for \textit{n}-dimensional graph in Section \ref{subsubsec:Skyline_Multiple_Cube}. %and return the optimal result of CSP for given CSP query ($s, t, C$). 

\begin{figure}[ht]
    \centering
    \includegraphics[width=2.3in]{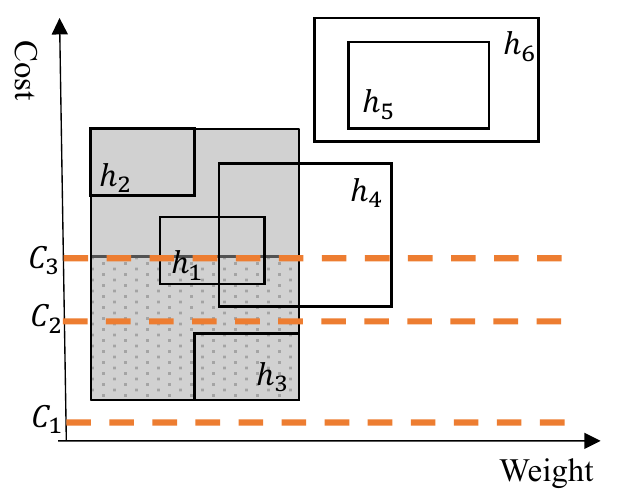}
    \caption{Multiple Skyline Rectangles and \textit{CSP} Query Rectangle (Shaded)}
    \label{fig:ConcatenationPruning}
\end{figure}

\subsubsection{Rectangle Pruning for 2D Graph}
\label{subsubsec:Skyline_Multiple_Rectangle}
Given two skyline path sets $P(s,h)$ and $P(h,t)$, according to Property \ref{lemma:FirstandLast}, $p_1^1$ and $p_m^n$ are the two end points and they can determine a rectangle $[w_1^1,w_m^n]\times [c_1^1, c_m^n]$  ($p_1^1$ as the top-left and $p_m^n$ as the bottom-right) such that all the other skyline paths must locate in it. Therefore, we can use this rectangle to approximate the skyline paths distribution of a hop without computing them. In other word, this rectangle is a \textit{MBR (Minimum Bounding Rectangle)} for all the skyline paths via $h$. For example in Figure \ref{fig:ConcatenationPruning}, suppose we have the skyline rectangles of six hops and each of them is identified with only two computations for the two endpoints. Among these end points, we find one with the smallest weight and another one with the smallest cost, then we can obtain a new \textit{hop rectangle} that guarantees the all the skyline results must be inside it (the shaded area formed by top-left of $h_2$ and the bottom-right of $h_3$). We only consider the rectangles of hops that intersect or inside \textit{query rectangle} as our final concatenated hops. For instance, as the rectangle $h_5$ and $h_6$ are outside the query rectangle, we can prune them directly. For $h_4$, we further consider the paths with weight less than $w(p_n.h_3)$. Finally, all the paths in $h_1$, $h_2$ and $h_3$ need actual concatenations. The detailed pruning procedure is described in Algorithm \ref{Algorithm:MultiHop}. The time complexity is $O(max|X_i|)$ as $max|X_i|$ is max size of $H$. 

For example, in the label assignment example of Section \ref{subsubsec::CSP2Hop_TD_Label}, we first compute three rectangles of the corresponding hops: $h_1:[6,11]\times [10,5]$; $h_5:[4,7]\times [6,5]$; $h_8:(2,3)$. Because $(2,3)$ dominates the other two rectangles, we prune $h_1$ and $h_5$ and return $(2,3)$ directly as the skyline path.

%To a set of skyline paths ($p_1.h_i,p_2.h_i,p_3.h_i,..,p_n.h_i$) via $h_i$, which are sorted by the weight in an ascending order. We can identify that $p_1$ must have the smallest weight and the largest cost, but $p_n$ has the largest weight and the smallest cost. If we map the two paths as points on the weight-cost coordinates, we can generate a bound rectangle based on the two points. The other paths must be the points that inside the rectangle.

%Hence, for $h_i$, we can represent it by the rectangle that contains all the skyline paths via it. The top left point is $p_1.h_i$, and the bottom right point is $p_n.h_i$. Therefore, instead of mapping all the skyline paths on multiple hops, we use different rectangles to represent each hop and the range of corresponding skyline paths. Suppose $p_1.h_i$ has the smallest weight, and $p_n.h_j$ has the smallest cost, we can ensure that the optimal result of CSP from $s$ to $t$ must locate in a bound rectangle which generated by $p_1.h_i$ and $p_n.h_j$, we define it as our query region. 

\begin{algorithm}[ht]
	\caption{Rectangle-based Hop Pruning}
	\label{Algorithm:MultiHop}
%	\footnotesize
	\LinesNumbered
	\KwIn{The Hop Set $H$ in \textit{LCA} of $s$ and $t$}%, Sorted Skyline Path Set $P(s,h_i)$ and $P'(h_i,t)$,$\forall h_i\in H$}
	\KwOut{Pruned hop set $H'$}
	$TL\leftarrow (\infty,0), BR\leftarrow(0,\infty)$;\algorithmiccomment{Two end-point}\\
	\ForEach{$h_i\in H$}{
	 $P(s,h_i) \leftarrow L(s)[h_i], P(h_i,t) \leftarrow L(t)[h_i]$\;
	 
	 $p_1^1 \leftarrow p_1(s,h_i) \oplus p_1(h_i,t)$;\algorithmiccomment{Top Left}\\
	 $p_m^n \gets p_m(h_i,t) \oplus p_n(h_i,t)$;\algorithmiccomment{Bottom Right}\\
	 \If{$p_1^1.w< TL.w$}
	 {
	    $TL	\leftarrow p_1^1$\;
	 }
	 \If{$p_m^n).c< BR.c$}
	 {
	    $BR\leftarrow p_m^n$\;
	 }

	 $Rectangle[h_i]\gets(p_1^1, p_m^n)$\;
	}

	$R\leftarrow (TL, BR)$\;
	\ForEach{$h_i\in H$}{
	    \If{$R$  $\cap Rectangle[h_i] \neq \phi$}{
	    $H'.insert(h_i)$
	    }
	}
	\Return $H'$\;
\end{algorithm}

%Suppose we have six intermediate hops from $h_1$ to $h_6$, the final skyline paths from $s$ to $t$ must in the dark bound rectangle which generated by $p_1.h_2$ and $p_n.h_3$. 

\subsubsection{$n$-Cube Pruning for n-Graph}
\label{subsubsec:Skyline_Multiple_Cube}

% 将hypercube通过convex hull形式表示->
%2D ->{(d_0,d_1) \in D^2| Min_di<=d_i<=Max_di} -> rectangle
%(n+1)D->{(d_0,d_1,...,d_n)\in D^{n}|Min_di<=d_i<=Max_di} ->hypercube

%Min_di 和max_di的确定
%对于一组skyline集合
%->找到在各个维度值最小的skyline集合 (最多n+1个) ->找出集合中各个维度最小的值作为下确界
%->找出集合中各个维度最大的值作为上确界

%对于每个hop,找到其上下确界确定凸包

%hypercube pruning-> 确定每个hop中,在各个维度值最小的skylines ->从而确定所有hops中各个维度值最小的skylines ->从而确定hypercube pruning的凸包

%筛选hops-> 各个hop凸包与hypercube pruning的凸包进行比较 -> 范围之外则排除该hop
%In 2D road networks, query rectangle pruning can efficiently reduce the hops considered. Thereby, the number of concatenations can be further reduced. 
Different from the 2D scenario, in the multi-dimensional road networks, such rectangles can not be found, as the Property \ref{lemma:FirstandLast} cannot hold for the concatenated skyline paths with over 2 criteria. This is caused by the fact that the path with the largest weight does not necessarily have the largest cost in each dimension. For instance in Table \ref{table:EG_skylines}, $p_5$ has the largest $w$ but it does not have the largest cost in any of the three cost dimensions. Therefore, we can only find a hop rectangle in each 2D subspace, i.e., hop rectangles based on all two criteria combinations. By intersecting the projection of these faces, we obtain a hypercube that encloses all the concatenated skyline paths. For better description of the pruning in $n$-graph, we first map the $n$-graph to a $n$-dimensional space: Given a $n$-dimensional graph $G(V,E)$ and $n$-dimensional space $D^n=\{d_0,d_1,...,d_{n-1}\}$, for arbitrary path $p$, we use dimension $d_0$ to describe the value of $w(p)$, and $d_i$ to describe the value of $c_i(p)$, where $i \in [1, n-1]$. Now we are ready to describe the $n$-Cubes.

%To enclose all the concatenated skyline paths in 3D, the querying space will become a cube including three query rectangles in different faces of the cube. We denote the cube in 3D as $3$-cube. When the dimensions keep increasing, the querying space will become a hypercube, which is difficult to describe by geometry. For instance, in a n-dimensional graph, the querying space can be a $n$-cube. In the following, we give the definition for $n$-cube.  

%\begin{definition}[Multi-Dimensional Space]
%   Given a $n$-dimensional graph $G(V,E)$ and $n$-dimensional space $D^n=\{d_0,d_1,...,d_{n-1}\}$, for arbitrary path $p$, we use dimension $d_0$ to describe the value of $w(p)$, and $d_i$ to describe the value of $c_i(p)$, where $i \in [1, n-1]$. 
%\end{definition}

\begin{definition}[Skyline Path Set Intervals]
\label{def:Skyline_Intervals}
	Given a path set $P_c$ and its corresponding space $D^n$, $\forall d_i\in D^n$, its lower-bound $Min(d_i)$ is the smallest value of all path $p\in P_c$ on dimension $d_i$. Similarly, its upper-bound $Max(d_i)$ is the largest value of all path on $d_i$. An interval $I_i=[Min(d_i), Max(d_i)]$ covers all the path values in $P_c$ on dimension $d_i$.
\end{definition}

\begin{definition}[$n$-Cube]
\label{def:n_cube}
 Given a path set $P_c$, a $n$-cube $\Gamma^n(P_c)=\{I_0,I_1,..,I_{n-1}\}$ covers all the skyline paths in $P_c$.
\end{definition}

We use the following \textit{supremum} and \textit{infimum} to assist the pruning:
\begin{definition}[$n$-Cube Supremum and Infimum]
For a $n$-cube $\Gamma^n$, its supremum $Sup[\Gamma^n(P_c)]$=$\{Max(d_0),$ $...,Max(d_{n-1})\}$ is a set that contains the upper-bounds of $P_c$ on each dimension, and its infimum $Inf[\Gamma^n(P_c)]=\{Min(d_0),...,Min(d_{n-1})\}$ is a set that contains the lower-bounds of $P_c$.
\end{definition}

For instance in Table $\ref{table:EG_skylines}$, there are five skyline paths in $P_c$. We use $d_0$ to denote skyline path values on criterion $w$, and $d_1$ to $d_3$ to denote the values on the criteria from $c_1$ to $c_3$. As the smallest value on $d_0$ is 2 and the largest on is 6, we can find a set $I_0=[2,6]$ to contain the five skyline path values on $d_0$. By the same token, we obtain $I_1=[3,7]$, $I_2=[7,10]$, and $I_3=[2,10]$. Therefore, we can identify a 4-cube $\Gamma^4(P_c)=\{[2,6], [3,7], [7,10], [2,10]\}$ for the example skyline paths. Note that this 4-cube can cover all skyline path values in table $\ref{table:EG_skylines}$. In addition, $Inf[\Gamma^4(P_c)]=\{2,3,7,2\}$ and $Sup[\Gamma^4(P_c)]=\{6,7,10,10\}$. Naturally, we have the following property:

\begin{property}
\label{property:n-cube}
    Given a set of $n$-dimensional skyline paths $P_c$, its $n$-cube $\Gamma^n(P_c)$ contains all the paths in $P_c$.
\end{property}

Similar to the 2D scenario, we first create the $n$-cube for each hop to approximate the concatenation result space. According to definition \ref{def:n_cube}, to identify a $n$-cube of hop $h_k$, we need to computing both the lowest bound and the highest bound on each dimension. Specifically , we compute the concatenated skyline path $p_{Min(d_i)}$ with the smallest value on dimension $d_i$. Suppose $p$ has the smallest value on $d_i$ in $P(s,h_k)$ and $p'$ has the smallest value on $d_i$ in $P(t,h_k)$. Then $Min(d_i)$ can be obtained by concatenating $p$ and $p'$. If $p_{Min(d_i)}$ is a unique path that has the smallest value on $d_i$, we can easily prove that $p_{Min(d_i)}$ must be a concatenated skyline path. Thus, the $d_i$ value of $p_{Min(d_i)}$ must be the lower-bound of $d_i$ for all the concatenated skyline paths. We can use the same way to compute the upper-bound by finding $Max(d_i)$ in both $P(s,h_k)$ and $P(t,h_k)$. However, different to the 2D case, we can not guarantee that the concatenated path with the largest value on $d_i$ is a $n$-dimensional skyline path. %such as $p_5$ in Table \ref{table:EG_skylines}.
Therefore, the obtained upper-bounds are actually looser than the actual bounds. For ease of presentation, we refer to $n$-cube of concatenated skyline path set via $h_k$ as $\Gamma(h_k)$, and $\Gamma(h_k)\supseteq \Gamma(P(s,h_k)\oplus P(h_k,t))$. Also, we refer to $Sup[\Gamma^n(h_k)]$ and $Inf[\Gamma^n(h_k)]$ as $Sup(h_k)$ and $Inf(h_k)$, if no special instructions on dimensions are needed. 

%As similar as the querying rectangle pruning on multiple hops, we can use a $n$-cube to enclose all skyline paths for each hop on $n$-dimensions.

\begin{figure}[ht]
    \centering
    \includegraphics[scale=0.3]{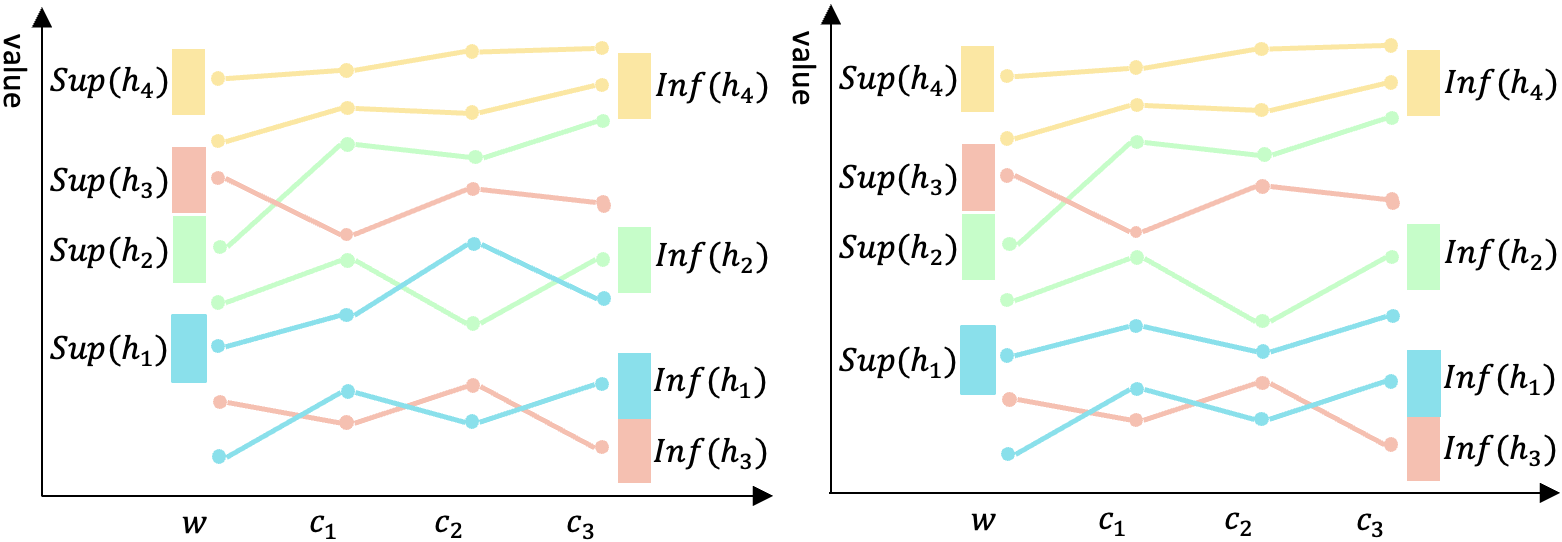}
    \caption{Supremum and Infimum for Hops (left) and Supremum Shrinking( right)}
    \label{fig:SubAndInf}
\end{figure}

Because the $n$-cube is a hypercube, it is difficult to clearly demonstrate a range of skyline paths in an example plot like as the hop rectangle in 2D. Nevertheless, if we treat the dimensions as $x$-coordinates and supremum/infimum values on each dimension as $y$-coordinates, we are able to map a $n$-cube to a 2D plot in the form of two polylines. For instance, Figure \ref{fig:SubAndInf} (left) shows four $4$-cubes, each represented by two polylines with the same color. In the following, we discuss the $n$-cube pruning with the help of Figure \ref{fig:SubAndInf}.

\begin{property}
	\label{property:n-cube_dominance}
	\textbf{($n$-Cube Dominance)} Given any two $n$-cube $\Gamma_i$ and $\Gamma_j$, if $Sup[\Gamma_i]$ dominates $Inf[\Gamma_j]$, then $\Gamma_i$ dominates $\Gamma_j$ and all the paths in $\Gamma_j$ are also dominated by the paths in $\Gamma_i$ .
\end{property}

%We can observe that if there is a $n$-cube having the supremum over the infimum of $n$-cube $\Gamma_i$. All the paths enclosed by $\Gamma_i$ will be dominated. 
For example in Figure \ref{fig:SubAndInf}, the polyline $Sup(h_4)$ is over the supremum of other three $4$-cube. Therefore, all the paths in $\Gamma(h_4)$ have greater value on each dimension than the paths in other $4$-cubes. However, as the polylines of other three $4$-cubes are intersected with each other, we need further pruning method on them.

As discussed previously, the infimum of each estimated hop's $n$-cube $\Gamma(h_i)$ is fixed, but the supremum is loose. Therefore, it is promising to further prune more hops by shrinking the supremums, which could be achieved by concatenating the skyline paths on some hops earlier. For instance in Figure \ref{fig:SubAndInf}, if we first compute all the concatenated skyline paths on $h_1$, we can obtain a fixed $Sup(h_1)$, which is shrunken to the position in Figure \ref{fig:SubAndInf} (right). Thus, $\Gamma(h_2)$ can be pruned directly by $\Gamma(h_1)$. Thereby, it is important to determine an efficient concatenating order for the hops, which is defined as follows.

\begin{definition}[$n$-Cube Comparison]
	\label{def:n-Cube_Comparions}
	Given two of $n$-Cubes $\Gamma_i$ and $\Gamma_j$ that cannot dominate each other. We use $|Inf^{<}[\Gamma_i]|$ to denote the number of dimensions $Inf[\Gamma_i]$ is smaller, and $|Inf^<[\Gamma_j]|$ to denote $\Gamma_j$'s. Then we say $\Gamma_i < \Gamma_j$ if $|Inf^<[\Gamma_i]|>|Inf^<[\Gamma_j]|$. The draw of comparison can be settled by comparing the supremum. The further draw is settled by comparing infimum dimensions in the decreasing order, and the one with first smaller dimension lower-bound is smaller. 
\end{definition}

The intuition behind this order is that if a hop has lower infimum, it has higher change contain skyline paths. Especially, the hop with the lowest infimum must contain skyline paths. Moreover, a $n$-cube with a lower infimum has higher chance to prune more $n$-cubes after it shrinks. Based on this observation, we have the following order the $n$-cubes.

\begin{definition}[$n$-Cube Concatenation Priority]
	\label{def:Priority}
	Given a set of $n$-Cubes that cannot dominate each other, their concatenation order is the same as their increasing infimum order.
\end{definition}

%We first determine the order by infimum, if the infimum of a $n$-cube of a hop is lower in the coordinates, we increase the priority of concatenation on the former. 

%his is because  To determine whether a hop has lower infimum than other one, we just compare how many the lowest bound of the hop can be lower than the other hop. 
For instance, because all the lowest bounds of $Inf(h_1)$ are lower than $Inf(h_2)$, we need to concatenate skyline paths on $h_1$ before $h_2$. $\Gamma_1<\Gamma_3$ because $|Inf^{<}[\Gamma_1]|=|Inf^{<}[\Gamma_3]|$ but $|Sup^{<}[\Gamma_1]|>|Sup^{<}[\Gamma_3]|$. Finally, $\Gamma_3<\Gamma_2$ because $|Inf^{<}[\Gamma_3]|<|Inf^{<}[\Gamma_2]|$. Therefore, the three $4$-Cubes should be concatenated in the order of $\Gamma_1$, $\Gamma_3$, and $\Gamma_2$. It should be noted that this order can be change dynamically when any $n$-cube is pruned or when any supremum is fixed. The detailed procedure is shown in Algorithm \ref{Algorithm:n-cube_pruning}. %However, this method can not support all cases, such as $Inf(h_1)$ and $Inf(h_2)$, as $h_1$ has two the lowest bounds are lower than $h_2$ and two the lowest bounds are greater than $h_2$, we can not judge the priority by the rule above. If it is the case, we check the supremum of the two hops. If one hop has lower supremum than the other one, we increase the priority of concatenation on the former. As we can not determine a priority for the two hops in terms of the infimums, these two hops have no chance to prune each other. We prefer to select the one hop that is easier to get a lower supremum, which has higher chance to prune other hops.

%As same as the 2D case, we first find $n$-cube containing all the concatenated skyline paths on each hop.  (use constraints as supremums)

\begin{algorithm}[ht]
	\caption{$n$-Cube Hop Pruning}
	\label{Algorithm:n-cube_pruning}
%	\footnotesize
	\LinesNumbered
	\KwIn{The Hop Set $H$ in \textit{LCA} of $s$ and $t$}%, Sorted Skyline Path Set $P(s,h_i)$ and $P'(h_i,t)$,$\forall h_i\in H$}
	\KwOut{Skyline Paths Set $P_{s,t}$}
	$P_{s,t}\gets \phi$\;
	\ForEach{$h_i\in H$}
	{
		\ForEach{$d_j\in D^n$}
		{
			$Inf[\Gamma_i][j]\gets p_1(s,h_i).d_j+p_1'(h_i,t).d_j$\;
			$Sup[\Gamma_i][j]\gets p_n(s,h_i).d_j+p_m'(h_i,t).d_j$\;
		}
		$\Gamma_i\gets (Inf[\Gamma_i], Sup[\Gamma_i])$\;
	}
	$\hat{\Gamma}\gets $n-Cube-Sort($\{\Gamma_i\})$\;
	$Sup[\hat{\Gamma}]\gets [-1]\times n$\;
	\ForEach{$\Gamma_i\in \hat{\Gamma}$}
	{
		\If{$Sup[\hat{\Gamma}]$ dominates $Inf[\Gamma_i]$}
		{
			$Continue$\;
		}
		$P_{s,t}\gets Skyline(P_{s,t}, P(s,h_i)\oplus P(h_i,t))$\;
		\ForEach{$d_j\in D^n$}
		{
			\If{$Sup[\hat{\Gamma}< Max(P_{s,t}.d_j)$}
			{
				$Sup[\hat{\Gamma}\gets Max(P_{s,t}.d_j)$\;
			}
		}
	}
	\Return $P_{s,t}$\;
\end{algorithm}

Finally, the rectangle pruning in the 2D scenario can be viewed as a special case of the $n$-Cube pruning with $n=2$. Although its supremum is fixed initially, its performance can also be improved with the 2-Cube concatenation priority.
%On multiple hop skyline path concatenation, we should concatenate two skyline path sets on each intermediate hop. In the previous 2D case, we reduce the number of hops considered by defining a querying rectangle. In terms of Definition \ref{def:n_cube}, we can describe the querying rectangle of the concatenated skyline path set $P_c$ by $\Gamma^2(P_c)={I_1,I_2}$. For this rectangle, the $Min(d_0)$ of $I_1$ is the weight value of the top-left endpoint and the $Max(d_0)$ of $I_1$ is the weight value of the bottom-right endpoint. Furthermore, the $Min(d_1)$ of $I_1$ is the cost value of the bottom-right endpoint and the $Max(d_1)$ of $I_1$ is the cost value of the top-left endpoint. (Example?) 

%We can compute a set of paths with as infimums on different dimensions. We denote it as $P_{inf}=\{p_{Min(d_0)},p_{Min(d_1)},...,p_{Min(d_{n-1})}\}$. 

%For instance, in table \ref{table:EG_skylines}, we can find $P_{inf}=\{{p_1,p_2},p_4,p_3,p_1\}$. We first check $\{p_1,p_2\}$, the current largest supremum on $c_1$ is 4, $c_2$ is 10, $c_3$ is 3. Then, we check $p_4$, the largest supremum on $w$ is 5, and there are no update on $c_1$, $c_2$, and $c_3$. Next, we check $p_3$, 

\begin{figure*}[ht]
    \centering
    \includegraphics[width=6.8in]{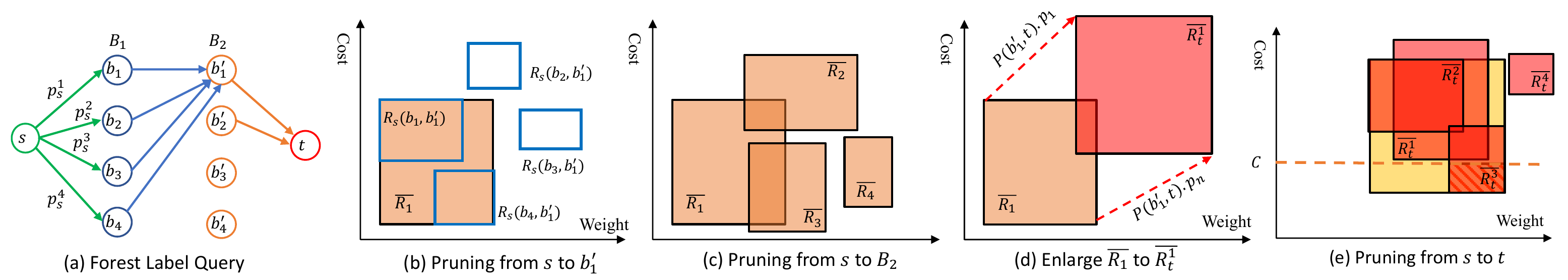}
    \caption{Constraint Shortest Path Query with Forest Labels}
    \label{fig:ForestQuery}
\end{figure*}

\subsection{Constraint Skyline Path Concatenation}
\label{subsec:Skyline_Constraint}
In this section, we further dig into the concatenation when the query constraints $\overline{C}$ are provided. Similar to the previous sections, we also first discuss the easier 2D case and then generalize it to the $n$ dimensional case.
\subsubsection{CSP Concatenation}
\label{subsec:Skyline_Constraint_Single}
We first deal with the \textit{CSP} query $q(s,t,C)$. As illustrated in Figure \ref{fig:ConcatenationPruning}, the query rectangle is further reduced to a smaller one by applying an upper-bound of $C$ (like the dotted area created by $C_3$). If $C$ is smaller than the lowest cost (like $C_1$), then the result is empty. If there exists one rectangle whose upper-left point is the only left most point in the query rectangle (like $C_2$ and $h_3$), we return this point directly. Otherwise, there must be some partial rectangles. We first find the most upper-left point in the query region and denote it as $p^*$ and use for the current best point ($w^*$ as its weight). Then for all the partial rectangles whose left boundary is larger than $w^*$, we can prune them directly (e.g., $h_4$ is pruned by $h_3$). After that, we enumerate the candidates in the remaining partial rectangles (like $h_1$) in a similar way like 
Algorithm \ref{Algorithm:GreedyConcatenation}. The difference is we only start testing when the sub-path's cost is smaller than $C$, and stop if either the current minimum weight is larger than $w^*$ or we obtain the first one whose cost is no larger than $C$. Traversal from multiple partial hops can run in parallel as they have the same termination condition and do not interfere each other. 

%For the query region, we can get the a interval of cost, which are [$C(p_n.h_j)$, C($p_1.h_i$)]. Given a CSP query ($s,t, C$), if $C \geq C(p_1.j_1)$, the optimal result of CSP is $p_1.h_i$. If $C \leq C(p_n.h_j)$, there is no result for the query. If $C$ intersects with the interval, we deal with the paths of query region in the following:
%\subsection{$n$-cube Pruning}

\subsubsection{MCSP Concatenation}
\label{subsec:Skyline_Constraint_Multi}
Different from the $n$-Cube in Section \ref{subsubsec:Skyline_Multiple_Cube}, the supremums that are larger than the constraint $\overline{C}$ can be replaced by  $\overline{C}$ during query answering. Besides, a $n$-Cube can be further pruned by the following property.

\begin{property}
	\label{Property:MCSP_Constraint}
	Given a $n$-Cube and a set of query constraints $\overline{C}$, $n$-Cube can be pruned if any of its infimum is larger than the corresponding cost.
\end{property} 

Then for the concatenation of each $n$-Cube, because the paths are concatenated in the distance-increasing order, we only need to concatenate the first one that satisfies $\overline{C}$ for each hop. Within each hop concatenation, the paths in the labels are also pruned when any of its cost does not satisfy $\overline{C}$. When have the first candidate, its distance can also be used to prune the remaining concatenations for other hops.

\subsection{Forest Constraint Skyline Path Concatenation}
\label{subsec:Skyline_Forest}
\subsubsection{Rectangle Pruning for 2D Graph}
In this section, we present how to apply the above techniques to answer \textit{MCSP} with forest labels. The extended labels can use the methods in Section \ref{subsec:Skyline_Constraint} directly. As for the forest labels, we break it into three parts as shown in \mycolor{Figure \ref{fig:ForestQuery}}-(a): 1) from $s$ to its boundary $B_1$, 2) from $B_1$ to $B_2$, and 3) from $B_2$ to $t$. However, different from the previous query that needs concatenation, the skyline results of 1) and 3) can be retrieved from their sub-trees directly because their boundaries are all their ancestors. As for the boundaries results, we first obtain their rectangles, then obtain the approximate results from $s$ to $B_2$ by viewing each boundary in $B_1$ as a hop. For example in \mycolor{Figure \ref{fig:ForestQuery}}-(b), we get a candidate rectangle $\bar{R_1}$ from $s$ to $b'_1$ by pruning the multi-hop of $B_1$. By applying the same procedure to the boundaries $B_2$ in parallel, we can have a set of candidate rectangles as shown in \mycolor{Figure \ref{fig:ForestQuery}}-(c). After that, these rectangles will be ``enlarged and moved" to the new locations by viewing the first and last skyline paths of each $B_2$ to $t$ as a vector as shown in \mycolor{Figure \ref{fig:ForestQuery}}-(d). Then we have a final set of rectangles $\bar{R_t}$ that approximate the skyline path space from $s$ to $t$. Finally, we can apply the constraint $C$ to shrink the search to a small area (shaded area in \mycolor{Figure \ref{fig:ForestQuery}}-(e)) and conduct the actual concatenation similar as Section \ref{subsec:Skyline_Constraint}.

\subsubsection{n-Cube Pruning for n-Graph}

Similar to the 2D scenario, the query processing is the same as the steps in Figure \ref{fig:ForestQuery}-(a). However, the rectangle pruning will be displaced by $n$-cube pruning. When pruning from $s$ to $b_1'$, we construct an $n$-cube for each hop in $B_1$. According to Property \ref{property:n-cube_dominance}, we can prune the dominated $n$-cubes.Then, we approximate the concatenation result space from $s$ to $b_1'$ by updating an infimum and a supremum depending on the undominated $n$-cubes in $B_1$. If it is the case, we compare their infimums to find the lowest value on each dimension and compare their supremums to find the highest values. For example in Figure \ref{fig:ForestQuery}-(a), assume we are considering 4-cube. For the four hops in $B_1$, we have $\Gamma_1=\{[4,7],[5,7],[5,8],[1,9]\}$, $\Gamma_2=\{[3,6],[4,6],[4,7],[3,10]\}$,
$\Gamma_3=\{[8,12],[8,10],
[9,12],[10,13]\}$, and 
$\Gamma_4=\{[7,12],$\\$[8, 10],[9,14],[12,14]\}$
, $\Gamma_3$ and $\Gamma_4$ can be pruned in terms of Property \ref{property:n-cube_dominance}. By viewing the infimums and supermums of $\Gamma_1$ and $\Gamma_2$,
a new infimum from $s$ to $b_1'$ is $\{3,4,4,1\}$ and a new supremum is $\{7,7,8,10\}$. Afterwards, we enlarge the $n$-cube from $s$ to $b_1'$ by adding on the infimum and the supremum of the $n$-cube from $B_2$ to $t$. For instance, assume the $n$-cube from $b_1'$ to $t$ is $\Gamma_1'=\{[3,5],[2,6],[5,9],[1,4]\}$, we can get the concatenation result space from $s$ to $b_1'$ to $t$ as $n$-cube, which has the infimum $\{6,6,9,2\}$ and the supremum $\{12,13,17,14\}$. We can use the same way to compute the $n$-cubes from $s$ to $t$ via other hops in $B_2$. We prune the dominated $n$-cubes from $s$ to $t$ and update a infimum and a supremum as the concatenation result space from $s$ to $t$ by viewing the remaining $n$-cubes. We can apply the constraints $\overline{C}$ as the final supremum to shrink the $n$-cube from $s$ to $t$.

\section{Experiment}
\label{sec:Experiment}
%In this section, we evaluate the performance of our \textit{CSP-2Hop} against the state-of-the-art exact \textit{CSP} methods. Section \ref{sec:Experimental Settings} shows the experimental settings. Sections \ref{subsec:Experiment_QueryEfficiency} and \ref{sec:Index Size and Construction Time} discuss the performance of query processing and index construction. %We also discuss the range of application for CSP-2Hop in \ref{sec:Range of Application}. 

\subsection{Experimental Settings}
\label{sec:Experimental Settings}
We implement all the algorithms in C++ with full optimizations. The experiments are conducted in a 64-bit Ubuntu 18.04.3 LTS with two 8-cores Intel Xeon CPU E5-2690 2.9GHz and 186GB RAM.
%\begin{table}[]
%\centering
%\caption{Datasets Used in the Experiments}
%\label{table:Datasets}
%\scriptsize
%\scalebox{1}{
%\begin{tabular}{|l|l|l|l|}
%\hline
%\textbf{Name} & \textbf{Description}   & \textbf{Vertices} & \textbf{Edges} \\ \hline
%\textbf{NY}   & New York City          & 264,346                     & 733,846                  \\ \hline
%\textbf{BAY}  & San Francisco Bay Area & 321,270                     & 800,172                  \\ \hline
%\textbf{COL}  & Colorado               & 435,666                     & 1,057,066                \\ \hline
%%\textbf{FLA}  & Florida                & 1,070,376                   & 2,712,798                \\ \hline
%%\textbf{CAL}  & California and Nevada  & 1,890,815                   & 4,657,742                \\ \hline
%%\textbf{LKS}  & Great Lakes            & 2,758,119                   & 6,885,658                \\ \hline
%%\textbf{CTR}  & Central USA            & 14,081,816                  & 34,292,496               \\ \hline
%%\textbf{USA}  & Full USA               & 23,947,347                  & 58,333,344               \\ \hline
%\end{tabular}}
%\end{table}

    \begin{figure*}[ht]
    	\centering
    	\includegraphics[width=6.8in]{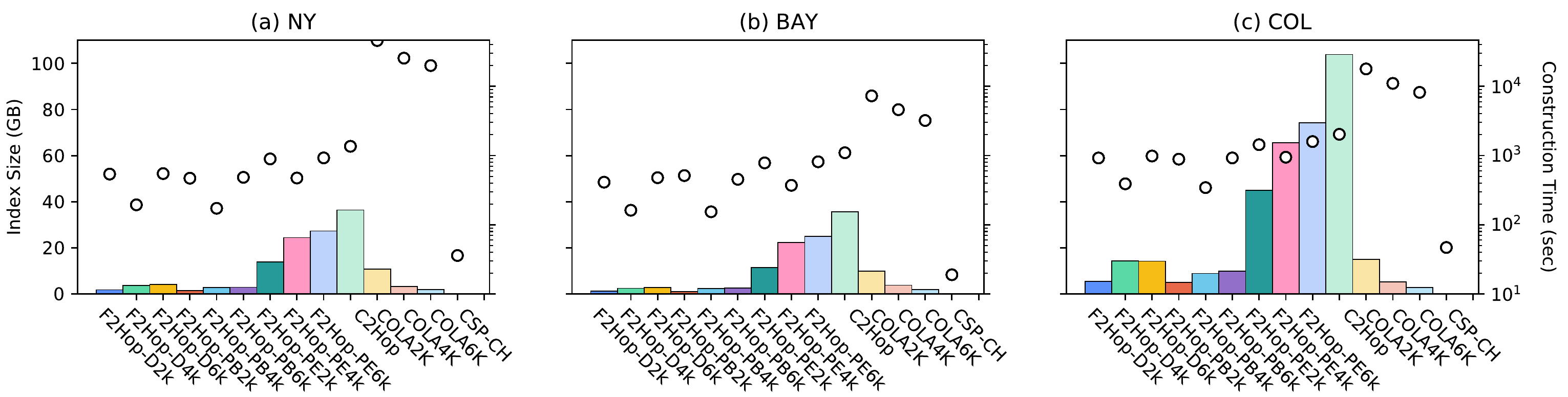}
    	\caption{Index Size (Bar) and Construction Time (Ball) of Different Partition Sizes, Boundary Contraction Orders, and Algorithms in 2-Graph}
    	\label{fig:Exp_Size}
    \end{figure*}

\textbf{Datasets}. 
All the tests are conducted in three real road networks obtained from the $9^{th}$ DIMACS Implementation Challenge \footnote{http://users.diag.uniroma1.it/challenge9/download.shtml}. %as shown in Table \ref{table:Datasets}. %\textit{NY}, \textit{BAY}, and \textit{COL} are three smaller networks and \textit{FLA} is larger. 
Specifically, \textit{NY} is a dense grid-like urban area with 264,246 vertices, 733,846 edge, and several partitions connected by bridges. \textit{BAY} has a shape of doughnut around the \textit{San Francisco Bay} with 321,270 vertices, 800,172 edges, and several bridges connecting the banks. \textit{COL} has an uneven vertex distribution (dense around Denver but sparse elsewhere) with 435,666 vertices and 1,057,0666 edges. We use the road length as the weight $w$ and travel time as the first cost to test the performance of \textit{CSP}. We also generate two cost sets that have random and negative correlation with the distance for each network, and generate three more positive correlation costs for \textit{MCSP} tests. %Finally, we test four social networks: $G_1:$ \textit{Anybeat}\footnote{http://networkrepository.com} with 12.6k vertices and 67.1k edges; $G_2:$ \textit{FB-PAGES-COMPANY}$^4$ with 14.1k vertices and 52.1k edges; $G_3:$ \textit{FB-PAGES-SPORT}$^4$ with 13.9k vertices and 86.8k edges;  $G_4:$ \textit{Github}\footnote{https://snap.stanford.edu/data/} with 37k vertices and 289k edges. Their tree widths are 698, 1710, 2527, and 6157, while the road networks' are 131, 107, and 168}.

%As for the cost $c$, we test the following three scenarios:
%\begin{enumerate}
%    \item \textit{Positive Correlation}: The cost grows proportionally to the distance such travel time and fuel consumption, and we use the travel time in the dataset directly.
%    \item \textit{Random}: The cost has no clear relation with distance like the slope gradient, and we generate it randomly.
%    \item\textit{Negative Correlation}: The cost drops proportionally to the distance that causes the worst case in the skyline operations. We use the equation $c=\frac{k}{w}, k \in \mathbb{R}^+$ to generate cost for each edge.
%\end{enumerate}

%or each edge, it includes two criteria, which are travel time and travel distance. In the experiments, we treat travel distance as the weight, and travel time as the cost. The dataset sizes are in the range from small cities in USA to the full USA road networks.

\textbf{Algorithms}. We extend the following state-of-the-art exact \textit{CSP} algorithms to their \textit{MCSP} versions \cite{DBLP:conf/adc/ZhangLZL21}: \textit{Sky-Dij} \cite{hansen1980bicriterion}, \textit{cKSP} \cite{gao2010fast}, %is the $k$-Shortest path with pruned search; 
\textit{eKSP} \cite{SEDENONODA2015602}, %is the $k$-Shortest path with pruned concatenation; 
\textit{CSP-CH} \cite{storandt2012route}, \textit{COLA} \cite{wang2016effective} with $\alpha$=$1$. We denote our methods as \textit{C2Hop} and \textit{F2Hop}. The straightforward version (as implemented in \cite{Liu2021Forest}) is denoted as \textit{F2Hop-S}.

\textbf{Query Set}. 
For each dataset, we randomly generate five sets of OD pairs $Q_1$ to $Q_5$ and each has 1000 queries. Specifically, we first estimate the diameter $d_{max}$ of each network %with an approximation method
\cite{meyer2003delta}. Then each $Q_i$ represents a category of OD pairs falling into the distance range $[d_{max} / 2^{6-i}, d_{max}/2^{5-i}]$. For example, $Q_1$ stores the OD pairs with distances in range $[d_{max}/32, d_{max}/ 16]$ and $Q_5$ are the OD pairs within $[d_{max}/2, d_{max}]$. %The OD pairs are selected randomly and are collected by their corresponding sets. %We run each query five times and report the average time for the 1000 queries in each query set.
Then we assign the query constraints $C$ to these OD pairs. Firstly, we compute the cost range [$C_{min}, C_{max}$] for each OD. This is because if $C$\textless $C_{min}$, the optimal result can not be found then the query is invalid; and if $C$\textgreater$C_{max}$, the optimal results is the path with the shortest distance. %For this reason, we assign the cost within [$C_{min}, C_{max}$]. 
Furthermore, in order to test the constraint influence on the query performance, we generate five constraints for each OD pair: $C=r\times C_{max}+(1-r)\times C_{min}$, with $r=0.1, 0.3, 0.5, 0.7$ and 0.9. When $r$ is small, $C$ is closer to the minimum cost; and when $r$ is big, $C$ is closer to the maximum cost. This constraint generation method is applied on all the cost dimensions.
%The same procedure is applied to all the three cost correlation scenarios.

        %\vspace*{-6mm}

%\subsection{Forest Performance}
%\vspace*{-4mm}
\subsection{Index Size and Construction Time}
	\label{sec:Index Size and Construction Time}
	In this section, we first test the influence of the partition sizes and boundary contraction orders on index size and construction time. We use the 2-Graph as for the test and select the best ones for the higher dimensional tests. Specifically, we test the \textit{F2Hop}s and \textit{COLA} with three different partition sizes for each graph ($2k, 4k$, and $6k$), \textit{C2Hop} and \textit{CSP-CH} with no partition. In terms of contraction order, \textit{F2Hop-D} uses the degree elimination order with boundary label, \textit{F2Hop-PB} uses the partition order with boundary label, and \textit{F2Hop-PE} uses the partition order with the extended label. As shown in Figure \ref{fig:Exp_Size}, \textit{F2Hop} all have smaller index sizes and shorter construction time than \textit{C2Hop} because of the smaller tree sizes and parallel construction. Specifically among the \textit{F2Hop}s, \textit{F2Hop-PB} is the fastest and smallest \textit{F2Hop} because it has the smallest number of labels to construct, and \textit{F2Hop-PE} is the largest and slowest because it has the largest number of labels. It also shows that the partition order performs better than the degree elimination order in the boundary tree.  Among the baseline methods, \textit{COLA} also has a small index but it takes the longest time to construct because it uses \textit{Sky-Dijk} search. \textit{CSP-CH} takes the shortest time to construct the smallest index. In the following tests, we use $4k$ as the default partition and \textit{PB} as the boundary order and boundary label.
		
	Next, we test the index size and construction time in the higher dimensional graphs and the results are shown in Figure \ref{fig:Exp_SizeMulti}. Because \textit{COLA} already takes more than $10^4$ seconds in the 2-dimensional graph, we do not test it in the higher dimensional graphs. As for the \textit{F2Hop} and \textit{F2Hop-S}, they construct the same index so their index sizes are the same and we show them together as \textit{F2Hop Size}. As for the construction time, because they use different strategies so we distinguish them. Specifically, \textit{F2Hop} is faster than \textit{F2Hop-S} by less than $10\times$. \textit{CSP-CH} is faster in 2-dimensional but the construction time soars up in 3-dimensional, and takes longer than $10^5$ in higher dimensions so we do not show \textit{CSP-CH}'s performance on 4 and 5-graphs. This is caused by the explosion of the skyline path numbers on the entire graph, while our \textit{F2Hop} methods reduce it with forest. As for the construction time \textit{F2Hop}s, 3-graph takes an order of magnitude longer time than 2-graph, and 4-graph takes another order of magnitude longer time than 3-graph. This phenomenon demonstrates the influence of dimension number of costs on the \textit{MCSP} problem. However, compared with the 2 to 3 and 3 to 4, 5-graph's construction time is longer but is still on the same order of 4-graph. This is because the increase of skyline path slows down in higher dimensions. 
		
	\begin{figure}[h]
        \centering
    	\includegraphics[width=3.3in]{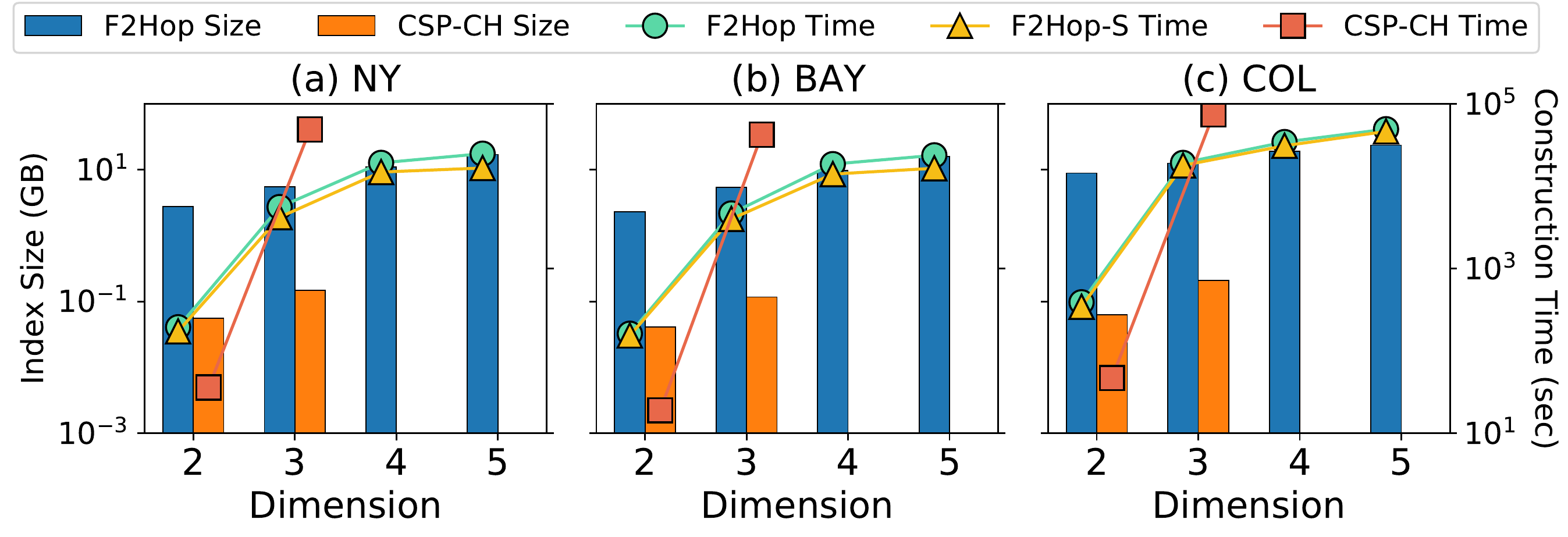}
    	\caption{Index Size (Bar) and Construction Time (Ball) of $n$-Graphs}
    	\label{fig:Exp_SizeMulti}
    \end{figure}

		 %Nevertheless, \textit{F2Hop-PE} would have higher query efficiency as shown later, and only the partition order can support extended label, so we do not test \textit{F2Hop-D} further. Meanwhile, we use 4k as the default partition sizes of \textit{NY}, \textit{BAY}, and \textit{COL}, because they have the smallest index size and shortest construction time.

%	In Figure \ref{fig:CSP_Index}, we present the index construction time (bar) and index size (line) of \textit{C2Hop-GM}. %We use histogram and broken lines to respectively show the time consumption and space use. For the results, we limit the maximum time to ${2 \times 10^4}$ seconds.
%	\textit{CSP-CH} always has the smallest index size and least construction time, but its the query performance is much slower. On the contrary, \textit{C2Hop}'s efficiency is at the cost of larger index size and longer construction time. Among them, our pruning techniques can save more than half of \textit{C2Hop-N}'s construction time (shown in logarithm so not that apparent). Moreover, the index sizes also reflects the influence of network structure and cost correlation. \textit{BAY} has the smallest index size and fastest to construction because it has smaller number of skyline paths and tree width. Random and negative correlation costs have much larger index size and longer time because they have more skyline paths.

	%\subsubsection{Query Performance}
	%\textit{F2Hop-D} 
	%\textit{F2Hop-PB}
	%\textit{F2Hop-PE}
	%\textit{C2Hop}  	
%Test in the next section	
%	r=0.9 * 3 networks

%    \vspace*{-5mm}
	\begin{figure}[ht]
        \centering
    	\includegraphics[width=3.2in]{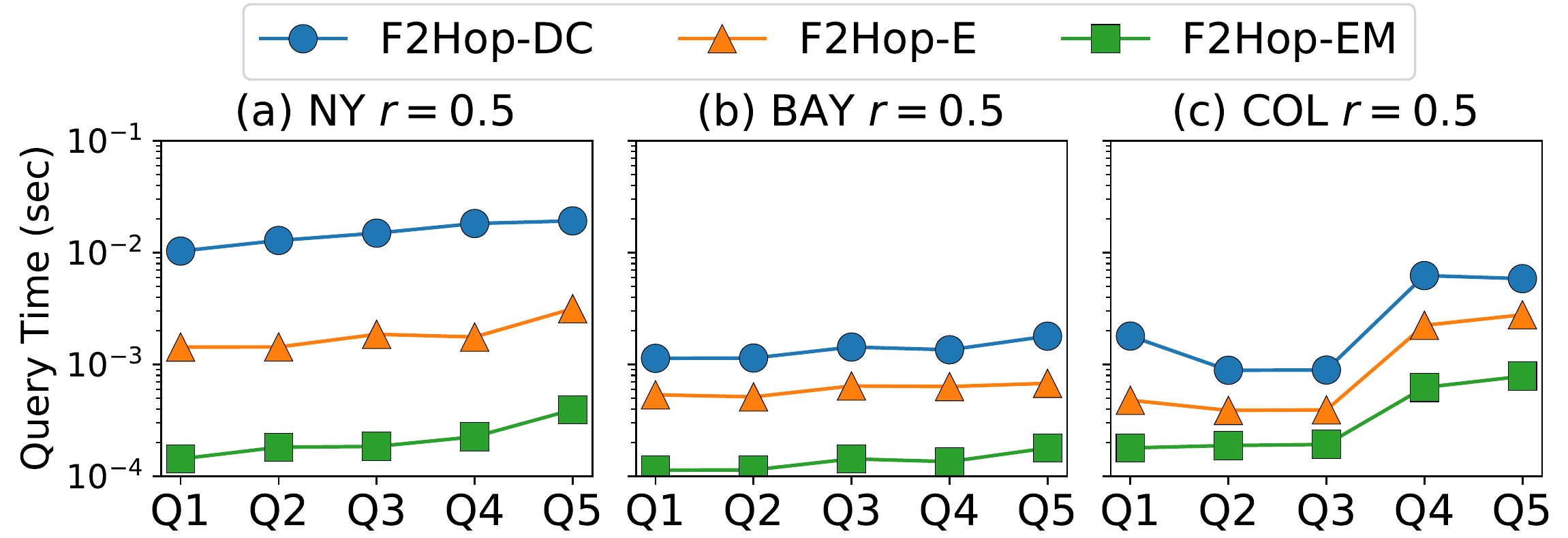}
    	\caption{Effectiveness of Skyline Concatenation of CSP}
    	\label{fig:Exp_Concatenation}
    \end{figure}
	
	\subsection{Effectiveness of Skyline Concatenation}
	In this section, we study the effectiveness of our proposed skyline concatenation methods: \textit{F2Hop-DC} uses the classic \textit{Divide-and-Conquer}, \textit{F2Hop-E} uses our \textit{Efficient concatenation} for \textit{CSP}, and \textit{F2Hop-EM} further applies the \textit{Multi-hop pruning} based on \textit{F2Hop-E}. %We test on \textit{F2Hop-PE} as it has more labels to amplify the difference. 
	As shown in Figure \ref{fig:Exp_Concatenation}, \textit{F2Hop-E} is faster than \textit{F2Hop-DC}, while \textit{F2Hop-EM} is the fastest. This validates the effectiveness of our concatenation methods. Therefore, we only use \textit{F2Hop-EM} as the concatenation algorithm in the latter experiments. Among the networks, although \textit{NY} is relatively small, it is denser than the other two so the query performance is worse. \textit{BAY} is always faster because it has smaller label size thanks to its ring-shape topology. \textit{COL}'s $Q_1$ is worse than its $Q_2$ and $Q_3$ because the shorter queries have higher chance to fall into the dense Denver city and the mid-range query have higher chance to fall into the mountains and canyons.  
	
	\begin{figure}[ht]
        \centering
    	\includegraphics[width=3.3in]{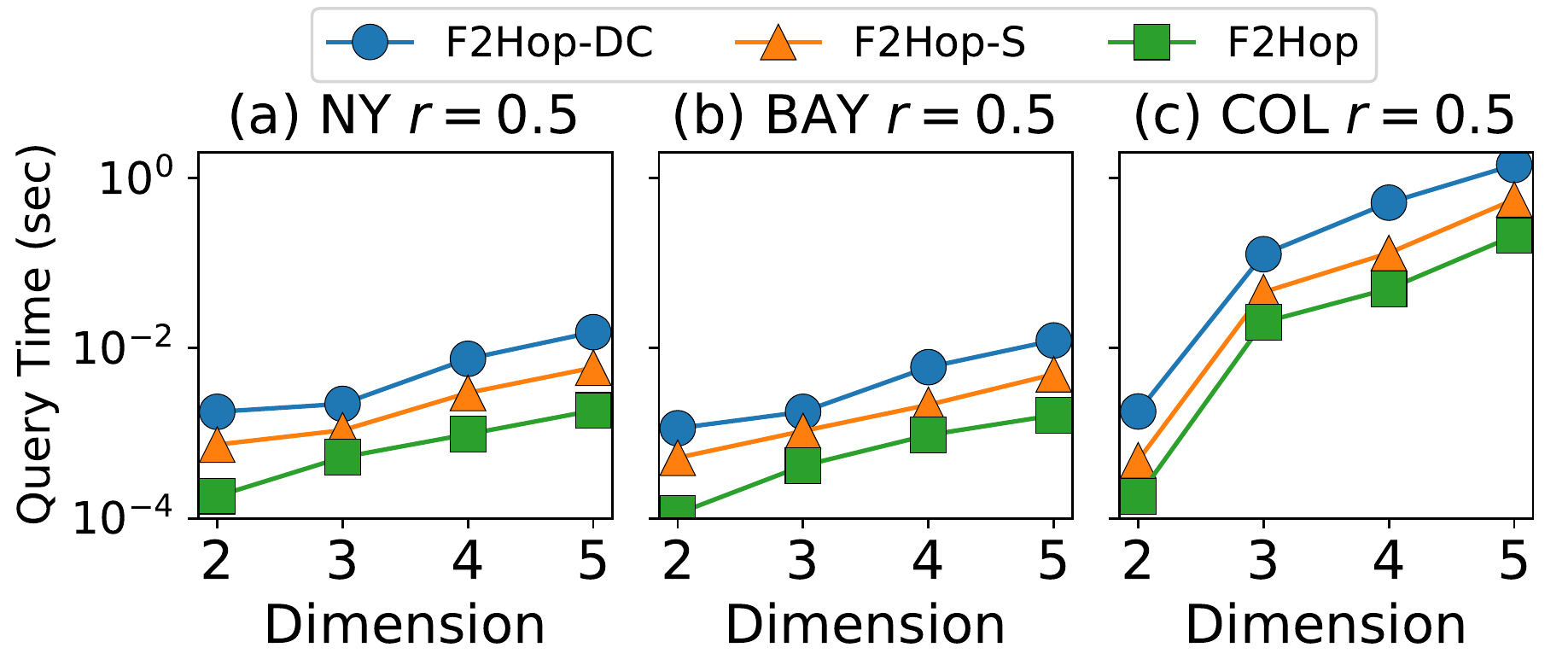}
    	\caption{Effectiveness of Skyline Concatenation in MCSP}
    	\label{fig:Exp_Concatenation_MCSP}
    \end{figure}

	For the higher dimensional cases, we further test our \textit{MCSP}  concatenation techniques and compare with the techniques for \textit{CSP}. The query sets are $Q_3$ with $r=0.5$. As shown in Figure \ref{fig:Exp_Concatenation_MCSP}, we can further achieve a nearly one order of magnitude faster concatenation speed.

%\vspace*{-5mm}
	\subsection{Query Performance}
	\label{subsec:Experiment_QueryEfficiency} 
%	In this section, we test the influence of the following parameters: distances ranging from $Q_1$ to $Q_5$, constraint ratios ranging from 0.1 to 0.9, and three different networks. %The default settings of them are: $r=0.5$, \textit{NY} road network. 
%	\textit{F2Hop-D, F2Hop-PB, F2Hop-PE, C2Hop, COLA, CSP-CH Sky-Dij, cKSP, eKSP
    \subsubsection{Query Distance}
%    \textbf{0.1 0.5 0.9 * NY BAY COL, 9 Plots}
    Figure \ref{fig:Exp_Query} (a)-(c) show the query processing time of different algorithms in distance-increasing order with $r=0.5$. Firstly, all \textit{CSP-2Hop} methods outperform the others by several orders of magnitude. Among them, \textit{C2Hop} is the fastest that runs in $10^{-5}$ seconds because it has the largest label size and only needs one multi-hop concatenation. \textit{F2Hop-PE} is only slightly slower because it also has a large index size. \textit{F2Hop-PB} is slower running in $10^{-3}$ seconds but it has the smallest index size and is the fastest to construct. Secondly, all the algorithms' running time increase as the distance increases. Compared to the other methods, the \textit{CSP-2Hop} methods are more robust as their running time only increase slightly. Among the comparing methods, when the ODs are near to each other, like $Q_1$ and $Q_2$, all of them can finish in 10 seconds. When the distance further increases, some of them soar up to thousands of seconds. Specifically, $cKSP$ is always the slowest one because it uses the pruned \textit{Dijkstra} search to generate and test the $k$ shortest paths in the distance-increasing order. Therefore, when the OD is far away from each other, there could enumerate a huge number of possible paths. The second slowest is \textit{Sky-Dij}, which has a huge amount of intermediate skyline paths to expand before reaching the destination. \textit{eKSP} is slightly faster than them because it avoids the expensive \textit{Dijkstra} search and uses the path concatenation instead. \textit{CSP-CH} is faster as expected, but its performance also deteriorates dramatically in $Q_4$ and $Q_5$. Finally, \textit{COLA} also tends to be stable because its query time is bounded by the inner-partition search, and inter-boundary results are precomputed regardless of the actual distance. %Nevertheless, its performance is still slower. 
    In summary, our \textit{CSP-2Hop} is orders of magnitude faster and also robust in terms of different distances.  

    \begin{figure}[ht]
    	\centering
    	\includegraphics[width=3.3in]{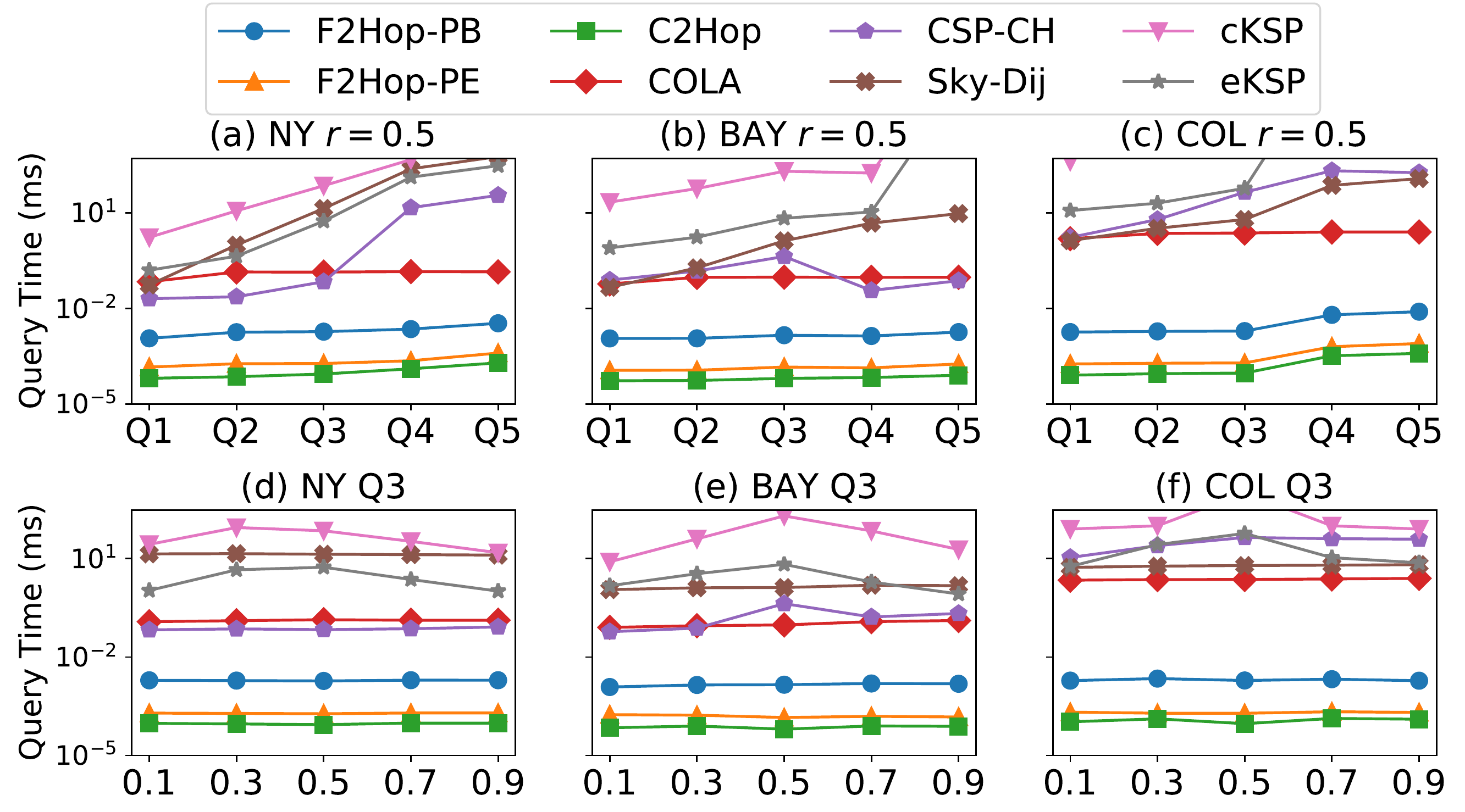}
    	\caption{CSP Query Performance of Distance and Ratio}
    	\label{fig:Exp_Query}
    \end{figure} 
%    the \textit{C2Hop-N} is the slowest and runs in $10^{-3}$ seconds. The greedy version \textit{C2Hop-G} is an order faster, and our final algorithm \textit{C2Hop-GM} is further another order faster in $10^{-5}$ seconds. This result proves the effectiveness and efficiency of the proposed greedy concatenation and multi-hop screening. 

%    \begin{figure*}[ht]
%    \centering
%    \includegraphics[width=6.8in]{}
%    \caption{Query Time Under Different Cost Correlations and Network Structures}
%    \label{fig:Correlation}
%    \end{figure*}
%    
%    \begin{figure}[ht]
%    \centering
%    \includegraphics[width=3.3in]{}
%    \caption{Index Construction Time and Size}
%    \label{fig:CSP_Index}
%\end{figure}

    \subsubsection{Constraint Ratio}
%    \textbf{Q4 * NY BAY COL, 3 Plots}
    Figure \ref{fig:Exp_Query} (d)-(f) show the influence of the constraint ratio $r$, %Although the results appear to be flat because the time is shown in \textit{log}, 
    and there are several interesting trends. Firstly, \textit{CSP-2Hop} is still robust and fastest under different $r$. Secondly, the baseline methods fall into two trends: \textit{Sky-Dij} and \textit{CSP-CH} that grows as the $r$ increases, while \textit{cKSP} and \textit{eKSP} have a hump shape. It should be noted that because the time is shown in logarithm, a very small change over $10^2$ in the plot actually has a very large difference. %These trends become more obvious when the query distance becomes larger ($Q_4$ and $Q_5$). 
    The reason of the increase trend is that \textit{Sky-Dij} and \textit{CSP-CH} expand the search space incrementally until they reach the destination, during which they accumulate a set of partial skyline paths. Therefore, when the $r$ is small, partial skyline path set is also small because most of the infeasible ones are dominated. When $r$ is big, the partial skyline path set becomes larger so it runs slower. As for the hump trend, the two \textit{KSP} algorithms test the complete paths iteratively in the distance-increasing order. Therefore, when $r$ is big, more paths would satisfy this constraint, so there is higher chance to enumerate only a few paths to find the result. In fact, they are even faster than \textit{CSP-CH} in $Q_4$ and $Q_5$ when $r=0.9$ (not shown here). On the other hand, when $r$ is small, large amounts of infeasible paths are pruned during the enumeration so the running time is also short. But when $r$ is not big or small, the pruning power becomes weak and the actual result is much longer than the shortest path, they have to enumerate tens of thousands of path to find it, so query performance deteriorates. 
%    \subsubsection{Cost Correlation}
%    Figure \ref{fig:Correlation}-(a) to (c) display the performance change of \textit{CSP} algorithms in \textit{NY} with different cost correlations. The efficiency of all the algorithms become worse when the cost correlation change from positive to negative. This is because in the negative correlation, nearly all the concatenation results are skyline, so skyline-based pruning techniques in all these algorithms lose their power. The performance of the greedy concatenation is also affected because it is approaching its worst case. However, the rectangle is still powerful and \textit{C2Hop-GM} still performs best.  %especially for \textit{CSP-CH} that has an extremely increasing of shortcuts as show in Figure \ref{fig:CSP_Index}. On the other hand, the performance of \textit{C2Hop-GM} is relatively robust and changed in the minimum. 

%	\begin{figure}[ht]
%    	\centering
%    	\includegraphics[width=3.5in]{}
%    	\caption{Criteria, Correlation, and Networks (Bar: Time; Ball: Size)}
%    	\label{fig:Exp_Test}
%    \end{figure} 
%    \vspace*{-5mm}
%	\subsection{Other Influence Factors}    
	\subsubsection{\mycolor{MCSP Query}}
	We test the \textit{MCSP} queries with $Q_3$ and $r=0.5$. As shown in Figure \ref{fig:Exp_MCSP}, \textit{F2Hop} are still orders of magnitude faster than the baseline. The performance of \textit{eKSP} looks stable because it is an enumeration-based method and we terminate the enumeration when more than $10k$ paths have been tested.  Its current performance is at the cost of failing answer some queries and bounded by $10k$ enumeration. 

	\begin{figure}[ht]
    	\centering
    	\includegraphics[width=3.3in]{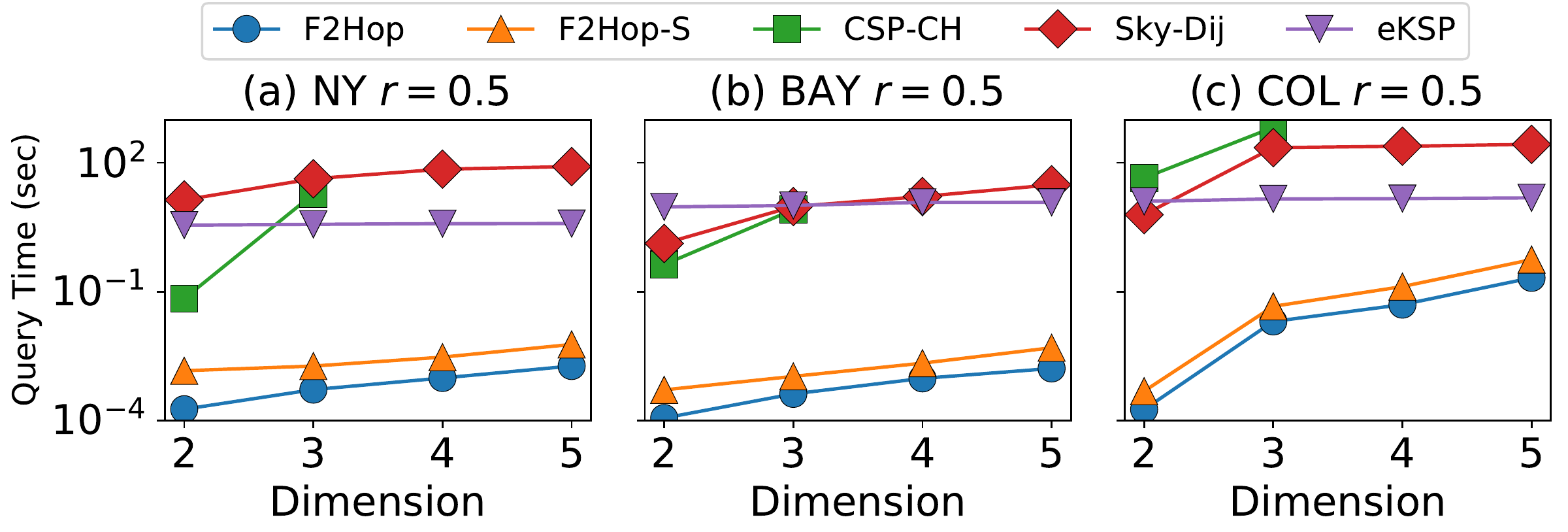}
    	\caption{MCSP Query Performance}
    	\label{fig:Exp_MCSP}
    \end{figure} 

	\subsubsection{Cost Correlation}
	We use the 2-dimensional index size and construction time to demonstrate the influence of the correlation. As shown in Figure \ref{fig:Exp_Correlation}, the positive correlation has the smallest index size and smallest construction time, and the negative correlation has the largest index size and longest construction time. This phenomenon is also caused by the \textit{skyline} itself instead of our index structure. When the criteria have negative correlations, the skyline number tends to increase to the worst square case. As a result, the query time increases dramatically for the baselines, while ours only increases slightly.
	
	\begin{figure}[ht]
    	\centering
    	\includegraphics[width=3.3in]{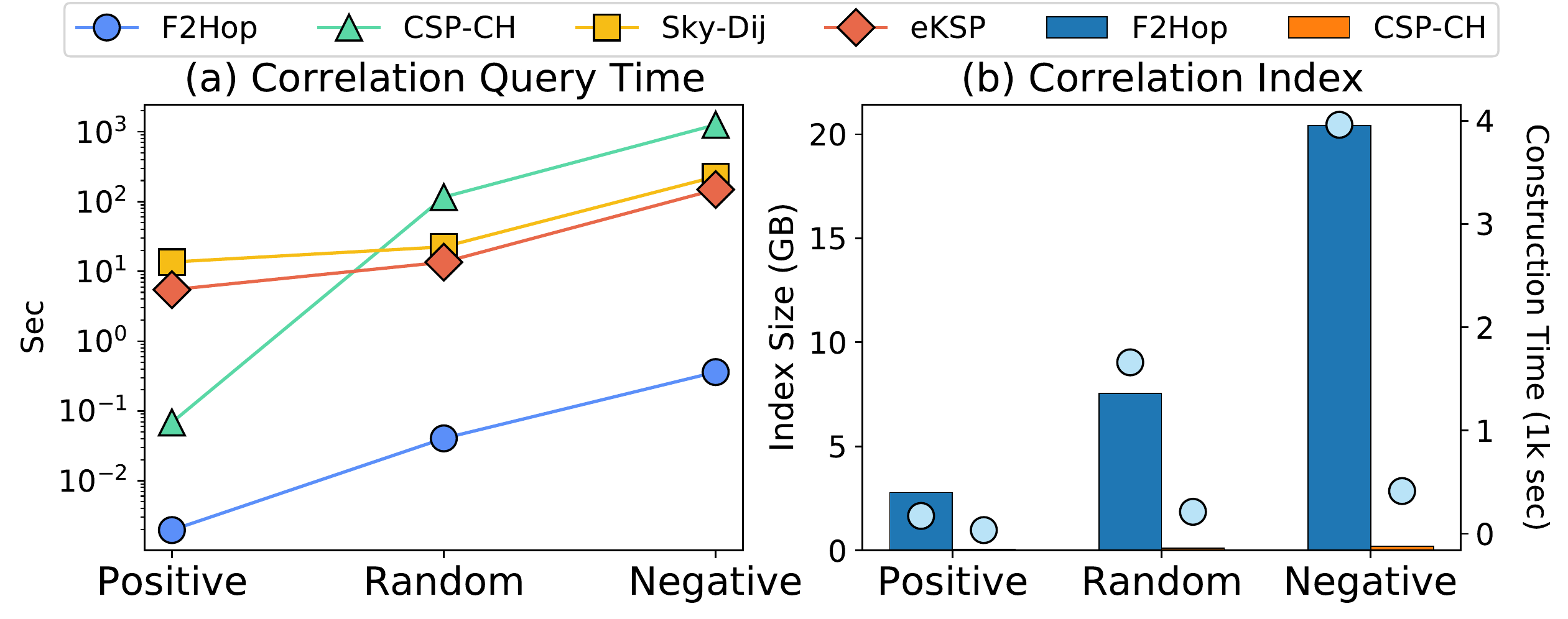}
    	\caption{Performance under Different Correlations (Bar: Size; Ball: Time)}
    	\label{fig:Exp_Correlation}
    \end{figure} 

\section{Related Work}
\label{sec:RelatedWork}

\subsection{Shortest Path and Skyline Path}
\label{subsec:RelatedWork_SkylinePath}
\textit{Dijkstra}'s \cite{dijkstra1959note} is the fundamental shortest path algorithm that uses the best-first strategy to find the shortest paths from source to all the other vertices in the distance-increasing order. \textit{Sky-Dijk}\footnote{It was called \textit{MinSum-MinSum} problem as \textit{skyline} was not used in 1980s.} \cite{hansen1980bicriterion} is the first algorithm for skyline path query that expands the search space incrementally like the \textit{Dijkstra}'s but each vertex might be visited multiple times with one for each skyline result. Dominance relation can be applied to prune the search space. \cite{kriegel2010route} further incorporates \textit{landmark} \cite{kriegel2007proximity} to estimate the lower bounds, and prunes the search space with dominance relations. \cite{gong2019skyline} utilizes the skyline operator to find a set of skyline destinations.% instead of paths. 

\subsection{Constrained Shortest Path}
\label{subsec:RelatedWork_CSP}
As the single constraint shortest path \textit{CSP} is the basis of \textit{MCSP}, we discuss the four kinds of \textit{CSP} algorithms depending on the \textit{exact/approximate} and \textit{index-free/index} classification.

%\subsubsection*{Exact Index-Free CSP}
\paragraph{\textbf{Exact Index-Free CSP}.}
\cite{joksch1966shortest} is the first work studies \textit{CSP} and solves it with dynamic programming. \cite{handler1980dual} converts it to an \textit{integer linear programming (ILP)} problem and solves it with a standard \textit{ILP} solution. However, both of them are not scalable to large real-life road networks. Among the practical solutions, the \textit{Skyline Dijkstra} algorithm can be utilized to answer \textit{CSP} queries by setting a constraint on the cost during the skyline search. %In this way, the search space is further pruned as the CSP result is a subset of the skyline paths. 
Another practical solution to answer \textit{CSP} is through \textit{k-Path} \cite{yen1971finding}, which tests the top-$k$ paths one by one until satisfying $C$. \textit{Pulse} \cite{LOZANO2013378} improves this procedure by applying constraint pruning, but its \textit{DFS} search strategy limits its performance. \textit{cKSP} \cite{gao2010fast} uses the pruned \textit{Dijkstra} search in its $k$-path generation and incorporates the skyline dominance relation to prune the infeasible ones. \textit{eKSP} \cite{SEDENONODA2015602} further replaces \textit{cKSP}'s search with sub-path concatenation to speed up $k$-path generation. 

%\subsubsection*{Approximate Index-Free CSP}
\paragraph{\textbf{Approximate Index-Free CSP}.}
The approximation allows the result length being not longer than $(1+\alpha)$ times of the optimal path. The first approximate solution is also proposed by \cite{hansen1980bicriterion} with a complexity of $O(|E|^2$\mbox{\scriptsize $\dfrac{|V|^2}{\alpha-1}$} $\log$ \mbox{\scriptsize $\dfrac{|V|}{\alpha-1}$}), and \cite{lorenz2001simple} improves it to $O(|V||E|$ $(\log\log$\mbox{\scriptsize $\dfrac{w_{max}}{w_{min}}+\dfrac{1}{\alpha-1}))$}. \textit{Lagrange Relaxation} is applied by \cite{handler1980dual,juttner2001lagrange,carlyle2008lagrangian} to obtain the approximation result with $O(|E|\log^3|E|)$ iterations of path finding. As shown in \cite{kuipers2006comparison} and \cite{LOZANO2013378}, these approximate algorithms are even orders of magnitude slower than the \textit{k-Path}-based exact algorithms. \textit{CP-CSP} \cite{tsaggouris2009multiobjective} approximates the \textit{Skyline Dijkstra} by distributing the approximation power $\sqrt[n]{\alpha}$ to the edges. However, this approximation ratio would decrease to nearly 1 when the graph is big, so its performance is only slightly better than \textit{Sky-Dijk}.

%\subsubsection*{Exact Index-Based CSP}
\paragraph{\textbf{Exact Index-Based CSP}.}
There are many shortest path indexes, but only \textit{CH} \cite{geisberger2008contraction} has been extended to \textit{CSP} \cite{storandt2012route}, %Because the query constraint can be arbitrary, the \textit{CSP} index has to capture all the possible solutions. Hence, 
with its shortcuts being \textit{skyline paths}. Consequently, its construction and query answering are both \textit{Sky-Dijk}-based and suffers from long index construction, large index size, and slow query answering. Hence, \textit{CSP-CH} only keeps a limited number of skyline paths in each shortcut to reduce the construction time, but its query performance suffers.

%\subsubsection*{Approximate Index-Based CSP}
\paragraph{\textbf{Approximate Index-Based CSP}.}
The only approximate index method is \textit{COLA} \cite{wang2016effective}, which partitions the graph into regions and precomputes approximate skyline paths between regions. Similar to \textit{CP-CSP}, the approximate paths are also computed in the \textit{Sky-Dijk} fashion. In order to avoid the diminishing approximation power $\sqrt[n]{|V|}$, it views $\alpha$ as a budget and concentrates it on the important vertices. Similar pruning technique is also applied in the time-dependent path finding \cite{li2019Time}. In this way, the approximation power is released so it has faster construction time and smaller index size.

\subsection{Multi-Constraint Shortest Path}
Most of the existing \textit{MCSP} algorithms have very high complexity and cannot scale to real-life networks. For instance, \cite{jaffe1984algorithms} has a very high complexity of $O(|V|^5c_{max}$ $\log(|V|c_{max}))$. \cite{korkmaz2001multi} applies a heuristic-based approximation algorithm with no approximation bound, a backward \textit{Dijkstra}'s for feasibility check and a forward \textit{Dijkstra}'s to connect the results. However, it can only scale to small graph with 200 vertices. \cite{shi2017multi} takes seconds on the graphs with only 100 vertices. \cite{kriegel2010route,yang2021efficient} restricts the skyline path to the pre-defined ``preference function", so they cannot be utilized to answer \textit{MCSP} problem. \cite{DBLP:conf/adc/ZhangLZL21} compares and conducts experiments on the skyline-based and $k$-shortest path-based methods. 

\subsection{2-Hop Labeling}
\label{subsec:RelatedWork_2Hop}
The 2-hop labeling index \cite{cohen2003reachability} answers distance query without any search. \textit{PLL} \cite{akiba2013fast,zhang2021DWPSL} and \textit{PSL} \cite{li2019scaling} are designed for \textit{small-world} networks since several large degree vertices work as gateways to help reduce the search. \textit{Tree-Decomposition}-based methods \textit{H2H} \cite{ouyang2018hierarchy} and \textit{TEDI} \cite{wei2012tedi} are suitable for road networks because of the low average degree and the small tree width. \textit{CT-Index} \cite{li2020scaling} combines \textit{H2H} and \textit{PLL} to deal with the periphery and core of a graph separately.

\section{Conclusion}
\label{sec:Conclusion}
In this paper, we have proposed forest-based skyline path index structure to answer the \textit{MCSP} query efficiently. Moreover, we have proposed several skyline path concatenation methods to support index construction and query answering. The new path concatenation paradigm enables indexing and storing the complete exact skyline paths. A range of pruning techniques have been developed for the single pair concatenation, multiple hop concatenation, \textit{MCSP} query answering, and forest query answering. Extensive evaluations on real-life road networks have demonstrated that our new methods can outperform the state-of-the-art \textit{MCSP} approaches by several orders of magnitude, and our pruning techniques can provide a further 10x speedup.

%\section*{Acknowledgment}
%This work is partially supported by the Australian Research Council (grants DP200103650 and LP180100018).

\begin{acknowledgements}
This work is partially supported by the Australian Research Council (grants DP200103650 and LP180100018)..
\end{acknowledgements}

% Authors must disclose all relationships or interests that 
% could have direct or potential influence or impart bias on 
% the work: 
%
% \section*{Conflict of interest}
%
% The authors declare that they have no conflict of interest.
\bibliographystyle{abbrv}
\bibliography{document}

\begin{thebibliography}{10}

\bibitem{adler2014online}
J.~D. Adler and P.~B. Mirchandani.
\newblock Online routing and battery reservations for electric vehicles with
  swappable batteries.
\newblock {\em Transportation Research Part B: Methodological}, 70:285--302,
  2014.

\bibitem{akiba2013fast}
T.~Akiba, Y.~Iwata, and Y.~Yoshida.
\newblock Fast exact shortest-path distance queries on large networks by pruned
  landmark labeling.
\newblock In {\em SIGMOD}, pages 349--360. ACM, 2013.

\bibitem{carlyle2008lagrangian}
W.~M. Carlyle, J.~O. Royset, and R.~Kevin~Wood.
\newblock Lagrangian relaxation and enumeration for solving constrained
  shortest-path problems.
\newblock {\em Networks: An International Journal}, 52(4):256--270, 2008.

\bibitem{chang2012exact}
L.~Chang, J.~X. Yu, L.~Qin, H.~Cheng, and M.~Qiao.
\newblock The exact distance to destination in undirected world.
\newblock {\em VLDB Journal}, 21(6):869--888, 2012.

\bibitem{cohen2003reachability}
E.~Cohen, E.~Halperin, H.~Kaplan, and U.~Zwick.
\newblock Reachability and distance queries via 2-hop labels.
\newblock {\em SIAM Journal on Computing}, 32(5):1338--1355, 2003.

\bibitem{de1998multiple}
H.~De~Neve and P.~Van~Mieghem.
\newblock A multiple quality of service routing algorithm for pnni.
\newblock In {\em 1998 IEEE ATM Workshop Proceedings}, pages 324--328, 1998.

\bibitem{delling2011graph}
D.~Delling, A.~V. Goldberg, I.~Razenshteyn, and R.~F. Werneck.
\newblock Graph partitioning with natural cuts.
\newblock In {\em IPDPS}, pages 1135--1146, 2011.

\bibitem{dijkstra1959note}
E.~W. Dijkstra.
\newblock A note on two problems in connexion with graphs.
\newblock {\em Numerische mathematik}, 1(1):269--271, 1959.

\bibitem{gao2010fast}
J.~Gao, H.~Qiu, X.~Jiang, T.~Wang, and D.~Yang.
\newblock Fast top-k simple shortest paths discovery in graphs.
\newblock In {\em CIKM}, pages 509--518, 2010.

\bibitem{garey1979computers}
M.~R. Garey and D.~S. Johnson.
\newblock {\em Computers and intractability}, volume 174.
\newblock freeman San Francisco, 1979.

\bibitem{geisberger2008contraction}
R.~Geisberger, P.~Sanders, D.~Schultes, and D.~Delling.
\newblock Contraction hierarchies: Faster and simpler hierarchical routing in
  road networks.
\newblock In {\em WEA}, pages 319--333. Springer, 2008.

\bibitem{gong2019skyline}
Q.~Gong, H.~Cao, and P.~Nagarkar.
\newblock Skyline queries constrained by multi-cost transportation networks.
\newblock In {\em ICDE}, pages 926--937. IEEE, 2019.

\bibitem{handler1980dual}
G.~Y. Handler and I.~Zang.
\newblock A dual algorithm for the constrained shortest path problem.
\newblock {\em Networks}, 10(4):293--309, 1980.

\bibitem{hansen1980bicriterion}
P.~Hansen.
\newblock Bicriterion path problems.
\newblock In {\em Multiple criteria decision making theory and application},
  pages 109--127. Springer, 1980.

\bibitem{hassin1992approximation}
R.~Hassin.
\newblock Approximation schemes for the restricted shortest path problem.
\newblock {\em Mathematics of Operations research}, 17(1):36--42, 1992.

\bibitem{jaffe1984algorithms}
J.~M. Jaffe.
\newblock Algorithms for finding paths with multiple constraints.
\newblock {\em Networks}, 14(1):95--116, 1984.

\bibitem{joksch1966shortest}
H.~C. Joksch.
\newblock The shortest route problem with constraints.
\newblock {\em Journal of Mathematical analysis and applications},
  14(2):191--197, 1966.

\bibitem{jozefowiez2008multi}
N.~Jozefowiez, F.~Semet, and E.-G. Talbi.
\newblock Multi-objective vehicle routing problems.
\newblock {\em European journal of operational research}, 189(2):293--309,
  2008.

\bibitem{juttner2001lagrange}
A.~Juttner, B.~Szviatovski, I.~M{\'e}cs, and Z.~Rajk{\'o}.
\newblock Lagrange relaxation based method for the qos routing problem.
\newblock In {\em INFOCOM}, volume~2, pages 859--868. IEEE, 2001.

\bibitem{korkmaz2001multi}
T.~Korkmaz and M.~Krunz.
\newblock Multi-constrained optimal path selection.
\newblock In {\em INFOCOM}, volume~2, pages 834--843, 2001.

\bibitem{kriegel2007proximity}
H.-P. Kriegel, P.~Kr{\"o}ger, P.~Kunath, M.~Renz, and T.~Schmidt.
\newblock Proximity queries in large traffic networks.
\newblock In {\em ACM GIS}, pages 1--8, 2007.

\bibitem{kriegel2010route}
H.-P. Kriegel, M.~Renz, and M.~Schubert.
\newblock Route skyline queries: A multi-preference path planning approach.
\newblock In {\em ICDE 2010}, pages 261--272. IEEE, 2010.

\bibitem{kuipers2006comparison}
F.~Kuipers, A.~Orda, D.~Raz, and P.~Van~Mieghem.
\newblock A comparison of exact and $\varepsilon$-approximation algorithms for
  constrained routing.
\newblock In {\em NETWORKING}, pages 197--208. Springer, 2006.

\bibitem{li2017minimal}
L.~Li, W.~Hua, X.~Du, and X.~Zhou.
\newblock Minimal on-road time route scheduling on time-dependent graphs.
\newblock {\em PVLDB}, 10(11):1274--1285, 2017.

\bibitem{li2019Time}
L.~{Li}, S.~{Wang}, and X.~{Zhou}.
\newblock Time-dependent hop labeling on road network.
\newblock In {\em ICDE}, pages 902--913, April 2019.

\bibitem{li2020fastest}
L.~Li, S.~Wang, and X.~Zhou.
\newblock Fastest path query answering using time-dependent hop-labeling in
  road network.
\newblock {\em TKDE}, 2020.

\bibitem{li2020fast}
L.~Li, M.~Zhang, W.~Hua, and X.~Zhou.
\newblock Fast query decomposition for batch shortest path processing in road
  networks.
\newblock In {\em ICDE}, 2020.

\bibitem{li2018go}
L.~Li, K.~Zheng, S.~Wang, W.~Hua, and X.~Zhou.
\newblock Go slow to go fast: minimal on-road time route scheduling with
  parking facilities using historical trajectory.
\newblock {\em The VLDB Journal}, 27(3):321--345, 2018.

\bibitem{li2019scaling}
W.~Li, M.~Qiao, L.~Qin, Y.~Zhang, L.~Chang, and X.~Lin.
\newblock Scaling distance labeling on small-world networks.
\newblock In {\em SIGMOD}, pages 1060--1077, 2019.

\bibitem{li2020scaling}
W.~Li, M.~Qiao, L.~Qin, Y.~Zhang, L.~Chang, and X.~Lin.
\newblock Scaling up distance labeling on graphs with core-periphery
  properties.
\newblock In {\em SIGMOD}, pages 1367--1381, 2020.

\bibitem{Liu2021Forest}
Z.~Liu, L.~Li, M.~Zhang, W.~Hua, P.~Chao, and X.~Zhou.
\newblock Efficient constrained shortest path query answering with forest hop
  labeling.
\newblock In {\em ICDE}. IEEE, 2021.

\bibitem{lorenz2001simple}
D.~H. Lorenz and D.~Raz.
\newblock A simple efficient approximation scheme for the restricted shortest
  path problem.
\newblock {\em Operations Research Letters}, 28(5):213--219, 2001.

\bibitem{LOZANO2013378}
L.~Lozano and A.~L. Medaglia.
\newblock On an exact method for the constrained shortest path problem.
\newblock {\em Computers Operations Research}, 40(1):378 -- 384, 2013.

\bibitem{meyer2003delta}
U.~Meyer and P.~Sanders.
\newblock $\delta$-stepping: a parallelizable shortest path algorithm.
\newblock {\em Journal of Algorithms}, 49(1):114--152, 2003.

\bibitem{ouyang2018hierarchy}
D.~Ouyang, L.~Qin, L.~Chang, X.~Lin, Y.~Zhang, and Q.~Zhu.
\newblock When hierarchy meets 2-hop-labeling: Efficient shortest distance
  queries on road networks.
\newblock In {\em SIGMOD}, pages 709--724, 2018.

\bibitem{ouyang2018towards}
D.~Ouyang, L.~Yuan, F.~Zhang, L.~Qin, and X.~Lin.
\newblock Towards efficient path skyline computation in bicriteria networks.
\newblock In {\em DASFAA}, pages 239--254. Springer, 2018.

\bibitem{papadias2003optimal}
D.~Papadias, Y.~Tao, G.~Fu, and B.~Seeger.
\newblock An optimal and progressive algorithm for skyline queries.
\newblock In {\em SIGMOD}, pages 467--478, 2003.

\bibitem{robertson1986graph}
N.~Robertson and P.~D. Seymour.
\newblock Graph minors. ii. algorithmic aspects of tree-width.
\newblock {\em Journal of algorithms}, 7(3):309--322, 1986.

\bibitem{SEDENONODA2015602}
A.~Sedeño-Noda and S.~Alonso-Rodríguez.
\newblock An enhanced k-sp algorithm with pruning strategies to solve the
  constrained shortest path problem.
\newblock {\em Applied Mathematics and Computation}, 265:602 -- 618, 2015.

\bibitem{shi2017multi}
N.~Shi, S.~Zhou, F.~Wang, Y.~Tao, and L.~Liu.
\newblock The multi-criteria constrained shortest path problem.
\newblock {\em Transportation Research Part E: Logistics and Transportation
  Review}, 101:13--29, 2017.

\bibitem{storandt2012route}
S.~Storandt.
\newblock Route planning for bicycles—exact constrained shortest paths made
  practical via contraction hierarchy.
\newblock In {\em ICAPS}, 2012.

\bibitem{tsaggouris2009multiobjective}
G.~Tsaggouris and C.~Zaroliagis.
\newblock Multiobjective optimization: Improved fptas for shortest paths and
  non-linear objectives with applications.
\newblock {\em Theory of Computing Systems}, 45(1):162--186, 2009.

\bibitem{van2003complexity}
P.~Van~Mieghem and F.~A. Kuipers.
\newblock On the complexity of qos routing.
\newblock {\em Computer communications}, 26(4):376--387, 2003.

\bibitem{wang2015efficient}
S.~Wang, W.~Lin, Y.~Yang, X.~Xiao, and S.~Zhou.
\newblock Efficient route planning on public transportation networks: A
  labelling approach.
\newblock In {\em SIGMOD}, pages 967--982, 2015.

\bibitem{wang2016effective}
S.~Wang, X.~Xiao, Y.~Yang, and W.~Lin.
\newblock Effective indexing for approximate constrained shortest path queries
  on large road networks.
\newblock {\em PVLDB}, 10(2):61--72, 2016.

\bibitem{wei2012tedi}
F.~Wei.
\newblock Tedi: efficient shortest path query answering on graphs.
\newblock In {\em Graph Data Management: Techniques and Applications}, pages
  214--238. IGI Global, 2012.

\bibitem{wu2014path}
H.~Wu, J.~Cheng, S.~Huang, Y.~Ke, Y.~Lu, and Y.~Xu.
\newblock Path problems in temporal graphs.
\newblock {\em PVLDB}, 7(9):721--732, 2014.

\bibitem{yang2021efficient}
Y.~Yang, H.~Zhang, H.~Gao, and X.~Wang.
\newblock An efficient index method for the optimal path query over multi-cost
  networks.
\newblock {\em World Wide Web}, pages 1--23, 2021.

\bibitem{yen1971finding}
J.~Y. Yen.
\newblock Finding the k shortest loopless paths in a network.
\newblock {\em management Science}, 17(11):712--716, 1971.

\bibitem{zhang2021DH2H}
M.~Zhang, L.~Li, W.~Hua, R.~Mao, P.~Chao, and X.~Zhou.
\newblock Dynamic hub labeling for road networks.
\newblock In {\em ICDE}. IEEE, 2021.

\bibitem{zhang2020stream}
M.~Zhang, L.~Li, W.~Hua, and X.~Zhou.
\newblock Stream processing of shortest path query in dynamic road networks.
\newblock {\em TKDE}, 2020.

\bibitem{zhang2021DWPSL}
M.~Zhang, L.~Li, W.~Hua, and X.~Zhou.
\newblock Efficient 2-hop labeling maintenance in dynamic small-world networks.
\newblock In {\em ICDE}. IEEE, 2021.

\bibitem{DBLP:conf/adc/ZhangLZL21}
X.~Zhang, Z.~Liu, M.~Zhang, and L.~Li.
\newblock An experimental study on exact multi-constraint shortest path
  finding.
\newblock In {\em ADC 2021, Dunedin, New Zealand}, pages 166--179. Springer,
  2021.

\end{thebibliography}

\balance
\end{document}